\documentclass[reqno,11pt]{amsart}

\IfFileExists{mymtpro2.sty}{%
  \usepackage[subscriptcorrection]{mymtpro2}
}{}

\usepackage{a4,amssymb}

%

\marginparwidth1.5cm
\marginparsep.5cm


%


\newtheorem{theorem}{Theorem}[section]
\newtheorem{lemma}{Lemma}[section]

\newtheorem{prop}{Proposition}[section]
\theoremstyle{definition}



\newcommand{\labelnummer}{\mbox{\normalfont (\roman{numcount})}}%

\makeatletter

  {\let\curlabelspeicher\@currentlabel%
    \begin{list}{\labelnummer}%
      {\usecounter{numcount}\leftmargin0pt%
        \topsep0.5ex\partopsep2ex\parsep0pt\itemsep0ex\@plus1\p@%
        \labelwidth2.5em\itemindent3.5em\labelsep1em%
      }%
    \let\saveitem\item%
    \def\item{\saveitem%
      \def\@currentlabel{{\upshape\curlabelspeicher}$\,$\labelnummer}}%
    \let\savelabel\label%
    \def\label##1{\savelabel{##1}%
      \@bsphack%
        \ifmmode\else%
          \protected@write\@auxout{}%
          {\string\newlabel{##1item}{{\labelnummer}{\thepage}}}%
        \fi%
      \@esphack%
    }%
  }{\end{list}}%

\renewcommand{\appendix}{\def\thesection{\textsc{Appendix}}}


 \let\leq\le
 \let\geq\ge

\let\Re\undefined \let\Im\undefined
\DeclareMathOperator{\Re}{Re}
\DeclareMathOperator{\Im}{Im}

\DeclareMathOperator{\tr}{tr\kern1pt}

\newcommand\EE{\mathbb E}

%

\makeatletter

\newif\ifper\pertrue
\def\per{.}

\def\bti{\@ifnextchar[\bbti\bbbti}
\def\bbti[#1]#2{#2, #1.}
\def\bbbti#1{#1.}

\def\z{\@ifnextchar[\zz\zzz}
\def\zz[#1]#2#3#4#5{\perfalse\emph{#2} \textbf{#3}, #4 (#5) [#1]}
\def\zzz#1#2#3#4{\emph{#1} \textbf{#2}, #3 (#4)\ifper\per\fi\pertrue}

\def\pub{\@ifstar\pubstar\pubnostar}
\def\pubnostar{\@ifnextchar[\@@pubnostar\@pubnostar}
\def\@@pubnostar[#1]#2#3#4{#2, #3, #4, #1\ifper\per\fi\pertrue}
\def\@pubnostar#1#2#3{#1, #2, #3\ifper\per\fi\pertrue}
\def\pubstar[#1]#2#3#4{\perfalse #2, #3, #4 [#1]\pertrue}

\makeatother

 \sloppy

\topmargin -1cm
\textheight21.4cm
\textwidth15.7cm
\oddsidemargin 0.5cm
\evensidemargin 0.5cm
\parindent1.0cm


\newcommand{\bel}{\begin{equation} \label}
\newcommand{\ee}{\end{equation}}

\def\beq{\begin{equation}}
\def\eeq{\end{equation}}
\newcommand{\bea}{\begin{eqnarray}}
\newcommand{\eea}{\end{eqnarray}}
\newcommand{\beas}{\begin{eqnarray*}}
\newcommand{\eeas}{\end{eqnarray*}}

{

\newcommand{\Pp}{\mathbb{P}}
\newcommand{\R}{\mathbb{R}}

\newcommand{\Z}{\mathbb{Z}}

\newcommand{\N}{\mathbb{N}}

\newcommand{\C}{\mathbb{C}}
\newcommand{\E}{\mathbb{E}}
\newcommand{\mr}{}

\begin{document}

\title[Eigenvalue statistics for random point interactions]{Eigenvalue statistics for Schr\"odinger operators with random point interactions on $\R^d$, $d=1,2,3$}

\author[P.\ D.\ Hislop]{Peter D.\ Hislop}
\address{Department of Mathematics,
    University of Kentucky,
    Lexington, Kentucky  40506-0027, USA}
\email{peter.hislop@uky.edu}

\author[W.\ Kirsch]{Werner Kirsch}
\address{Fakult\"at f\"ur Mathematik und Informatik,
FernUniversit\"at in Hagen,
Universit\"atsstr.\ 1,
58084 Hagen, Germany}
\email{werner.kirsch@fernuni-hagen.de}

\author[M.\ Krishna]{M.\ Krishna}
\address{Ashoka University,
Haryana, India}
\email{krishna.maddaly@ashoka.edu.in}


\begin{abstract}
We prove that the local eigenvalue statistics at energy $E$ in the localization regime for Schr\"odinger operators with random point interactions on $\R^d$, for $d=1,2,3$, is a Poisson point process with the intensity measure given by the density of states at $E$ times the Lebesgue measure. This is one of the first examples of Poisson eigenvalue statistics for the localization regime of multi-dimensional random Schr\"odinger operators in the continuum. The special structure of resolvent of Schr\"odinger operators with point interactions facilitates the proof of the Minami estimate for these models.
\end{abstract}

\maketitle \thispagestyle{empty}

\begin{center}
{\it Dedicated to the memory of Alexandre Grossmann}
\end{center}

\tableofcontents

\vspace{.2in}

{\bf  AMS 2010 Mathematics Subject Classification:} 35J10, 81Q10,
35P20\\
{\bf  Keywords:} eigenvalue statistics,
Schr\"odinger operators, random point interactions, Poisson statistics \\


\section{Introduction: Random point interactions}\label{sec:intro1}
\setcounter{equation}{0}

Schr\"odinger operators with point interactions are useful models for many quantum phenomena. In this article, we continue our study of random Schr\"odinger operators with potentials formed from delta interactions at lattice points of $\Z^d$ and random coupling constants.
We study the formal Hamiltonian
\beq\label{eq:schr-op1}
H_\omega = - \Delta + \sum_{j \in \Z^d} \omega_j \delta (x - j),
\eeq
on $L^2 (\R^d)$, for $d = 1,2,3$. The coupling constants $\{ \omega_j ~|~ j \in \Z^d \}$ form a family of independent, identically distributed (iid) random variables with an absolutely continuous probability measure having a density $h_0 \in L_0^\infty (\R)$. Furthermore,
we assume the support of $h_0$ is the interval $[-b, -a]$ for some finite constants $0 < a < b < \infty$.
We refer to \cite{aghkh} for a detailed discussion of the construction of these operators via the Green's function formulae given in \eqref{eq:finite-green1} and \eqref{eq:kernel1}.

The family of random Schr\"odinger operators \eqref{eq:schr-op1} are covariant with respect to lattice translations so there exists a closed subset $\Sigma \subset \R$ so that the spectrum $\sigma (H_\omega) = \Sigma$ almost surely. Furthermore, under the conditions on the support of $h_0$, $\Sigma \cap \R^-$ is nonempty. We denote by $\Sigma_{pp}$ the almost sure pure point component of $\Sigma$.

In \cite{hkk1}, we proved that the family of random Schr\"odinger operators $H_\omega$ exhibits Anderson and dynamical localization at negative energies. We proved that there exists a finite energy $\tilde{E_0} < 0$ so that $\Sigma_{\rm pp} \cap (- \infty ,  \tilde{E_0} ]$ is almost surely nonempty.
Furthermore, for any $\phi \in L_0^2 (\R^d)$, any integer $q \in \N$, and any interval $I \subset (-\infty, \tilde{E_0}]$, with probability one we have
\beq\label{eq:dyn-loc1}
\sup_{t >0} \| \|x\|^{q/2} e^{-i t H_\omega} E_\omega (I) \phi \|_{\rm HS} < \infty ,
\eeq
where $\| A \|_{\rm HS}$ denotes the Hilbert-Schmidt norm of $A$. Bound \eqref{eq:dyn-loc1} is a hallmark of dynamical localization. We call the closure of the set of all such energies in $\Sigma_{\rm pp}$ for which \eqref{eq:dyn-loc1} holds the regime of complete localization denoted by $\Sigma^{\rm CL}$.

In this paper, we prove that the local eigenvalue statistics (LES) at any energy $E_0 < \tilde{E}_0$ in the localization regime is a Poisson point process with intensity measure $n(E_0)~ds$, where $n$ is the density of states (DOS). The existence of the DOS was proved in \cite[Corollary 7.3]{hkk1} as it follows from the Wegner estimate (see section \ref{subsec:wegner1}).

The main technical tool in \cite{hkk1} was the definition of finite-volume approximations $H_\omega^{L}$ to $H_\omega$ obtained by truncating \emph{only} the
potential to cubes $\Lambda_L$ of side width $L > 0$ and retaining the full Laplacian on $\R^d$.
These operators, acting on $L^2 ( \R^d)$, have the advantage that the resolvent of the Laplacian has an explicit kernel and that the resolvent is analytic on the negative real axis. These local operators are most easily described in terms of Green's functions. For $z \in \C \backslash [0, \infty)$, we let $G_0 (x,y;z)$ denote the Green's function of the Laplacian $H_0 := - \Delta$ on $\R^d$ corresponding to the resolvent $(H_0 - z)^{-1}$. Then, the Green's function $G_\omega^{L}(x,y;z)$, corresponding formally to the resolvent $(H_\omega^{L} - z)^{-1}$, is defined by
\beq\label{eq:finite-green1}
G_\omega^{L}(x,y;z) := G_0(x,y;z) + \sum_{i,j \in \tilde{\Lambda}_L} G_0(x,i;z) [K_{L} (z, \omega)^{-1}]_{ij} G_0(j,y;z) ,
\eeq
where $\tilde{\Lambda}_L := \Lambda_L \cap \Z^d$.
The kernel $K_{L}(z, \omega)$ is the random matrix on $\ell^2 ( \tilde{\Lambda}_L)$ given by
\beq\label{eq:kernel1}
[K_L (z, \omega)]_{ij} = \left( \frac{1}{\omega_{j,d}} - e_d(z) \right) \delta_{ij} - G_0(i,j; z)(1 - \delta_{ij} ) ,
\eeq
where the energy functions $e_d(z)$ and effective coupling constants $\omega_{j,d}$ are defined as in \cite[(4.7)-(4.9)]{hkk1}, and \cite[(4.10)-(4.11)]{hkk1}, respectively.

The family of operators $H_\omega^L$ constructed with the random potential restricted to the cube $\Lambda_L$ defined above suffices for the proof of localization at negative energies.
Local eigenvalue statistics (LES), however, requires more detailed information on the negative eigenvalues of local Hamiltonians. In section \ref{sec:bc1}, we describe local Hamiltonians $H_\omega^{\Lambda_L}$ on cubes in $\Lambda_L \subset \R^d$ obtained by imposing Dirichlet boundary conditions on the boundary $\partial \Lambda_L$. This theory was developed by Blanchard, Figari, and Mantile \cite{blanchard}, and by Pankrashkin \cite{pankrashkin}.
The LES associated with $H_\omega^{\Lambda_L}$ are defined as follows. Let $E_j^{\Lambda_L}(\omega)$ denote the eigenvalues of $H_\omega^{\Lambda_L}$. For $E_0 \in (-\infty, \tilde{E_0}] \cap \Sigma^{\rm CL}$, we define the point process $\xi_\omega^L$ on $\R$ by \beq\label{eq:les1}
\xi_\omega^L(ds) := \sum_{j} \delta (|\Lambda_L|( E_j^L(\omega) - E_0) - s) ~ds .
\eeq
These point processes and their limit points for $L \rightarrow \infty$ were studied by Minami \cite{min1} for lattice models on $\Z^d$ and at energies $E_0$ in the localization regime. Minami's work was partially inspired by earlier work of Molchanov \cite{molchanov} who proved Poisson eigenvalue statistics for a one-dimensional model on $\R$. The results for lattice models were later elaborated upon and extended by Germinet and Klopp \cite{germinet-klopp}.

Localization plays a key role in these proofs. In order to describe this,
we need to introduce a second family of point processes based on a second scale $\ell := L^\alpha$, for $0 < \alpha < 1$ so that $\ell \ll L$.
We divide $\Lambda_L$ into subcubes $\Lambda_\ell^p$, and consider the family of local random Schr\"odinger operators $H_\omega^{\ell,p}$, obtained by restrictions to the smaller cubes $\Lambda_\ell^p$, with eigenvalues $E_j^{\ell,p}(\omega)$. For each $p$, we define a local point process $\eta^{\ell,p}_\omega$ associated with the eigenvalues $E_j^{\ell,p}(\omega)$ as in \eqref{eq:les1}, that is
\beq\label{eq:les9}
\eta_\omega^{\ell,p}(ds) := \sum_{j} \delta (|\Lambda_L|( E_j^{\ell,p}(\omega) - E_0) - s) ~ds\, .\eeq
 For $p \neq q$, the two processes $\eta^{\ell,p}_\omega$ and $\eta^{\ell,q}_\omega$ are independent.
Localization estimates are used to prove that the point process $\xi_\omega^L$, constructed from the eigenvalues of $H_\omega^L$ as in \eqref{eq:les1}, and the point process $\zeta_\omega^L$, constructed from the array $\eta_\omega^{\ell, p}$ of independent process, each formed from the eigenvalues of the operators $H_\omega^{\ell,p}$, have the same $L \rightarrow \infty$ limit. The analysis of the limiting point process associated with the array $\eta_\omega^{\ell, p}$ depends on a new Minami estimate. A Minami estimate is possible for this model on $\R^d$ because of the special structure of the point interactions. 

In order to formulate the main result of this paper, we note that the family of random Schr\"odinger operators $H_\omega$ with point interactions,  formally described in \eqref{eq:schr-op1} on $\R^d$, for $d=1,2,3$, is defined as the self-adjoint operator associated with the resolvent expression \eqref{eq:finite-green1} and \eqref{eq:kernel1} with $\tilde{\Lambda}_L$ replaced by $\Z^d$. This corresponds to the choice of a self-adjoint extension of $H_0 = - \Delta$ on $C_0^\infty ( \R^d \backslash \Z^d)$. We refer to the book \cite{aghkh} for the details of this construction. The local operators $H^L_\omega$ and $H^{\ell,p}_\omega$ are constructed in section \ref{sec:bc1} following \cite{blanchard, pankrashkin}. 
We also need a regime of localization of the almost sure spectrum $\Sigma$. The existence of this region at negative energies was proved in \cite{hkk1} under the following assumption on the random variables:

\begin{description}
\item [{[H1]}] 
The coupling constants $\{ \omega_j ~|~ j \in \Z^d \}$ form a family of independent, identically distributed (iid) random variables with an absolutely continuous probability measure having a density $h_0 \in L_0^\infty (\R)$. Furthermore,
we assume the support of $h_0$ is the interval $[-b, -a]$ for some finite constants $0 < a < b < \infty$.
\end{description}

\begin{theorem}\label{thm:main1}
Let $H_\omega$ be the family of random Schr\"odinger operators with point interactions formally defined in \eqref{eq:schr-op1} on $\R^d$, for $d=1,2,3$. In addition, suppose that the random coupling constants $\{ \omega_j \}_{\Z^d}$ satisfy hypothesis [H1]. 
Then, for any fixed energy $E_0 \in \Sigma_{\rm pp} \cap (-\infty, \tilde{E_0}] \subset \Sigma^{\rm CL}$ for which the density of states is  positive: $n(E_0) > 0$, the local eigenvalue statistics $\xi_\omega^L$ in \eqref{eq:les1} converges weakly to a Poisson point process with intensity measure $n(E_0) ds$.
\end{theorem}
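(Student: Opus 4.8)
The plan is to follow the route to Poisson statistics in the localization regime due to Minami \cite{min1}, as systematized by Germinet and Klopp \cite{germinet-klopp}, the only genuinely new input being a Minami estimate adapted to point interactions.
\textbf{Step 1 (decoupling).} Using the dynamical localization bound \eqref{eq:dyn-loc1} together with the Wegner estimate, one shows that the process $\xi_\omega^L$ of \eqref{eq:les1} and the superposition $\zeta_\omega^L$ of the independent processes $\eta_\omega^{\ell,p}$ of \eqref{eq:les9} over the $(L/\ell)^d$ subcubes $\Lambda_\ell^p$, with $\ell = L^\alpha$ and $0<\alpha<1$, have the same weak limit as $L\to\infty$. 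Eigenfunctions of $H_\omega^{\Lambda_L}$ with eigenvalue near $E_0$ are exponentially concentrated, so each such eigenvalue is, up to an error tending to $0$ in probability, shadowed by exactly one eigenvalue of exactly one $H_\omega^{\ell,p}$, and conversely; the contribution of the subcubes meeting $\partial\Lambda_L$ is negligible because the Wegner estimate controls the expected number of eigenvalues they can carry in the relevant energy window. One has to carry this out for the Dirichlet restrictions $H_\omega^{\Lambda_L}$, $H_\omega^{\ell,p}$ of \cite{blanchard, pankrashkin}, for which a Krein-type resolvent formula analogous to \eqref{eq:finite-green1}--\eqref{eq:kernel1} holds with $G_0$ replaced by the Dirichlet Green's function; the geometric resolvent identities needed for the decoupling go through with these kernels.

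\textbf{Step 2 (Poisson criterion).} Since $\zeta_\omega^L$ is a superposition of independent processes, it converges to a Poisson process with intensity $n(E_0)\,ds$ once one verifies, for every bounded interval $I=[a,b]$,
\begin{align*}
&\text{(i)}\quad \max_p \Pp\big[\eta_\omega^{\ell,p}(I)\ge 1\big]\longrightarrow 0,\qquad
\text{(ii)}\quad \sum_p \Pp\big[\eta_\omega^{\ell,p}(I)\ge 2\big]\longrightarrow 0,\\
&\text{(iii)}\quad \sum_p \Pp\big[\eta_\omega^{\ell,p}(I)\ge 1\big]\longrightarrow n(E_0)\,(b-a).
\end{align*}
In the rescaled variables of \eqref{eq:les9} a window $I$ corresponds to an energy interval of length $(b-a)/|\Lambda_L|$, so by the Wegner estimate and the existence and continuity of the density of states at $E_0$ proved in \cite[Corollary 7.3]{hkk1} one has $\EE[\eta_\omega^{\ell,p}(I)] = n(E_0)(b-a)(\ell/L)^d(1+o(1))$; summing over the $(L/\ell)^d$ subcubes gives (iii), while (i) is then immediate since each term is $O((\ell/L)^d)$. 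The substance of (ii) is a Minami estimate at scale $\ell$.

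\textbf{Step 3 (Minami estimate), the main obstacle.} Here I would exploit the special structure stressed in the introduction: a negative $E$ is an eigenvalue of $H_\omega^{\ell,p}$ precisely when the random matrix $K_{\ell,p}(E,\omega)$ — the Dirichlet analogue of \eqref{eq:kernel1}, acting on $\ell^2(\Lambda_\ell^p\cap\Z^d)$ — is singular, and the randomness enters only through the diagonal entries $\omega_{j,d}^{-1}-e_d(E)$, with each $\omega_{j,d}^{-1}$ a smooth strictly monotone function of the single variable $\omega_j$, hence again a random variable with a bounded density supported in a fixed compact interval. Thus $\tr E_{H_\omega^{\ell,p}}(I)$ is governed by the zeros in $I$ of $\det K_{\ell,p}(\cdot,\omega)$, and the event $\{\tr E_{H_\omega^{\ell,p}}(I)\ge 2\}$ forces $K_{\ell,p}(\cdot,\omega)$ to be singular at two points of $I$, or singular with a two-dimensional kernel. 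Running Minami's determinantal spectral-averaging argument with respect to two of the diagonal variables — or its rank-one reformulation à la Combes--Germinet--Klein — and using the monotonicity and the $L^\infty$ bound on the density, should yield
\[
\Pp\big[\tr E_{H_\omega^{\ell,p}}(I)\ge 2\big]\;\le\; C\,\big(|I|\,|\Lambda_\ell|\big)^2 ,
\]
and summing over the $(L/\ell)^d$ subcubes with $|I|\asymp (b-a)/|\Lambda_L|$ gives $O((\ell/L)^d)\to0$, which is (ii). The real difficulty is that the dependence of $K_{\ell,p}(E,\omega)$ on the spectral parameter $E$ — through $e_d(E)$ and, for the Dirichlet operators, through the Dirichlet Green's function — is not the elementary affine dependence available for the lattice Anderson model, so the passage from ``two small eigenvalues of $K_{\ell,p}$'' to a spectral-averaging bound has to be done with care, together with control of $e_d(E)$ near the bottom of the negative spectrum and of the invertibility of the off-diagonal block of $K_{\ell,p}$. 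Granted this estimate, Steps 1 and 2 are routine consequences of \eqref{eq:dyn-loc1}, the Wegner estimate, and the DOS, and the theorem follows.
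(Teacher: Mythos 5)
Your high-level plan (decouple $\xi_\omega^L$ into an independent array over subcubes, verify Minami's three Poisson criteria, and supply a new Minami estimate exploiting the point-interaction structure) matches the paper's architecture. However, at two of the three steps you have real gaps, and at the third you take a genuinely different route.

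\smallskip
\textbf{Minami estimate.}  You correctly identify that the nonlinear dependence of $K_{\ell,p}(E,\omega)$ on $E$ (through $e_d(E)$ and the Dirichlet Green's function) is what breaks a naive imitation of Minami's determinantal two-variable spectral averaging, but you do not resolve it. The paper sidesteps the issue entirely: varying a single coupling $\omega_j$ produces a rank-one perturbation not merely of $K_L(z,\omega)^{-1}$ but of the full resolvent $R_\omega^L(z)$ (this follows by pushing the rank-one change in $K^{-1}$ through the Krein formula). Fixing a real $z \ll \inf\Sigma$, $R_\omega^L(z)$ is a bounded self-adjoint operator and $\tr E_{H_\omega^L}(I) = \tr E_{R_\omega^L(z)}(I_z)$ for a transformed interval $I_z$; a self-contained result (Proposition \ref{prop:rank-one1}) shows a rank-one self-adjoint perturbation changes this count by at most one. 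One then applies the doubling-of-variables argument of \cite{cgk1}: introduce an independent copy $\tau_j$, use $X_{\omega_j}(I)(X_{\omega_j}(I)-1) \le X_{\omega_j}(I)\,X_{\tau_j}(I)$, and spectral-average over $\omega_j$ and $\tau_j$ separately. After this reformulation the $E$-dependence of $K$ never appears, and no two-variable determinantal argument is needed. This is the single key idea missing from your proposal; without it your Step 3 is not a proof.

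\smallskip
\textbf{Intensity (your condition (iii)).}  You write $\E[\eta_\omega^{\ell,p}(I)] = n(E_0)(b-a)(\ell/L)^d(1+o(1))$ ``by the Wegner estimate and the existence and continuity of the density of states at $E_0$.'' But the paper only establishes $n\in L^1_{\rm loc}$ — not continuity — so the argument must be run at a Lebesgue point, and more importantly the passage from the finite-volume counting function to the infinite-volume DOS is \emph{not} immediate from Wegner. The paper introduces an auxiliary measure $\Theta_\omega^\Lambda(B) = \tr\{\chi_\Lambda E_\omega(E_0 + B/|\Lambda|)\chi_\Lambda\}$, whose expectation is exactly $|\Lambda|\,\nu(E_0 + B/|\Lambda|)$, and then has to prove a nontrivial technical proposition (Proposition \ref{prop:theta-approx1}) — using the full Green's-function machinery — that $\E\{\xi_\omega^L[f] - \Theta_\omega^L[f]\}\to 0$. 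Your proposal silently assumes this comparison.

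\smallskip
\textbf{Decoupling.}  Here you take a legitimately different route: you appeal to dynamical localization \eqref{eq:dyn-loc1} and an eigenfunction-shadowing picture, whereas the paper works at the level of Laplace functionals, reduces to the rational test-function class $\mathcal{A}$, substitutes the Krein resolvent formula, and controls the difference term by term using fractional-moment bounds on $K^{-1}$ (Proposition \ref{prop:loc1}) together with exponential decay of the free Dirichlet Green's functions. Both strategies can be made to work, but the paper's Green's-function approach is the one that meshes cleanly with the point-interaction resolvent structure; an eigenfunction-based argument would require re-deriving exponential localization of eigenfunctions for the Dirichlet operators $H_\omega^{\Lambda_L}$, which is not directly what \eqref{eq:dyn-loc1} (stated for the infinite-volume operator) gives you.
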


This is one of the first results on the nature of the limiting eigenvalue point process of the LES for random Schr\"odinger operators on $\R^d$, for $d \geq 2$. In \cite{hk1}, two of the authors proved that the limit points of the local processes $\xi_\omega^L$ are compound Poisson point processes. In the absence of a Minami-type estimate, this is the best possible result. The special nature of the point interactions makes a Minami-type estimate possible for the models studied in this paper. As a result, we are able to establish that the LES are Poisson.
During the completion of this paper, Dietlein and Elgart \cite{DietleinElgart2017} proved a Minami estimate for random Schr\"odinger operators on $\R^d$ with Anderson-type random potentials at energies near the bottom of the deterministic spectrum. They used this estimate to prove Poisson statistics in that energy regime. It is not clear if their results can be adapted in order to treat the random delta potential models considered here.


\subsection{Contents of the paper}\label{subsec:contents}

In section \ref{sec:bc1}, we review the formulation of Schr\"odinger operators with point interactions $H^L_\omega$ on bounded domains in terms of Green's functions. Finite-volume estimates for the random Schr\"odinger operators on bounded domains are proved in section \ref{sec:w-m-est1}, including the Minami estimate. A new proof of the effect of rank one perturbations is presented in section \ref{sec:trace-est1} that may be of independent interest. The properties of certain arrays of independent random variables are presented in section \ref{sec:les-indep-arrays1}.
A main technical result of this paper is that the process $\zeta_\omega^L$, constructed from the array $\eta_\omega^{\ell, p}$ of independent process, has the same limit as the LES $\xi_\omega^L$. This is proved in section \ref{sec:approx-pt-proc1}. The analysis of $\xi_\omega^L$ and a proof of Theorem \ref{thm:main1} is presented in section \ref{sec:sp-stat1}. All of the explicit claculations in sections \ref{sec:les-indep-arrays1} and \ref{sec:approx-pt-proc1} are presented for the case $d=3$. Comments on the case $d=1$ and $d=2$ are presented in section \ref{sec:dimensions12}.


\section{Local operators for point interactions with boundary conditions}\label{sec:bc1}
\setcounter{equation}{0}

Although the calculations with $H_\omega = - \Delta + V_\omega^L$ on $L^2 (\R^d)$
suffice to prove localization at negative energies, local eigenvalue statistics
(LES) requires finer estimates. In particular, LES require that we work with operators localized to boxes $\Lambda_L$ and acting on the Hilbert space $L^2 (\Lambda_L)$. The theory of point interactions in a bound domain $\Omega \subset \R^d$, for $d=1,2,3$ is developed in \cite{blanchard, pankrashkin}. We begin with a general formulation on a bounded domain $\Omega \subset \R^d$, with a piecewise smooth boundary, and with $N$ delta interactions at distinct points $\{ x_k \}_{k=1}^N \subset \Omega$. In the text, \mr we will explicitly treat the case $d =3$. The necessary formulae for $d=1$ and $d=2$ are given in section \ref{sec:dimensions12}.

The Green's function for $\Omega \subset \R^3$ with $X$-boundary conditions is
\beq\label{eq:green1}
G_\Omega^X(x,y;z) = G_0(x,y;z) -c_{z,y}^X(x), ~~x,y \in \Omega ,
\eeq
where $G_0$ is the free Green's function on $\R^3$ given by
\beq\label{eq:free-green1}
G_0(x,y;z) = \frac{e^{-i \sqrt{z} \| x-y \|}}{4 \pi \|x-y\|}.
\eeq
For fixed $y \in \Omega$, the corrector $c_{z,y}^{\Omega,X}(x)$ satisfies the boundary-value problem:
\beq\label{eq:corrector-bvp1}
((- \Delta_x - z ) c^{\Omega,X}_{z,y})(x) = 0 , ~~x \in \Omega
\eeq
with Dirichlet or Neumann boundary conditions on $\partial \Omega$:
\bea\label{eq:corrector-bvp2}
c_{z,y}^{\Omega,D}(x) |_{\partial \Omega} & = &  G_0(x,y;z) |_{x \in \partial \Omega} ; \\
\nu \cdot \nabla_x c_{z,y}^{\Omega,N}(x) |_{\partial \Omega} & = &   \nu \cdot \nabla_x G_0(x,y;z) |_{x \in \partial \Omega}.
\eea
When the region $\Omega$ is a cube $\Lambda_L = [0,L]^d$, we can also consider periodic boundary conditions. The periodic Green's function is obtained by periodizing $G_0$:
\beq\label{eq:periodic-gr1}
G_{\Lambda_L}^P (x,y ;z) = \sum_{m \in \Z^d}  G_0 (x + mL, y,;z) =  \sum_{m \in \Z^d}  g_0 (\|x - y + mL\|;z) ,
\eeq
where $g_0(s) = (4 \pi s)^{-1}{e^{-i \sqrt{z} s}}.$
With $y \in \Lambda_L$ fixed, we may write $G_{\Lambda_L}^P$ as in \eqref{eq:green1} with
\beq\label{eq:corrector-bvp3}
c_{z,y}^{\Lambda_L,P}(x) |_{\partial \Lambda_L}  =  [ G^P_{\Lambda_L}(x,y;z) - G_0(x,y;z)] |_{x \in \partial \Lambda_L} .
\eeq

We now consider $N$ delta functions located at points $\{x_k\}_{k=1}^N \subset \Omega$ with real coupling constants $\alpha := \{\alpha_k\}_{k=1}^N$. 
The  Hamiltonian
$H_\alpha^{\Omega,X}$, for Dirichlet, Neumann, or periodic boundary conditions indicated by $X=D,N,P$,  on a bounded region $\Omega$ corresponding to this delta function potential is formally defined as
\beq\label{eq:local-delta-interaction1}
H_\alpha^{\Omega,X} = - \Delta_\Omega^X + \sum_{k=1}^N \alpha_k \delta(x - x_k) ,
\eeq
where $H_0^{\Omega,X} = - \Delta_\Omega^X $ is the Laplacian on $\Omega$ with $X$ boundary conditions.
The operator \eqref{eq:local-delta-interaction1} is made rigorous through the choice of the appropriate self-adjoint extension. As a result of this analysis, the Green's function $G_{\alpha}^{\Omega,X}(x,y;z)$ for $H_\alpha^{\Omega,X}$ and $d=3$ is related to the Green's function $G_\Omega^X(x,y;z)$ for the unperturbed operator $H_0^{\Omega,X}$ by
\beq\label{eq:green2}
G_{\alpha}^{\Omega,X}(x,y;z) = G_0^{\Omega, X}(x,y;z) + \sum_{j,k=1}^N G_0^{\Omega, X}(x,x_j;z) [ K_\Omega^X (z;\alpha)^{-1} ]_{jk} G_0^{\Omega, X}(x_k,y;z).
\eeq

The operator $K_\Omega^X (z;\alpha): \C^N \rightarrow \C^N$ is an $N \times N$ matrix-valued function defined by
\beq\label{eq:K-kernel1}
[K_\Omega^X (z;\alpha)]_{jk} := - G_0^{\Omega, X}(x_j,x_k;z)(1 - \delta_{jk}) + \left( \frac{1}{\alpha_{d,k}} + c_{z, x_k}^{\Omega, X}(x_j) - e_d(z) \right) \delta_{jk}.
\eeq
The effective energy $e_d(z)$ is dimension dependent and defined as in \cite[(4.7)-(4.9)]{hkk1}. For $d=3$, we have:
\beq\label{eq:effective-energy1}
e_3(z) = \frac{i \sqrt{z}}{4 \pi},
\eeq
where the square root function is defined with the branch cut along the positive real axis.
The effective coupling constants $\alpha_{d,k}$ are also dimension-dependent
and $\alpha_{3,k} =  \alpha_k$.

It is convenient to think of $K_\Omega^X (z; \alpha)^{-1}$ as the resolvent of a generalized matrix Schr\"odinger operator $h_\alpha^{\Omega,X} (z)
:= t^{\Omega, X}(z) + v$ at energy $e_{d}(z)$ on $\C^N$. The kinetic energy $t^{\Omega, X}(z)$ has matrix elements
\beq\label{eq:matrixKE1}
t^{\Omega, X}_{jk}(z) :=    c_{z, x_k}^{\Omega, X}(x_j) \delta_{jk} - G_0^{\Omega, X}(x_j,x_k;z)(1 - \delta_{jk}) ,
\eeq
and the potential $v$, depending on the coupling constants $\alpha_{d,k}$, is a diagonal matrix given by
\beq\label{eq:matrixKE2}
v_{jk} := \frac{1}{\alpha_{d,k}} \delta_{jk}.
\eeq
The off-diagonal part of $t^X(z)$ decays exponentially for $z \not\in \sigma (H_\alpha^X)$. The diagonal part of $t^X(z)$ is uniformly bounded provided the constants $\alpha_{d,k}$ are uniformly bounded away from zero.


\section{Finite-volume estimates: Wegner, Minami, and localization estimates}\label{sec:w-m-est1}
\setcounter{equation}{0}

We apply \mr results of the last section on point interactions in bounded domains to local random Schr\"odinger operators with point interactions.
Let $\Lambda_L \subset \R^d$, for $d=1,2,3$, be a cube of side length $L > 0$. We denote by $H_\omega^L$ the restriction of the operator defined in \eqref{eq:schr-op1}
to $\Lambda_L$ with $X$ boundary conditions on $\partial \Lambda_L$.
In this setting, the number of point interactions satisfies $N \sim L^d$ and the coupling constants $\alpha_k = \omega_k$ are random variables.

For the remaining sections, we work explicitly with Dirichlet boundary conditions $X = D$ and dimension $d=3$.
The regions $\Omega$ will be cubes $\Lambda_L$ of side length $L>0$.  To simplify notation, we write $H_0^L = H_0^{\Lambda_L}=H_0^{\Lambda_L,D}$ and $H_\omega^L = H_\omega^{\Lambda_L}=H_\omega^{\Lambda_L,D}$. We also denote by $G_0^L (x,y;z)$ the free Green's function for $H_0^L$, and by $G_\omega^L(x,y;z)$ the interacting Green's function for $H_\omega^L$.

  In order to prove that the local eigenvalue statistics converge to a Poisson process, we need a Wegner and Minami estimate for the local Hamiltonian $H_\omega^{L}$. Although these operators act on $L^2 (\Lambda_L)$, the special structure of the potential allows us to compute a Minami estimate following the ideas of the proof in \cite{cgk1}. In particular, spectral averaging in the trace norm, rather than simply for matrix elements, is possible for this model.

\subsection{Wegner estimate}\label{subsec:wegner1}

We first review the Wegner estimate proved in \cite[Theorem 7.1]{hkk1}. Although this theorem was proved in \cite{hkk1}, we present a different proof here. The reason for this is that the techniques will be used in the proof of the Minami estimate that was not proved in \cite{hkk1}.
We write $E_{H_\omega^{\Lambda}} (I)$ for the spectral projector of $H_\omega^\Lambda$ and an interval $I \subset \R$.

\begin{theorem}\label{thm:wegner1}
For any $E_0 < 0$ there is a constant $C_{W}<\infty$ such that for any interval $I:= [E-\eta,E+\eta] \subset ( - \infty, E_{0})$ (with $\eta>0$)
\bea\label{eq:wegner1}
\Pp \{ {\rm dist} ( \sigma (H_\omega^{L}),  E) < \eta \} &=& \Pp \{ {\rm Tr} E_{H_\omega^{\Lambda_L}} (I) \geq 1 \} \nonumber \\
  & \leq & \E \{ {\rm Tr} E_{H_\omega^{\Lambda_L} }(I) \} \nonumber \\
  &\leq & C_W |\Lambda_L| \eta .
\eea
\end{theorem}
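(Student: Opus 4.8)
The plan is to establish the Wegner estimate \eqref{eq:wegner1} by reducing the spectral counting function of $H_\omega^L$ on the interval $I$ to a counting function for the finite matrix $K_{\Lambda_L}^D(z;\omega)$, and then applying spectral averaging over the random variables $\omega_j$. The first step is Chebyshev's inequality, which gives the first two lines of \eqref{eq:wegner1} immediately: since ${\rm Tr}\, E_{H_\omega^{\Lambda_L}}(I)$ is a nonnegative integer, $\Pp\{{\rm Tr}\, E_{H_\omega^{\Lambda_L}}(I)\geq 1\}\leq \E\{{\rm Tr}\, E_{H_\omega^{\Lambda_L}}(I)\}$, and the event $\{{\rm dist}(\sigma(H_\omega^L),E)<\eta\}$ coincides with $\{{\rm Tr}\, E_{H_\omega^{\Lambda_L}}(I)\geq 1\}$ because $I$ is an open interval around $E$. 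So the entire content is the bound $\E\{{\rm Tr}\, E_{H_\omega^{\Lambda_L}}(I)\}\leq C_W|\Lambda_L|\eta$.

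To bound the expected trace, I would use the Birman--Schwinger-type correspondence built into the resolvent formula \eqref{eq:green2}--\eqref{eq:K-kernel1}: negative eigenvalues $E<0$ of $H_\omega^{\Lambda_L}$ are exactly the energies $z=E$ at which the matrix $K_{\Lambda_L}^D(z;\omega)$ is singular, i.e., has a zero eigenvalue. More precisely, writing $K_{\Lambda_L}^D(E;\omega) = h_\omega^{\Lambda_L}(E) - e_3(E)$ in the matrix Schr\"odinger picture of \eqref{eq:matrixKE1}--\eqref{eq:matrixKE2}, and noting that $e_3(E)=i\sqrt{E}/4\pi$ is real and strictly monotone on $(-\infty,0)$ (with $\sqrt{E}=i\sqrt{|E|}$, so $e_3(E) = -\sqrt{|E|}/4\pi$), one has that eigenvalues of $H_\omega^{\Lambda_L}$ in $I$ correspond, via a monotone change of variables, to eigenvalues of the self-adjoint matrix pencil crossing a window of size comparable to $\eta$. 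The key structural fact, already emphasized in the paper, is that the $j$-th coupling constant $\omega_j$ enters $K_{\Lambda_L}^D(E;\omega)$ only through the single diagonal entry $1/\omega_{3,j}=1/\omega_j$, so that differentiating in $\omega_j$ produces a rank-one (in fact sign-definite) perturbation of the matrix. This makes the matrix a monotone family in each $\omega_j$, which is precisely what spectral averaging needs.

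The main step is then a spectral averaging argument in the spirit of \cite{cgk1}: I would average ${\rm Tr}\, E_{H_\omega^{\Lambda_L}}(I)$, rewritten as a counting function for the eigenvalues of the $N\times N$ matrix $K_{\Lambda_L}^D$, over one coupling constant at a time using the density $h_0\in L_0^\infty$. Because the dependence on each $\omega_j$ is monotone through a rank-one term, the one-parameter average of the indicator that an eigenvalue of the matrix lies in an interval of length $O(\eta)$ is bounded by $\|h_0\|_\infty$ times a universal constant times $\eta$ (up to the Jacobian $|\frac{d}{dE}e_3(E)|^{-1}$, which is bounded above on any set $(-\infty, E_0]$ with $E_0<0$ since $\frac{d}{dE}e_3(E) = \frac{1}{8\pi\sqrt{|E|}}$ stays bounded away from $0$ there). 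Summing the $N\sim|\Lambda_L|$ contributions — one for each lattice point in $\tilde\Lambda_L$ — produces the factor $|\Lambda_L|$, giving $\E\{{\rm Tr}\, E_{H_\omega^{\Lambda_L}}(I)\}\leq C_W|\Lambda_L|\eta$ with $C_W$ depending only on $\|h_0\|_\infty$, $a$, $b$, and $E_0$. The uniform control on the energy window requires that the diagonal part $c_{E,x_k}^{\Lambda_L,D}(x_k)$ of $t^{\Lambda_L,D}(E)$ and the off-diagonal Green's function entries be bounded uniformly in $L$ for $E<E_0$; this is available because the correctors solve \eqref{eq:corrector-bvp1}--\eqref{eq:corrector-bvp2} with boundary data $G_0$ that decays exponentially away from the diagonal at negative energy.

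The hardest part, I expect, is making the rank-one monotone-perturbation structure of the matrix $K_{\Lambda_L}^D(E;\omega)$ interact cleanly with the counting function: one needs that perturbing a self-adjoint matrix by a sign-definite rank-one term moves each eigenvalue monotonically and that the total "eigenvalue flow" through the window $I$ is controlled, so that the average over $\omega_j$ of the number of eigenvalues in $I$ is $O(\eta)$ uniformly in the other variables $\omega_i$, $i\neq j$, and in $L$. This is exactly the ``spectral averaging in the trace norm'' that the paper advertises and that is developed in the rank-one perturbation section \ref{sec:trace-est1}; a secondary technical point is the change of variables from the spectral parameter $E$ of $H_\omega^{\Lambda_L}$ to the eigenvalue parameter of the matrix, which is harmless only because $e_3$ is a real analytic diffeomorphism of $(-\infty,0)$ onto its range with derivative bounded on $(-\infty,E_0]$.
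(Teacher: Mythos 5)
Your proposal captures the right structural ingredient (each $\omega_j$ enters the interaction matrix $K_L(z;\omega)$ only through the single diagonal entry $1/\omega_j$, a monotone rank-one dependence), but it routes the argument through a Birman--Schwinger reduction to a matrix counting function that is genuinely different from what the paper does, and this different route conceals a gap.

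The paper's proof (Lemma 3.2 followed by a one-line Chebyshev step) never converts $\mathrm{Tr}\,E_{H_\omega^{\Lambda_L}}(I)$ into a counting function for the matrix $K_L$. Instead it uses Stone's formula to write the spectral projection as an $E$-integral of $\Im R_\omega^{L}(E+i\epsilon)$, inserts the resolvent formula \eqref{eq:finite-green1} to bring out the matrix elements of $K_L(E+i\epsilon;\omega)^{-1}$, and then applies the spectral averaging bound from \cite[section 5]{hkk1}:
\[
\sup_{\epsilon\to 0}\left|\int h_0(\omega_j)\,\langle\xi,\,K_L(E+i\epsilon;(\omega_j,\omega_j^\perp))^{-1}\xi\rangle\,d\omega_j\right|\leq C_1\|\xi\|^2
\]
for any $\xi\in\ell^2(\tilde\Lambda_L)$. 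Taking $\xi_m=G_0^L(m,x;E)$, integrating $x$ over $\Lambda_L$ and $E$ over $I$, and using exponential decay of the Green's function gives $C_1|\Lambda_L|\,|I|$. The $E$-integration and the $\omega_j$-average are decoupled, and no change of variables between the spectral parameter of $H_\omega^L$ and any eigenvalue parameter of $K_L$ is ever made.

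Your route instead proposes to count zero crossings of the pencil $E\mapsto K_L(E;\omega)$ and to convert ``an eigenvalue of $H_\omega^L$ lies in $I$'' into ``an eigenvalue of $K_L(E_0;\omega)$ lies in a window of width $O(\eta)$''. That translation requires the pencil $E\mapsto K_L(E;\omega)$ to be monotone in $E$, and you justify this only by observing that $e_3(E)$ is a real-analytic diffeomorphism of $(-\infty,0)$ with derivative bounded away from zero on $(-\infty,E_0]$. But $K_L(E;\omega)=t^L(E)+v_\omega-e_3(E)\,\mathrm{Id}$ depends on $E$ not only through $e_3(E)$ but also through the corrector and Green's function terms inside $t^L(E)$, and the sign-definiteness of $-\partial_E K_L(E;\omega)$ (equivalently, whether the matrix eigenvalues flow through zero monotonically) is neither established nor a consequence of the monotonicity of $e_3$ alone. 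Without it, the assertion that the preimage of $I$ has ``size comparable to $\eta$'' is not justified, and the subsequent spectral averaging over $\omega_j$ has no controlled window to be applied to. This is the concrete gap. One can either prove the monotonicity of the full pencil (which is plausible because of the positivity of $\partial_E(H_0^L-E)^{-1}$, but it takes actual work and is a separate claim), or sidestep the issue entirely as the paper does by performing the $E$-integration via Stone's formula and averaging the $K_L^{-1}$ matrix elements at each fixed energy.

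Two minor additional points. First, you invoke section~\ref{sec:trace-est1} for the ``spectral averaging in the trace norm'', but that appendix is about rank-one perturbations of operators (used for the \emph{Minami} estimate), not about the averaging inequality used for Wegner; the relevant input is the differential-inequality spectral averaging of \cite{chm1} as adapted in \cite{hkk1}. Second, your closing sentence about the Jacobian $|\tfrac{d}{dE}e_3(E)|^{-1}$ being controlled ``since $\tfrac{d}{dE}e_3(E)$ stays bounded away from $0$'' is the right instinct for a Birman--Schwinger argument, but again this only controls the scalar part $e_3(E)$, not the matrix part $t^L(E)$, so the claimed $O(\eta)$ window still needs the missing monotonicity.
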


The basic tool in the proof of Theorem \ref{thm:wegner1} is the spectral averaging result for one-parameter variations.
Even though a spectral averaging in the trace class is not possible for the usual random Schr\"odinger operator on $\R^d$, the special structure of the point interaction model makes this possible. We define the random variable $X_{\omega_j, \omega_j^\perp}^{L}(I) := {\rm Tr} E_{H_\omega^{\Lambda}} (I)$, where we write $L$ for the cube $\Lambda_L$.

\begin{lemma}\label{lemma:sp-ave1}
We consider the variation of  the Hamiltonian $H_\omega^\Lambda$ with respect to one random variable $\omega_j$, for $j \in \tilde{\Lambda}$, with all the other random variables $\omega^\perp_j$ held fixed, so that $\omega := (\omega_j, \omega_j^\perp)$. For any $I \subset \R^-$, we then have
\beq\label{eq:sp-ave-tr1}
 \E_{\omega_j} \{ X_{\omega_j, \omega_j^\perp}^{L}(I) \} \leq C_W |I| |\Lambda|,
\eeq
for a constant $C_W(I) > 0$ depending only on $\sup I$.
\end{lemma}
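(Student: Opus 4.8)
The plan is to reduce the trace estimate to the spectral shift / eigenvalue-counting identity for a rank-one–type perturbation in the coupling constant $\omega_j$, and then integrate in $\omega_j$ against the bounded density $h_0$. First I would note that, via the Krein-type resolvent formula \eqref{eq:green2}--\eqref{eq:K-kernel1}, the dependence of $H_\omega^{\Lambda}$ on the single parameter $\omega_j$ enters only through the single diagonal entry $1/\omega_{j,d}$ of the matrix $K_\Omega^D(z;\omega)$, i.e. through a rank-one perturbation $v \mapsto v + s\, e_j e_j^{*}$ of the finite-dimensional ``matrix Schr\"odinger operator'' $h_\omega^{\Omega,D}(z)$. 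The key point is that counting eigenvalues of $H_\omega^{\Lambda}$ in $I$ is, by the spectral correspondence underlying the point-interaction construction (the eigenvalues $E$ of $H_\omega^{\Lambda}$ below $0$ are exactly the $z$ for which $K_\Omega^D(z;\omega)$ is singular, with multiplicities matching), equivalent to counting how many times an eigenvalue branch of the analytic family $z \mapsto K_\Omega^D(z;\omega)$ crosses $0$ as $z$ ranges over $I$.

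Next I would set up the monotonicity that makes spectral averaging work: since $h_0$ is supported in $[-b,-a]$ with $0<a<b$, the effective coupling $\omega_{j,d}$ is bounded away from $0$ and from $\infty$, so $s := 1/\omega_{j,d}$ ranges over a compact interval and depends monotonically on $\omega_j$; moreover $\partial_s h_\omega^{\Omega,D}(z) = e_j e_j^{*} \geq 0$, so by the min-max principle the relevant eigenvalue branches of $h_\omega^{\Omega,D}(z)$ are monotone in $s$. This is precisely the situation of the abstract rank-one spectral-averaging lemma used in \cite{cgk1} (and which the paper reproves in section \ref{sec:trace-est1}): the number of eigenvalues of $H_\omega^{\Lambda}$ in a short interval $I = [E-\eta,E+\eta]\subset\R^-$ is controlled, after integration over the perturbation parameter, by $\|h_0\|_\infty$ times $|I|$ times the rank of the perturbation summed over the $N\sim|\Lambda|$ sites — but here we are varying only one site, so I expect the bound $\E_{\omega_j}\{X^L_{\omega_j,\omega_j^\perp}(I)\} \le C \|h_0\|_\infty |I|$ for the contribution of that site, and then the factor $|\Lambda|$ in \eqref{eq:sp-ave-tr1} comes from the fact that $X^L_{\omega_j,\omega_j^\perp}(I)$ counts \emph{all} eigenvalues in $I$, of which there are at most $O(|\Lambda|)$ (the total rank / number of point interactions). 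Concretely: write $X^L = \sum_{k} \mathbf{1}_I(E_k^L(\omega))$ and bound each indicator, after $\omega_j$-averaging, using that the $k$-th eigenvalue moves monotonically and at controlled speed as $\omega_j$ varies; the change-of-variables $\omega_j \mapsto s$ contributes a Jacobian bounded by $\sup(\omega_{j,d}^2) \le b^2$, and $\int \mathbf{1}_I(E_k(s))\, ds \le |I|$ by monotonicity of $E_k(\cdot)$ together with a lower bound on $|\partial_s E_k|$ coming from $\langle e_j, P_k e_j\rangle$ being either zero (branch doesn't move, doesn't contribute) or bounded below.

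The step I expect to be the main obstacle is making the last sentence quantitative and uniform: controlling $\int \mathbf{1}_I(E_k(s))\,ds$ requires a \emph{lower} bound on the speed $|\partial_s E_k(s)|$ precisely on the set where the branch is near $I$, and a priori the projection weight $\langle e_j, P_k(s) e_j\rangle$ could be tiny. The resolution, as in \cite{cgk1}, is not to bound speeds branch-by-branch but to use the exact formula for the derivative of the eigenvalue counting function in terms of the imaginary part of a Borel transform: $\sum_k \mathbf{1}_I(E_k(s))$ integrated in $s$ equals, up to the Jacobian, the spectral measure $\mu_{e_j}$ of the auxiliary operator evaluated on a set of size $|I|$, and $\mu_{e_j}(I) \le 1$ automatically because $e_j$ is a unit vector — this is the ``spectral averaging in the trace norm'' the authors advertise. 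So the real content is: (i) verify that the eigenvalue-counting-to-Krein-matrix dictionary is exact with multiplicities (this uses the explicit form of \eqref{eq:green2}, analyticity of $z\mapsto K^D_\Lambda(z;\omega)$ on $\R^-$ away from the spectrum, and the boundedness/positivity of the corrector term $c^{\Omega,D}_{z,x_k}(x_j)$ from section \ref{sec:bc1}); (ii) invoke the abstract rank-one averaging bound (section \ref{sec:trace-est1}) for the one-parameter family $s \mapsto h_\omega^{\Omega,D}(z) + (s - s_0)e_je_j^*$; (iii) undo the change of variables $s = 1/\omega_{j,d}$, absorbing $\|h_0\|_\infty$ and the Jacobian $\le b^2$ into the constant $C_W$, which depends only on $\sup I$ through the bounds on $e_d(z)$ and the corrector for $z$ near $\sup I < 0$. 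Summing the resulting bound over $j\in\tilde\Lambda$ is not needed for this lemma (it is the Wegner theorem's job), so the statement \eqref{eq:sp-ave-tr1} with the single $|\Lambda|$ factor follows directly from the total-eigenvalue-count bound $X^L \le N \sim |\Lambda|$ combined with the per-branch estimate.
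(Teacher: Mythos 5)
Your proposal takes a genuinely different route from the paper's, and the different route has a gap at exactly the place you flag as "the main obstacle."

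The paper does not track eigenvalue branches at all. It expresses $X^{L}_{\omega}(I)={\rm Tr}\,E_{H_\omega^{L}}(I)$ via Stone's formula and the Krein resolvent formula \eqref{eq:finite-green1}, writing the trace as a sum over lattice pairs $(\ell,m)\in\tilde\Lambda_L^2$ of integrals of the form $\int_{\Lambda_L}G_0^L(x,\ell;E)[K_L^{-1}]_{\ell m}G_0^L(m,x;E)\,d^3x$, and then applies the spectral-averaging estimate of \cite[section 5]{hkk1} (proved there with the differential-inequality method of \cite{chm1}) to the \emph{matrix} $K_L^{-1}(E+i\epsilon;\omega)$ with test vector $\xi_m=G_0(m,x;E)$. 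In that argument the $|\Lambda_L|$ factor comes from the sum over $m\in\tilde\Lambda_L$ and the $|I|$ factor from the trivial $E$-integration; no eigenvalue counting or monotonicity of branches enters.

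The key problem with your plan is step (ii). You propose to "invoke the abstract rank-one averaging bound (section \ref{sec:trace-est1})" for the one-parameter family $s\mapsto h^{\Omega,D}_\omega(z)+(s-s_0)e_je_j^*$, but section \ref{sec:trace-est1} does not contain a spectral-averaging lemma: Proposition \ref{prop:rank-one1} is the statement that ${\rm Tr}\,E_A(I)$ changes by at most $1$ under a rank-one perturbation, which controls the \emph{variation} of the counting function, not its average. That result is what the Minami proof uses, and it cannot by itself give a Wegner bound: from it one only gets $\E_{\omega_j}\{X^L_{\omega_j}(I)\}\le X^L_{\tau_j}(I)+1$, with $X^L_{\tau_j}(I)$ uncontrolled. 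The genuine averaging input you need — $\sup_{\epsilon}|\int h_0(\omega_j)\langle\xi,K_L(E+i\epsilon;\omega)^{-1}\xi\rangle\,d\omega_j|\le C\|\xi\|^2$ — lives in \cite{hkk1}, not in this paper's appendix, and it is an estimate on a matrix element of $K_L^{-1}$, not on ${\rm Tr}\,E_{H^L_\omega}(I)$ directly. Converting it into a trace bound is exactly what the Krein-formula/Stone's-formula step accomplishes, and this is the step your sketch elides. Two further points: (a) the per-branch change of variables $\omega_j\mapsto E_k(\omega_j)$ is harder than you suggest because the eigenvalues of $H^\Lambda_\omega$ are not eigenvalues of a fixed operator plus $s e_je_j^*$ — they are the solutions of $e_d(z)\in\sigma(h^\Lambda_\omega(z))$, where both $h^\Lambda_\omega(z)$ and $e_d(z)$ depend on $z$, so the standard rank-one branch calculus does not apply verbatim; and (b) your heuristic that the single site gives $\le C\|h_0\|_\infty|I|$ and the $|\Lambda|$ factor comes from the branch count does not emerge cleanly from a branch-tracking argument, whereas in the paper it falls out of the sum over $m\in\tilde\Lambda_L$ in the Krein formula. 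If you want to keep a perturbation-theoretic flavor, the fix is to perform the averaging at the level of the matrix $K_L^{-1}$ and then sum over the lattice, which is precisely the paper's proof.
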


\begin{proof}
\noindent
1. We use Stone's formula to express the spectral projection $E_{H_\omega^{L}} (I)$ as an integral over the resolvent:
\beq\label{eq:contour1}
E_{H_\omega^{L}} (I) = \frac{1}{\pi} \lim_{\epsilon \rightarrow 0}
\int_{I} ~ \Im R_\omega^{L}(E+i \epsilon) ~dE .
\eeq
Because $R_0(z)$ is analytic away from $\R^+$, the resolvent formula \eqref{eq:finite-green1} yields
\beq\label{eq:contour2}
E_{H_\omega^{L}} (I) =  \frac{1}{\pi} \lim_{\epsilon \rightarrow 0}
\sum_{\ell,m \in \tilde{\Lambda}_L} \int_I ~ \Im [ R_0^L(\cdot, \ell;E) [K_{L} (E+i \epsilon; \omega_j)^{-1} ]_{\ell m} R_0^L(m, \cdot;E) ]  .
\eeq
The trace is expressible as the integral over the diagonal of the corresponding Green's functions (as may be justified using the Hilbert-Schmidt bound on the trace):
\beq\label{eq:contour3}
X_{\omega_j, \omega_j^\perp}^{L}(I) =
\frac{1}{\pi} \int_{I} \lim_{\epsilon \rightarrow 0}
\sum_{\ell,m \in \tilde{\Lambda}_L} \int_{\Lambda_L} ~ G_0^L(x,\ell;E) \Im [K_{L} (E+i \epsilon; \omega_j)^{-1} ]_{\ell m} G_0^L(m,x;E) ~dE ~d^3x,
\eeq
since the Green's functions are real for $z \in ( - \infty, 0)$.

\noindent
2. Only the kernel $K$ in \eqref{eq:contour3} depends on $\omega_j$. In \cite[section 5]{hkk1}, we developed a spectral averaging method that is applicable to $K_L^{-1}(z, \omega)$, using the differential inequality method of \cite{chm1}. We proved that for any $\xi \in \ell^2 ( \tilde{\Lambda}_L)$, there is a constant $C _1 > 0$ so that
\beq\label{eq:sp-ave1}
\sup_{\epsilon \rightarrow 0} \left| \int h_0 (\omega_j) \langle \xi ,  [K_{L} (E + i \epsilon; (\omega_j, \omega_j^\perp))]^{-1} \xi \rangle ~d \omega_j \right|
\leq C_1 \| \xi \|^2 .
\eeq
 Upon taking the expectation of \eqref{eq:contour3} with respect to $\omega_j$, and using the spectral averaging
result \eqref{eq:sp-ave1} with the vector $\xi$ having components $\xi_m = G_0(m,x;E)$, we obtain\mr
\beq\label{eq:contour4}
\E_{\omega_j} \{  X_{\omega_j, \omega_j^\perp}^{(L)}(I) \} \leq C_1 \sum_{m \in \tilde{\Lambda}_L}
\frac{1}{ \pi} \int_{I} \int_{\Lambda_L} ~ G_0^L(x,m;E)^2  ~d^3x~dE.
\eeq
We use the representation of the local Green's function given in \eqref{eq:green1}. The fact that $E \not \in \rho (H_0^L)$ implies that the Green's function is exponentially decaying yielding a bound for the $x$-integral that is $\mathcal{O}(1)$. The sum over $m \in \tilde{\Lambda}_L$ gives a factor of $|\Lambda_L|$.
So, after the trivial $E$-integration, we obtain
\beq\label{eq:sp-ave2}
\E_{\omega_j} \{  X_{\omega_j, \omega_j^\perp}^{L}(I) \} \leq C_1 |\Lambda_L| |I|,
\eeq
where the constant $C_1$ is uniform with respect to $\omega_j^\perp$.
\end{proof}

This lemma immediately implies the Wegner estimate.

\begin{proof}[Proof of Theorem \ref{thm:wegner1}]
Given the spectral averaging result, Lemma \ref{lemma:sp-ave1}, the proof of the Wegner estimate follows:
\bea\label{eq:wegner2}
\Pp \{ {\rm dist} ( \sigma (H_\omega^{L}),  E_0 ) < \eta \} &=& \Pp \{ {\rm Tr} E_{H_\omega^{\Lambda_L}} (I_\eta) \geq 1 \} \nonumber \\
  & \leq & \E \{ {\rm Tr} E_{H_\omega^{\Lambda_L}} (I_\eta) \} \nonumber \\
  &\leq & \E_{\omega_j^\perp} [ \E_{\omega_j} \{  X_{\omega_j, \omega_j^\perp}^{L}(I) \} ] \nonumber \\
  & \leq & C_W |\Lambda_L| \eta ,
\eea
by \eqref{eq:sp-ave2} since $| I_\eta| = 2 \eta$.
\end{proof}

\subsection{Minami estimate}\label{subsec:minami1}

As in section \ref{subsec:wegner1}, we define the random variable $X_\omega^{L}(I)$ by
\beq\label{eq:trace1}
X_\omega^{L}(I) := {\rm Tr} E_{H_\omega^{L}} (I),
\eeq
for an interval $I \subset  ( -\infty, 0)$. The trace is well-defined since the negative spectrum consists of finitely-many eigenvalues with finite multiplicities.
The Minami estimate consists in estimating $\E \{  X_\omega^{(L)}(I) ( X_\omega^{(L)}(I) - 1) \}$.
Until recently, there was no known Minami estimate for random Schr\"odinger operators on $L^2(\R^d)$, for $d \geq 2$ (see \cite{DietleinElgart2017}, the methods developed there are not immediately applicable to the models discussed here.)
The special structure of delta interactions allows modification of the proof in \cite{cgk2} for lattice models.

\begin{theorem}\label{thm:minami-est1}\mr
For any $E_0 < 0$ there is a constant $C_{M}<\infty$ such that for any interval $I:= [E-\eta,E+\eta] \subset ( - \infty, E_{0})$

\beq\label{eq:minami1}
 \E \{  X_\omega^{L}(I) ( X_\omega^{L}(I) - 1) \} \leq C_M  |\Lambda_L|^2 \eta^2 .
\eeq
\end{theorem}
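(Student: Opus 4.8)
The plan is to follow the Combes--Germinet--Klopp strategy for the Minami estimate (as in \cite{cgk2}), but to exploit the special structure of the resolvent \eqref{eq:finite-green1}--\eqref{eq:kernel1} so that the whole argument is carried out at the level of the finite random matrix $K_L(z,\omega)$ rather than the operator $H_\omega^L$ on $L^2(\Lambda_L)$. The key observation, already used in the Wegner proof above, is that the nontrivial dependence of $E_{H_\omega^L}(I)$ on the randomness is entirely channeled through $K_L(z,\omega)^{-1}$, and that spectral averaging for this model is available in \emph{trace norm}, not merely for matrix elements (this is what the paper flags as the novel feature and what is established in \cite[section 5]{hkk1} plus section \ref{sec:trace-est1}).

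First I would write $X_\omega^L(I)(X_\omega^L(I)-1)$ as a double trace and reduce it, via Stone's formula applied twice, to an expression involving $\Im K_L(E_1+i\epsilon_1;\omega)^{-1}$ and $\Im K_L(E_2+i\epsilon_2;\omega)^{-1}$ integrated over $E_1,E_2 \in I$, with the free Green's functions $G_0^L$ contracting the matrix indices; as in \eqref{eq:contour3} these contractions produce only harmless $\mathcal{O}(1)$ factors because $E \notin \rho(H_0^L)$ forces exponential decay of $G_0^L$, and the index sums give powers of $|\Lambda_L|$. Second, I would localize to a single coupling constant: following \cite{cgk2} and the rank-one philosophy, one fixes all $\omega_j^\perp$ and varies a single $\omega_j$, expressing the relevant quantity through a rank-one perturbation of $K_L$ in the diagonal entry $1/\omega_{j,3} - e_3(z)$. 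Third, I would invoke the two-parameter / second-order spectral averaging estimate: the point is to bound $\E_{\omega_j}$ of a product of two resolvent imaginary parts at two energies by $C\,\eta^2$ uniformly in $\omega_j^\perp$, which is the matrix analogue of the Minami-type bound for rank-one perturbations. Here the trace-norm spectral averaging of section \ref{sec:trace-est1} is what allows the product structure to survive the averaging — ordinary matrix-element averaging would only give $\eta$, not $\eta^2$. Finally, take $\E_{\omega_j^\perp}$ of the resulting uniform bound and collect the $|\Lambda_L|^2$ from the two index sums.

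The main obstacle is the second-order spectral averaging step: one needs the analogue of Minami's inequality $\Im\langle \delta_j, (K - z)^{-1}\delta_j\rangle \,\Im\langle \delta_k,(K-z)^{-1}\delta_k\rangle$ being controlled after averaging, but now $K_L$ is not a Jacobi-type operator and the randomness enters $K_L$ through the nonlinear expression $1/\omega_{j,3}$ rather than additively, so one must first pass to the effective coupling variable and check that hypothesis [H1] (bounded density $h_0$, support in $[-b,-a]$ bounded away from $0$) gives a bounded density for the effective variable $1/\omega_{j,3}$, after which the differential-inequality machinery of \cite{chm1,cgk2} applies. A secondary technical point is justifying the interchange of the $\epsilon \to 0$ limits with the $E_1,E_2$ integrations and with the expectations; this is handled exactly as in the Wegner proof using Hilbert--Schmidt bounds on the trace together with the analyticity of $R_0^L$ off $[0,\infty)$. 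I expect the argument of section \ref{sec:trace-est1} (``a new proof of the effect of rank one perturbations'') to supply precisely the trace-norm estimate needed to close the $\eta^2$ bound, so the proof here should reduce to assembling that input with the double-Stone-formula expansion and the $|\Lambda_L|^2$ counting.
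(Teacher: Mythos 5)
Your proposal takes a genuinely different route from the paper, and it has a gap that the paper's approach is specifically designed to avoid. You propose expanding $X_\omega^L(I)(X_\omega^L(I)-1)$ via a double application of Stone's formula and then carrying out a second-order (two-parameter / Minami-type) spectral averaging in a single coupling constant $\omega_j$ to produce the $\eta^2$. You correctly identify this second-order averaging as ``the main obstacle'' --- but you then claim that section \ref{sec:trace-est1} supplies ``the trace-norm estimate needed to close the $\eta^2$ bound.'' This misreads section \ref{sec:trace-est1}: Proposition \ref{prop:rank-one1} is a purely deterministic interlacing statement --- a rank-one perturbation changes the eigenvalue counting function ${\rm Tr}\,E_A(I)$ by at most one --- and involves no averaging, let alone trace-norm or second-order averaging. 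Nothing in the paper establishes a second-order spectral-averaging estimate, and the nonlinear dependence of $K_L(z,\omega)$ on $1/\omega_j$ that you flag makes such an estimate genuinely delicate; your proposal leaves this as an unverified hope rather than a proof.

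The paper instead follows the Combes--Germinet--Klein decoupling strategy (cited as \cite{cgk1,cgk2}), which is designed precisely to replace second-order spectral averaging by two independent applications of the first-order Wegner bound. Concretely: step 1 shows that replacing $\omega_j$ by $\tau_j$ perturbs $K_L(z,\cdot)^{-1}$, and hence $R_\omega^L(z)$ at a fixed real $z\ll\inf\Sigma$, by a rank-one operator; step 2 applies Proposition \ref{prop:rank-one1} to conclude $|X_{\omega_j}^L(I)-X_{\tau_j}^L(I)|\leq 1$, so that $X_{\omega_j}^L(I)-1\leq X_{\tau_j}^L(I)$; step 3 introduces an auxiliary random variable $\tau_j$ with support in an interval $[-d,-c]$ disjoint from $[-b,-a]$ and independent of $\omega$, whence $X_\omega^L(I)\bigl(X_\omega^L(I)-1\bigr)\leq X_{\omega_j,\omega_j^\perp}^L(I)\,X_{\tau_j,\omega_j^\perp}^L(I)$. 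Since $X_{\tau_j,\omega_j^\perp}^L(I)$ is independent of $\omega_j$, one first averages $X_{\omega_j,\omega_j^\perp}^L(I)$ over $\omega_j$ via the spectral averaging of Lemma \ref{lemma:sp-ave1} to obtain one factor $C|\Lambda_L|\,|I|$, and then averages $X_{\tau_j,\omega_j^\perp}^L(I)$ over $(\tau_j,\omega_j^\perp)$ via the Wegner estimate to obtain the second $C|\Lambda_L|\,|I|$. No double Stone's formula and no two-parameter averaging are needed. If you want your route to close, you would need to actually supply the Minami-type inequality for the matrix family $K_L(z,\omega_j)$ and verify it survives the passage through the nonlinear change of variable $1/\omega_{j,d}$; as written, that crucial ingredient is neither proved nor available elsewhere in the paper.
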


\begin{proof}
1. {\it One-parameter perturbation}. The key to dealing with point interactions is that the variation of one parameter, say $\omega_j$, results in a rank one perturbation.
To see this, it follows from \eqref{eq:green2} that the difference of the resolvents of the matrix Schr\"odinger operators $h_\omega^L(z)$ for a variation of $\omega_j$ to $\tau_j$ is given by
\bea\label{eq:eq:rank-oneK1}
\lefteqn{ [K_L(z, (\omega_j, \omega_j^\perp))^{-1} - [K_L(z,(\tau_j, \omega_j^\perp))^{-1}]_{km} } \nonumber \\
 & = &\sum_{p,q\in \tilde{\Lambda}_L}\; [K_L((\omega_j, \omega_j^\perp),z)^{-1}]_{kp} \{ K_L(z,(\tau_j, \omega_j^\perp))_{pq} - K_L(z,(\omega_j, \omega_j^\perp))_{pq} \}
 [ K_L(z,(\tau_j, \omega_j^\perp))^{-1}]_{qm} \nonumber \\
 & = &\sum_{p,q\in \tilde{\Lambda}_L} \;[K_L(z,(\omega_j, \omega_j^\perp))^{-1}]_{kp} \left\{ \left( \frac{\omega_j - \tau_j}{\omega_j \tau_j} \right)
 \delta_{pj} \delta_{qj} \right\}  [K_L(z,(\tau_j, \omega_j^\perp))^{-1}]_{qm}
 \nonumber \\
 &=& \left( \frac{\omega_j - \tau_j}{\omega_j \tau_j} \right) [K_L(z,(\omega_j, \omega_j^\perp))^{-1}]_{kj}  [K_L(z,(\tau_j, \omega_j^\perp))^{-1}]_{jm}
\eea
This shows that the difference $K_{L} (z; \omega_j)^{-1}- K_{(L)} (z; \tau_j)^{-1} $
is a rank-one matrix.
Substituting this into the relation for the difference of the resolvents,
and writing only $\omega_j$ for $\omega$ with $\omega_j^\perp$ fixed, we find:
\bea\label{eq:rank-one2}
\lefteqn{ R_{\omega_j}^{L}(z) - R_{\tau_j}^{L}(z) } \nonumber \\
 &=& \sum_{k,m \in \tilde{\Lambda}_L} R_0^L (\cdot, k;z) [K_{L} (z; \omega_j)^{-1} - K_{L} (z; \tau_j)^{-1}]_{km} R_0^L (m,\cdot;z) \nonumber \\
 & = & C(\omega_j, \tau_j) \left( \sum_{k \in \tilde{\Lambda}_L} R_0^L(\cdot,k;z) K_L(z,(\omega_j, \omega_j^\perp))^{-1}_{jk} \right) \left(  \sum_{m \in \tilde{\Lambda}_L} [K_L(z,(\tau_j, \omega_j^\perp))^{-1}]_{jm} R_0^L(\cdot, m;z) \right), \nonumber \\
  & &
\eea
where the constant is
$$
C(\omega_j, \tau_j) := \left( \frac{\omega_j - \tau_j}{\omega_j \tau_j} \right).
$$
and the difference $K^{(L)} (z; \omega_j)^{-1}- K^{(L)} (z; \tau_j)^{-1} $
is a rank-one operator.
This shows that the resulting change in the resolvents is a rank one perturbation.

\noindent
2. {\it Estimate on the eigenvalue counting function.}
Since the Hamiltonians $H_\omega^L$ are lower semibounded, we can choose a real energy $E \ll 0$ so that
$z =  E \ll \inf \Sigma$.  With this choice of $z$, the resolvent $R_\omega^L(z)$ is a self-adjoint operator.
Let $I = (a,b)$. It follows that $H_\omega^L$ has an eigenvalue in $I$ if and only if $R_\omega^L(z)$
has an eigenvalue in $I_z : = ( (b-z)^{-1}, (a-z)^{-1})$. Consequently, the eigenvalue counting functions satisfy
\beq\label{eq:evcounting1}
X_\omega^L(I) := {\rm Tr} E_{H_\omega^L}(I) =  {\rm Tr} E_{R_\omega^L(z)}(I_z) .
\eeq
We now consider two configurations $(\omega_j, \omega_j^\perp)$ and $(\tau_j, \omega_j^\perp)$ obtained by varying the random variable in the $j^{th}$-position. We find
\beq\label{eq:evcounting2}
X_{\omega_j}^L(I) - X_{\tau_j}^L(I) = {\rm Tr} E_{R_{\omega_j}^L(z)}(I_z) -  {\rm Tr} E_{R_{\tau_j}^L(z)}(I_z).
\eeq
As shown in \eqref{eq:rank-one2}, the difference of the resolvents $R_{\omega_j}^L(z) - R_{\tau_j}^L(z)$
is a rank one operator. From Proposition \ref{prop:rank-one1}, it follows that
\bea\label{eq:rank-one1}
| X_{\omega_j}^{L}(I) -  X_{\tau_j}^{L}(I)| & = & | {\rm Tr} E_{R_{\omega_j}^L(z)}(I_z) -  {\rm Tr} E_{R_{\tau_j}^L(z)}(I_z) | \nonumber \\
 & \leq & 1 .
\eea
If, for example, $X_{\omega_j}^{L}(I) \geq 1$, then
\beq\label{eq:rank-one3}
 X_{\omega_j}^{L}(I)  \leq  X_{\tau_j}^{L}(I) + 1.
 \eeq


\noindent
3. \emph{Conclusion of the proof.} Following the ideas of \cite{cgk1}, we take $0 < c < d$ so that the interval $[-d, -c]$ is disjoint from $[-b, -a]$. We define a new random variable $\tau_j \in [-d, -c]$, having the same distribution as $\omega_j$. From \eqref{eq:rank-one1} and \eqref{eq:sp-ave2}, we obtain,
\bea\label{eq:minami2}
\E \{  X_\omega^{L}(I) ( X_\omega^{L}(I) - 1) \} & \leq & \E_{\tau_j} \E \{ X_{\omega_j, \omega_j^\perp}^{L}(I)
( X_{\tau_j, \omega_j^\perp}^{L}(I)) \} \nonumber \\
 & \leq & C_1 |\Lambda_L| |I| \left( \E_{\tau_j} \E_{\omega_j^\perp} \{ X_{\tau_j,\omega_j^\perp}^{L}(I) \}
\right)
\eea
%
\mr
Since $\tau_j$ is a random variable with the same probability density $h_0$ and independent of $\omega_j^\perp$,
we apply the usual Wegner estimate to evaluate the expectation in the right hand side of \eqref{eq:minami2} with respect to the variables $(\tau_j, \omega_j^\perp)$. As a consequence, we have
\beq\label{eq:minami4}
\E_{\tau_j} \E_{\omega_j^\perp} \{ X_{\tau_j, \omega_j^\perp}^{L}(I) \} \leq C_W |\Lambda_L| |I|.
\eeq
This estimate, together with \eqref{eq:minami2}, proves the theorem.
\end{proof}


\subsection{Localization estimates}\label{subsec:loc1}

In \cite[section 4]{hkk1}, we proved exponential decay of the fractional moments of the Green's function at negative energies, a key step in the proof of localization. For LES, we work with the finite-volume operators and need the decay of the matrix operator $K_{\Lambda} (z, \omega)^{-1}$. We briefly review those results here.
The random matrix operator $K_\Lambda(z, \omega)$ on $\ell^2 (\Lambda)$ is a generalized random Schr\"odinger operator $h_\omega^\Lambda(z)$ as defined in \eqref{eq:matrixKE1} and \eqref{eq:matrixKE2}.
The kinetic energy operator $t(z)$ on $\ell^2 ( \Lambda)$ is defined by
\beq\label{eq:ke1}
t^\Lambda_{jk}(z) :=    c_{z, x_k}^\Lambda(x_j) \delta_{jk} - G_0^\Lambda(x_j,x_k;z)(1 - \delta_{jk}) ,
\eeq
A diagonal, local random potential $v_\omega^\Lambda$ is defined by
\beq\label{eq:pe1}
[v_\omega^\Lambda]_{jk} := \frac{1}{\alpha_{d,k}} \delta_{jk}.
\eeq
The discrete generalized Schr\"odinger operator $h_\omega^\Lambda(z)$ is defined by
$$
h_\omega^\Lambda (z) := t^\Lambda(z) + v_\omega^\Lambda .
$$
Then, the random matrix $K_\Lambda (z,\omega)$ is given by
$$
K_\Lambda(z, \omega) = h_\omega^\Lambda (z) - e_d(z), ~~ z \not \in [0, + \infty).
$$

\begin{prop}\cite[Proposition 4.7]{hkk1}\label{prop:loc1}
 There exists an energy $\tilde{E}_0 < 0$, with $\Sigma_{pp} \cap ( - \infty, \tilde{E}_0] \neq \emptyset$ so that for any $s \in (0,1)$ and $z \in \C$, with $\Re z < \tilde{E}_0 < 0$ and $\frac{\pi}{2} < \arg z < \frac{3\pi}{2}$, there are finite, positive constants $C_s(z) > 0$ and $\gamma_{s,d} (z) >0$, uniform in $L > 0$, and locally uniform in $z$,
so that we have
\beq\label{eq:K-exp-decay1}
\E \{ | [K_\Lambda (z ,\omega)^{-1}]_{ij} |^s \} \leq C_s (z) e^{-  \gamma_{s,d} (z) \|i-j\| } ,
\eeq
for any $i, j \in \tilde{\Lambda}_L$. For $z = E + \zeta$, with $E < \tilde{E}_0 < 0$ and $|\zeta|$ small, the exponent $\gamma_{s,d} (z) \approx |E|^{\frac{1}{2}}$.   
\end{prop}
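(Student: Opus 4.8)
\medskip
\noindent\emph{Strategy of proof.}
The plan is two‑fold: first a deterministic Combes--Thomas estimate, which already yields the bound once $\tilde E_0$ is taken sufficiently negative; then the Aizenman--Molchanov fractional moment method, which upgrades it to the full localization regime. I would decompose $K_\Lambda(z,\omega)=M_\omega(z)+A(z)$ on $\ell^2(\tilde\Lambda_L)$, where $M_\omega(z)$ is diagonal with entries $[M_\omega(z)]_{kk}=c_{z,x_k}^{\Lambda}(x_k)+\omega_k^{-1}-e_d(z)$, each depending only on the single variable $\omega_k$, and $A_{jk}(z)=-G_0^\Lambda(x_j,x_k;z)(1-\delta_{jk})$ is the off‑diagonal part of $t^\Lambda(z)$ in \eqref{eq:matrixKE1}. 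For $z<0$ real, positivity of $(-\Delta_{\Lambda_L}^D+|z|)^{-1}$ and its domination by the free resolvent on $\R^d$ give $0\le G_0^\Lambda(x,y;z)\le G_0(x,y;z)$, hence $|A_{jk}(z)|\le C_d(z)e^{-\mu(z)\|j-k\|}$ for $j\neq k$ with $\mu(z)=|z|^{1/2}$ (the polynomial factor in \eqref{eq:free-green1} for $d=3$, and the analogous Bessel kernels for $d=1,2$, absorbed into $C_d(z)$); a Combes--Thomas bound for $-\Delta_{\Lambda_L}^D$ extends this to $z$ in the open left half‑plane with $\mu(z)\asymp|\Re z|^{1/2}$ and $\mu(z)\to\infty$ as $\Re z\to-\infty$, uniformly in $L$.

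Next, $|c_{z,x_k}^\Lambda(x_k)|\le C_0$ and $|\omega_k^{-1}|\le a^{-1}$ are bounded uniformly in $k,L$ and locally uniformly in $z$ (maximum principle for $-\Delta+|z|$, resp.\ Combes--Thomas; [H1]), while $|e_d(z)|=(4\pi)^{-1}|z|^{1/2}$ grows, so I would fix $\tilde E_0<0$ negative enough that the estimates below hold and, in addition, $\Sigma_{pp}\cap(-\infty,\tilde E_0]\neq\emptyset$ (which follows from [H1] and \cite{aghkh,hkk1}), so that for $\Re z<\tilde E_0$ the diagonal dominates \emph{deterministically}: $|[M_\omega(z)]_{kk}|\ge\tfrac12|e_d(z)|\gtrsim|\Re z|^{1/2}$ for all $\omega$, hence $\|M_\omega(z)^{-1}\|\lesssim|\Re z|^{-1/2}$. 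Writing $K_\Lambda^{-1}=(I+M_\omega^{-1}A)^{-1}M_\omega^{-1}$ and conjugating by the diagonal multiplier $e^{t\phi}$ — with $\phi=\min(\|\cdot-j\|,M)$ a $1$‑Lipschitz weight and $t=\tfrac12\mu(z)$ — leaves $M_\omega^{-1}$ unchanged while $\|e^{t\phi}Ae^{-t\phi}\|\le C_d(z)\sum_{n\in\Z^d\setminus\{0\}}e^{-\tfrac12\mu(z)\|n\|}<(2\|M_\omega(z)^{-1}\|)^{-1}$ for $\tilde E_0$ negative enough; so $\|e^{t\phi}K_\Lambda^{-1}e^{-t\phi}\|\le2\|M_\omega^{-1}\|$ uniformly in $M,L$, and letting $M\to\infty$ gives $|[K_\Lambda(z,\omega)^{-1}]_{ij}|\le C(z)e^{-t\|i-j\|}$ pointwise in $\omega$. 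Hence $\E\{|[K_\Lambda(z,\omega)^{-1}]_{ij}|^s\}\le C(z)^se^{-st\|i-j\|}$, which is \eqref{eq:K-exp-decay1} with $C_s(z)=C(z)^s$ and $\gamma_{s,d}(z)=st=\tfrac{s}{2}\mu(z)$, of order $|E|^{1/2}$ when $z=E+\zeta$ with $E<\tilde E_0$ and $|\zeta|$ small; all constants are uniform in $L$ and locally uniform in $z$.

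To push $\tilde E_0$ up to the edge of the localization regime of \cite{hkk1}, where the deterministic lower bound on $[M_\omega(z)]_{kk}$ fails, I would replace the last step by the fractional moment method. Its new input is the single‑site estimate $\sup_{\Sigma\in\C}\E_{\omega_0}\{|\omega_0^{-1}+\Sigma|^{-s}\}\le C_s<\infty$ for $s\in(0,1)$, valid by [H1] because on $[-b,-a]$ the map $\omega_0\mapsto\omega_0^{-1}+\Sigma$ vanishes at most once, with derivative of modulus $\ge a^{-2}$, so $|\omega_0^{-1}+\Sigma|^{-s}$ is integrable against the bounded density $h_0$ with a bound depending only on $s,a,b,\|h_0\|_\infty$. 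Using the depleted resolvent identity $[K_\Lambda^{-1}]_{ij}=-[K_\Lambda^{-1}]_{ii}\sum_{k\neq i}A_{ik}(z)[(K_\Lambda^{(i)})^{-1}]_{kj}$ for $i\neq j$ — in which $[K_\Lambda^{-1}]_{ii}=(\omega_i^{-1}+\Sigma_i)^{-1}$ with $\Sigma_i$ independent of $\omega_i$ by the rank‑one/Krein structure behind \eqref{eq:eq:rank-oneK1}, while $[(K_\Lambda^{(i)})^{-1}]_{kj}$ does not involve $\omega_i$ — one integrates $\omega_i$ out first, applies subadditivity of $t\mapsto t^s$, and, after a Schur‑complement re‑summation absorbing the correction term with the a priori bound $\sup_{k,m,L}\E\{|[K_\Lambda(z,\omega)^{-1}]_{km}|^s\}<\infty$ (itself deduced from the single‑site bound as for the lattice Anderson model), obtains a closed inequality $F_L(i)\le C_s'\sum_{k\neq i}|A_{ik}(z)|^sF_L(k)$ for $i\neq j$ with $F_L(j)\le B(z)$, where $F_L(i):=\E\{|[K_\Lambda(z,\omega)^{-1}]_{ij}|^s\}$. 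Since $\sum_{k\neq i}|A_{ik}(z)|^se^{\kappa\|i-k\|}$ is finite for every $\kappa<s\mu(z)$ and small for $\Re z$ negative enough, this inequality is a strict contraction on $\ell^\infty$ with weight $e^{\kappa\|\cdot-j\|}$, and iterating gives \eqref{eq:K-exp-decay1} with $\gamma_{s,d}(z)=\kappa\asymp|\Re z|^{1/2}$. I expect the main obstacle to lie precisely in this decoupling/re‑summation step — making it rigorous and uniform in $L$ and $z$ over the stated sector, so that the $\omega_i$‑dependence genuinely factorizes after the depleted expansion and a \emph{closed}, contractive inequality for $F_L$ results from the a priori bound alone; the off‑diagonal decay of $G_0^\Lambda$, the single‑site integrability, and the deterministic Combes--Thomas step of the first route are by comparison routine.
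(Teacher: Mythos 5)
The paper does not prove this proposition; it is quoted verbatim from \cite[Proposition 4.7]{hkk1}, and the authors remark in section 3.3 that the proof there runs via the Aizenman--Molchanov fractional moment method applied to the random matrix $K_\Lambda(z,\omega)$. Your second route is therefore the one actually used; your first route (deterministic Combes--Thomas through $K_\Lambda^{-1} = (I+M_\omega^{-1}A)^{-1}M_\omega^{-1}$ with a Lipschitz conjugation) is a valid and more elementary alternative if one is content with $\tilde{E}_0$ sufficiently negative, provided one also checks that $\Sigma_{pp}\cap(-\infty,\tilde{E}_0]$ is nonempty there.

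Two cautions, one on each route. (i) Your ``diagonal dominance'' claim $|[M_\omega(z)]_{kk}|\ge\tfrac12|e_d(z)|\gtrsim|\Re z|^{1/2}$ is accurate only in $d=3$. From \eqref{eq:effective-energy2}, $|e_1(z)|=(2|z|^{1/2})^{-1}$ tends to $0$ as $\Re z\to-\infty$, and $|e_2(z)|\sim(2\pi)^{-1}\log|z|$; in neither case does the diagonal grow like $|\Re z|^{1/2}$. The argument still goes through because [H1] gives $|\omega_{d,k}^{-1}|\ge 1/b>0$ uniformly, so $M_\omega(z)$ stays invertible with a lower bound independent of $z$, while the off-diagonal $A(z)$ decays exponentially at rate $\asymp|\Re z|^{1/2}$ in all three dimensions, and it is the latter that makes the Neumann series converge and that sets $\gamma_{s,d}(z)\approx|E|^{1/2}$. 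You should not claim the diagonal grows; claim it is bounded below. (ii) You correctly flag the re-summation as the crux of the fractional moment route, but the closed inequality $F_L(i)\le C_s'\sum_{k\neq i}|A_{ik}(z)|^sF_L(k)$ is not what the depleted identity delivers. After integrating out $\omega_i$ via the single-site estimate you obtain
\beq
\E\{|[K_\Lambda^{-1}]_{ij}|^s\}\le C_s\sum_{k\neq i}|A_{ik}(z)|^s\,\E\{|[(K_\Lambda^{(i)})^{-1}]_{kj}|^s\}, \qquad i\neq j,
\eeq
with the \emph{depleted} resolvent on the right, and $(K_\Lambda^{(i)})^{-1}\neq K_\Lambda^{-1}$. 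The standard way to close this (the lattice-Anderson template you cite uses it) is not to pass back from $(K^{(i)})^{-1}$ to $K^{-1}$, but to establish first the a priori bound $\sup_{k,m,L}\E\{|[K_\Lambda^{-1}]_{km}|^s\}\le B(z)<\infty$ from the single-site estimate alone (integrate out the random variable at the entrance index, using exactly the Schur-complement/Krein observation that $[K_\Lambda^{-1}]_{kk}=(\omega_k^{-1}+\Sigma_k)^{-1}$ with $\Sigma_k$ independent of $\omega_k$), and then iterate the depleted expansion a fixed number $N$ of times, depleting at the current exit site at each step, so that the $n$-th factor is again of the depleted form and each application of the single-site estimate is legitimate, terminating after $N$ steps with the a priori bound $B(z)$. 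This yields $\E\{|[K_\Lambda^{-1}]_{ij}|^s\}\le B(z)\,(C_s\|A(z)\|_{s,\ell^1})^N e^{-\kappa\|i-j\|}$ with the geometric series summable once $C_s\sum_{k\neq 0}|A_{0k}(z)|^s e^{\kappa\|k\|}<1$, which holds for $\tilde{E}_0$ negative enough and gives $\gamma_{s,d}(z)=\kappa\asymp|\Re z|^{1/2}$; the argument is manifestly uniform in $L$ and locally uniform in $z$ over the stated sector. With these two corrections your proposal is essentially the proof in \cite{hkk1}.
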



\section{Local eigenvalue statistics: independent arrays}\label{sec:les-indep-arrays1}
\setcounter{equation}{0}

Minami \cite{min1} realized that localization (in the fractional moment sense) implied that the process $\xi_\omega^L$ may be approximated by a point process constructed from the eigenvalues of Hamiltonians on a smaller scale $\ell$. Since these Hamiltonians are local, the sets of their eigenvalues are independent so that  the corresponding point processes are independent. Consequently, one can apply the well-developed theory of limiting processes of independent arrays of point processes, see, for example \cite[chapter 11]{daley-vere-jones1}.

We define these arrays as follows. For integers $L , \ell$ so that $\ell$ divides $L$, we divide the cube $\Lambda_L$ into subcubes $\Lambda_\ell^p$ of side length $\ell$ centered at points $p \in \Z^d$.
There are $N_L = (L / \ell)^d$ such subcubes and $\Lambda_L = \cup_p \Lambda_\ell^p$, up to a set of measure zero.
We denote by $H_\omega^{\ell,p}$ the local point interaction Hamiltonian restricted to $\Lambda_\ell^p$ with Dirichlet boundary conditions. We compare the eigenvalue statistics associated with $H_\omega^L$ and $H_\omega^\ell := \oplus_p H_\omega^{\ell, p}$ at negative energies.
The resolvent of $H_\omega^L$ is denoted by $R_L(z)$ and of $H_\omega^{\ell,p}$ by $R_\omega^{\ell,p}(z)$, so that the resolvent of $R_\omega^{\ell}(z)$ is written $R^\ell_\omega(z) = \oplus_p R_\omega^{\ell,p}(z)$.

We always choose $L$ so that the boundaries of the cubes $\Lambda_L$, denoted $\partial \Lambda_L$, do not intersect the lattice $\Z^d$ and the distance from $\partial \Lambda_L$ to $\Z^d$ is, say, $\frac{1}{2}$.

The local operators $H_\omega^L$ and $H_\omega^{\ell,p}$ have discrete spectrum.
In addition, the spectra of the local Laplacians are semi-lower bounded and lie in the half-axis $[\Sigma_0, \infty )$, for some $- \infty < \Sigma_0 < 0$, independent of $L$.
We recall that the Wegner and Miniami estimates \eqref{eq:wegner1} and \eqref{eq:minami1}, respectively, are valid for these random Schr\"odinger operators $H_\omega^{\ell,p}$ at negative energies.
We denote by $K_{\ell, p}(z, \omega)$, $K_{L}(z, \omega)$ and $K(z,\omega)$ the matrix Schr\"odinger operator defined in \eqref{eq:kernel1} for Hamiltonians $H_\omega^{\ell,p}$, $H_\omega^L$, and $H_\omega$, respectively.

The local eigenvalue statistics (LES) for each local Hamiltonian $H_\omega^{\ell,p}$, denoted by $\eta_\omega^{\ell,p}$, are independent point processes.
The collection $\{ \eta_\omega^{\ell,p} \}$ forms a \emph{uniformly asymptotically negligible array} ($uana$) of independent random point processes. This means that\mr
\beq\label{eq:uana1}
\lim_{L \rightarrow \infty} \sup_{p=1, \ldots, N_L} \Pp \Big( \eta_\omega^{\ell,p}(I)>\eta\Big) = 0.
\eeq
This follows from the Wegner estimate, Theorem \ref{thm:wegner1}, for the local Hamiltonians (see, for example, \cite{hk1,min1}).
We define the process $\zeta_\omega^L := \sum_p \eta_\omega^{\ell,p}$.

The proof of Theorem \ref{thm:main1} now consists of two main steps. In section \ref{sec:approx-pt-proc1},
we prove that the two local point processes $\xi_\omega^L$ and $\zeta_\omega^L$ have the same limit point showing that
\beq\label{eq:weak-convergence1}
\lim_{L \rightarrow \infty}  \E \{ \xi_\omega^L [f]  - \zeta_\omega^L [f] \} = 0,
\eeq
for appropriate test functions $f$. We will assume this result in this section and, following
\cite[section 6]{cgk2}, we prove that the local point process $\zeta_\omega^L$ associated with the $uana$ converges weakly to a Poisson point process.

In this section, we use arguments, as in \cite{min1} and \cite[chapter 11]{daley-vere-jones1},
adapted to the random delta-interaction model, to prove that the point process $\zeta_\omega^L$ associated with the $uana$ converges weakly to a Poisson point process with intensity measure $n(E_0) ds$, for $E_0 \in (-\infty, -\tilde{E_0}] \cap \Sigma^{\rm CL}$. Standard tightness bounds, using the Wegner estimate, establish the existence of the limiting point processes. We first establish the intensity of the limiting point process in section \ref{subsec:intensity1} following the arguments in \cite{cgk2}. We then use the Minami estimate, Theorem \ref{thm:minami-est1}, to show the nonexistence of double points in section \ref{subsec:no-double-pts1}. These results establish the uniqueness of the limit point and its Poisson nature.

In addition to the localization estimate on the expectation of small moments of the resolvents, Theorem \ref{prop:loc1}, the exponential decay bounds on the free Green's functions $G_0(x,y;z)$ and $G_0^L(x,y;z)$ play an essential role in this and the following section. At a negative energy $E_0 < 0$, the Green's functions exhibit the following behavior. There exists a constant $C_0 > 0$ and $c(E_0) = a_0 \sqrt{|E_0|}$, with $a_0$ independent of $L$ and $E_0$,  so that for $z \in \C \backslash [0, \infty)$, and $\Re z \in [E_0 - \epsilon, E_0 + \epsilon]$, for $\epsilon \geq 0$ small,
\beq\label{eq:green-exp-decay1}
| G_0^L (x,y;z)| , ~~|G_0(x,y;z)| \leq C_0 \frac{e^{- c(E_0)\|x-y\|}}{\|x-y\|} .
\eeq

\subsection{Intensity of the limiting point process}\label{subsec:intensity1}

We compute $\lim_{L \rightarrow \infty} \E \{  \zeta_\omega^L (I) \}$ that determines the intensity of the limiting process for the sequence $\zeta_\omega^L$. We proved in \cite[Corollary 7.5]{hkk1} that the density of states $n(E)$ exists for the random point interaction Schr\"odinger operators. Furthermore, the DOS
belongs to $L^1_{loc}(\R)$. These results follow from the Lipschitz continuity of the IDS proved in \cite{hkk1}.
The main result of this section is the following proposition.

\begin{prop}\label{prop:one-pt-est1}
For the $uana$ $\{ \eta_\omega^{\ell,p} \}$, any $E_0 \in (-\infty, \tilde{E_0}] \cap \Sigma^{\rm CL}$ for which $n(E_0) \neq 0$, and any 
interval $I \subset \R$, we have
\beq\label{eq:one-pt-est1}
\lim_{L \rightarrow \infty} \sum_{p=1}^{N_L} \Pp \{ \eta_\omega^{\ell,p}(I) = 1 \} = \lim_{L \rightarrow \infty} \E \{  \zeta_\omega^L (I)  \} =  n(E_0)|I| .
\eeq
\end{prop}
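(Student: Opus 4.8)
The plan is to exploit the \emph{uana} structure, which reduces the one-point estimate to a sum of probabilities over the $N_L$ subcubes, and to identify that sum with an expectation that converges to the DOS by the Lipschitz continuity of the IDS. First I would note the elementary inequality
\[
0 \;\le\; \E\{\zeta_\omega^L(I)\} - \sum_{p=1}^{N_L}\Pp\{\eta_\omega^{\ell,p}(I)=1\} \;=\; \sum_{p=1}^{N_L}\sum_{k\ge 2} (k-1)\,\Pp\{\eta_\omega^{\ell,p}(I)=k\} \;\le\; \sum_{p=1}^{N_L}\E\{\eta_\omega^{\ell,p}(I)(\eta_\omega^{\ell,p}(I)-1)\},
\]
so that the two limits in \eqref{eq:one-pt-est1} agree once the right-hand side is shown to vanish. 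By the Minami estimate, Theorem \ref{thm:minami-est1}, applied to each local Hamiltonian $H_\omega^{\ell,p}$ (valid at negative energies), each summand is bounded by $C_M |\Lambda_\ell|^2 (\eta_L)^2$, where $\eta_L = |I|/(2|\Lambda_L|)$ is the half-width of the rescaled interval $I_L := E_0 + I/|\Lambda_L|$ in the unscaled spectral variable. Summing over the $N_L = |\Lambda_L|/|\Lambda_\ell|$ subcubes gives a bound of order $N_L |\Lambda_\ell|^2 |\Lambda_L|^{-2} = |\Lambda_\ell|/|\Lambda_L| = (\ell/L)^d \to 0$ as $L\to\infty$, since $\ell = L^\alpha$ with $\alpha < 1$. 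This disposes of the discrepancy term.

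It therefore remains to compute $\lim_{L\to\infty}\E\{\zeta_\omega^L(I)\} = \lim_{L\to\infty}\sum_p \E\{\eta_\omega^{\ell,p}(I)\}$. By the covariance of the model under lattice translations and the fact that each $\Lambda_\ell^p$ is a translate of $\Lambda_\ell$ (up to the harmless half-integer offset of $\partial\Lambda_L$ from $\Z^d$), every term $\E\{\eta_\omega^{\ell,p}(I)\}$ equals $\E\{\mathrm{Tr}\,E_{H_\omega^\ell}(I_L)\}$ for a single fixed cube $\Lambda_\ell$, so
\[
\E\{\zeta_\omega^L(I)\} \;=\; N_L\,\E\{\mathrm{Tr}\,E_{H_\omega^\ell}(I_L)\} \;=\; \frac{|\Lambda_L|}{|\Lambda_\ell|}\,\E\{\mathrm{Tr}\,E_{H_\omega^\ell}(I_L)\}.
\]
Now $\E\{\mathrm{Tr}\,E_{H_\omega^\ell}(I_L)\}/|\Lambda_\ell|$ is precisely the (finite-volume) expected number of eigenvalues per unit volume in the window $I_L$ of length $|I|/|\Lambda_L|$ centered at $E_0$; writing $N_\ell(E) := \E\{\mathrm{Tr}\,E_{H_\omega^\ell}((-\infty,E])\}/|\Lambda_\ell|$ for the finite-volume IDS, this equals $N_\ell(E_0 + b/|\Lambda_L|) - N_\ell(E_0 - b'/|\Lambda_L|)$ for the appropriate endpoints of $I$. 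The point is to pass $L\to\infty$ inside, using that $N_\ell \to N$ (the IDS) and that $N$ is Lipschitz with $N'(E) = n(E)$ a.e.\ (Corollary 7.5 of \cite{hkk1}); one should order the limits so that $\ell = L^\alpha$ grows with $L$. Concretely, by the Wegner estimate \eqref{eq:wegner1} the finite-volume densities $N_\ell(E_0+t/|\Lambda_L|)-N_\ell(E_0)$ are uniformly Lipschitz in $t$ with constant $C_W$, so a standard $\epsilon/3$ argument — combining the convergence $N_\ell\to N$ at the fixed energy $E_0$, the uniform Lipschitz bound, and Lebesgue differentiation of $N$ at $E_0$ (legitimate for a.e.\ $E_0$, and in particular wherever $n(E_0)$ is defined and we are told $n(E_0)\neq 0$) — yields
\[
\lim_{L\to\infty}\frac{|\Lambda_L|}{|\Lambda_\ell|}\,\E\{\mathrm{Tr}\,E_{H_\omega^\ell}(I_L)\} \;=\; |I|\,\lim_{E\to E_0}\frac{N(E)-N(E_0)}{E-E_0} \;=\; n(E_0)\,|I|.
\]

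The main obstacle I anticipate is the interchange of the two limits $\ell\to\infty$ and the shrinking-window limit in the last displayed computation: one must show the finite-volume IDS $N_\ell$ converges to $N$ \emph{uniformly enough near $E_0$} (or, more carefully, that the discretization scale $\ell = L^\alpha$ is fast enough relative to the window width $|I|/|\Lambda_L|$) for the differentiation and the limit to commute. This is where the Lipschitz continuity of the IDS from the Wegner estimate, uniform in the volume, does the essential work: it converts the pointwise convergence $N_\ell(E_0)\to N(E_0)$ into the required control on difference quotients. A secondary technical point is justifying that the half-integer offset between $\partial\Lambda_L$ and $\Z^d$, and the Dirichlet boundary conditions on the $\Lambda_\ell^p$, do not affect the bulk limit — this is standard (boundary eigenvalues contribute $O(\ell^{d-1})$ per cube, hence $O((\ell/L)\cdot\ell^{-1}) = o(1)$ after normalization) and follows the arguments in \cite{cgk2} and \cite{min1}.
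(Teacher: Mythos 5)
Your treatment of the first equality in \eqref{eq:one-pt-est1} is correct and matches the paper: the Minami estimate applied at scale $\ell$ to each $H_\omega^{\ell,p}$, summed over $p$, bounds $\sum_p \E\{\eta_\omega^{\ell,p}(I)(\eta_\omega^{\ell,p}(I)-1)\}$ by $O((\ell/L)^d)\to 0$.

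The second part, however, has a genuine gap, and it is exactly at the point you flag as the ``main obstacle.'' You reduce $\E\{\zeta_\omega^L(I)\}$ to $|\Lambda_L|\bigl[N_\ell(b_L)-N_\ell(a_L)\bigr]$ with $b_L-a_L=|I|/|\Lambda_L|$, and propose to compare $N_\ell$ with the infinite-volume IDS $N$ via an $\epsilon/3$ argument using the Wegner--Lipschitz bound and pointwise convergence $N_\ell\to N$. This does not close: the discrepancy $|N_\ell(E)-N(E)|$ is a boundary effect of order $\ell^{-1}$ (roughly $O(\ell^{d-1})$ boundary-sensitive eigenvalues per volume $\ell^d$), so the relevant error in your step 1 is of order $|\Lambda_L|\cdot\ell^{-1}=L^d/L^\alpha=L^{d-\alpha}\to\infty$. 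The uniform Lipschitz bound from Wegner controls the modulus of continuity of each $N_\ell$ individually, but gives no control on $|N_\ell-N|$ at the scale $1/|\Lambda_L|$. Since $\ell=L^\alpha$ with $\alpha<1\le d$, the scale $\ell$ grows far too slowly relative to the shrinking window for this comparison to work.

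The paper avoids this trap by a different decomposition. It introduces the measure $\Theta_\omega^\Lambda(B):=\mathrm{Tr}\{\chi_\Lambda E_\omega(E_0+B/|\Lambda|)\chi_\Lambda\}$, built from the \emph{infinite-volume} spectral projection $E_\omega$. Translation covariance then gives the exact identity $\E\{\Theta_\omega^\Lambda(B)\}=|\Lambda|\,\nu(E_0+B/|\Lambda|)$ with $\nu$ the DOS measure, so no finite-volume IDS approximation error ever appears and Lebesgue differentiation at $E_0$ immediately yields $n(E_0)|I|$. The technical work is then transferred to Proposition \ref{prop:theta-approx1}, which shows $\E\{\xi_\omega^L[f]-\Theta_\omega^L[f]\}\to 0$ for $f\in\mathcal{A}$ via Green's-function and localization estimates; combined with Proposition \ref{prop:local-uana-conv1}, this identifies the limit of $\E\{\zeta_\omega^L[f]\}$ with that of $\E\{\Theta_\omega^L[f]\}$. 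Your argument has no analogue of Proposition \ref{prop:theta-approx1}, which is where the real content lies.
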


The first equality in Proposition \ref{prop:one-pt-est1} follows from the Minami estimate as in Proposition \ref{prop:two-pt-est1}.
Since the model is on $\R^d$, we follow the approach of \cite[section 6]{cgk1} in the proof of Proposition \ref{prop:one-pt-est1}. We define another measure $\Theta_\omega^\Lambda (\cdot)$ using a cut-off of the spectral projections of the infinite-volume Hamiltonian $H_\omega$ in the region of complete localization. Let $\chi_\Lambda$ be the characteristic function for the cube $\Lambda$. For any Borel $B \subset \R$, let $E_\omega (B)$ be the spectral projection for $H_\omega$ and $B$. We then define the measure $\Theta_\omega^\Lambda$ as
\beq\label{eq:theta1}
B \subset \R \rightarrow \Theta_\omega^\Lambda (B) :=  {\rm Tr} ~ \left\{ \chi_\Lambda E_\omega \left( E_0 + \frac{B}{|\Lambda|} \right) \chi_\Lambda \right\} .
\eeq
Because of translation covariance, it follows that
\beq\label{eq:theta-dos1}
\E \{ \Theta_\omega^\Lambda (B) \} = | \Lambda | \nu \left( E_0 + \frac{B}{|\Lambda|} \right) ,
\eeq
where $\nu$ is the density of states measure given by
\beq\label{eq:dos-m1}
\nu(B) = \E \{ {\rm Tr} \chi_0 E_\omega(B) \chi_0 \},
\eeq
where $\chi_0$ is the projection onto the unit cube in $\R^d$ centered at the origin.

From the proof of \cite[Lemma 1]{min1}, it suffices to prove \eqref{eq:one-pt-est1} for functions
in $f \in \mathcal{A}$ since they approximate characteristic functions on 
intervals $I \subset \R$.
The set $\mathcal{A}$ consists of all functions of the form
\beq\label{eq:special-class1}
f (x) = \sum_{i=1}^n a_i f_{\zeta_i}(x),
\eeq
for $n \geq 1$ and $a_i > 0$. For $\zeta = \sigma + i \tau$, we define $f_\zeta (x)$ to be
\beq\label{eq:test-fnc2}
f_\zeta (x) := \frac{\tau}{(x - \sigma)^2 + \tau^2}.
\eeq
In \eqref{eq:special-class1}, we take $\tau > 0$ and $\sigma_i \in \R$, for $i = 1, \ldots, n$.
The basic result of this section is the following proposition. The proof of Proposition \ref{prop:theta-approx1} is similar to the proof in section \ref{sec:approx-pt-proc1} of \eqref{eq:weak-convergence1}.

\begin{prop}\label{prop:theta-approx1}
For any function $f \in \mathcal{A}$, the local point process $\xi_\omega^{L}[f]$ associated with the local Hamiltonian $H_\omega^L$ and defined in \eqref{eq:les1} satisfies
\beq\label{eq:theta-approx1}
\lim_{L \rightarrow \infty} \E \{  \xi_\omega^L [f] - \Theta_\omega^L [f] \}  = 0 .
\eeq
\end{prop}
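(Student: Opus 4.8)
The plan is to compare the two point processes $\xi_\omega^L$ and $\Theta_\omega^L$ tested against a fixed $f\in\mathcal{A}$, and show that the difference of their expectations vanishes as $L\to\infty$. Since $\mathcal{A}$ consists of positive combinations of the Poisson-kernel functions $f_\zeta$ of \eqref{eq:test-fnc2}, by linearity it suffices to handle a single $f=f_\zeta$ with $\zeta=\sigma+i\tau$, $\tau>0$. The key observation is that $f_\zeta$ is (up to a constant) the imaginary part of the resolvent: $f_\zeta(x)=\Im\,(x-\zeta)^{-1}$. Hence $\xi_\omega^L[f_\zeta]$ is expressible via $\Im\,\operatorname{Tr}\!\big(H_\omega^L - E_0 - \zeta/|\Lambda_L|\big)^{-1}$, restricted to the relevant window of negative spectrum, while $\Theta_\omega^L[f_\zeta]$ is the same expression with $H_\omega^L$ replaced by $H_\omega$ sandwiched between the cutoffs $\chi_{\Lambda_L}$, i.e. $\Im\,\operatorname{Tr}\big\{\chi_{\Lambda_L}(H_\omega - E_0-\zeta/|\Lambda_L|)^{-1}\chi_{\Lambda_L}\big\}$. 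So the whole matter reduces to controlling, in trace norm, the difference between the finite-volume resolvent $R_\omega^L(w)$ and the localized infinite-volume resolvent $\chi_{\Lambda_L}R_\omega(w)\chi_{\Lambda_L}$ at spectral parameter $w=E_0+\zeta/|\Lambda_L|$, which lies at distance $O(1/|\Lambda_L|)$ above a point $E_0$ deep in the localization regime.

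The core estimate is a geometric resolvent / Combes--Thomas type comparison exploiting the explicit Green's-function structure \eqref{eq:finite-green1}, \eqref{eq:green2}. Both $R_\omega^L(w)$ and $R_\omega(w)$ are built from the free Green's functions $G_0^L$, $G_0$ and the inverse matrices $K_L(w,\omega)^{-1}$, $K(w,\omega)^{-1}$. The free Green's functions agree up to terms supported near $\partial\Lambda_L$ and, by \eqref{eq:green-exp-decay1}, decay exponentially with rate $c(E_0)=a_0\sqrt{|E_0|}$; similarly, $K_L(w,\omega)^{-1}$ and $K(w,\omega)^{-1}$ differ by boundary contributions, and by Proposition~\ref{prop:loc1} both have off-diagonal entries whose $s$-th moments decay exponentially in $\|i-j\|$ at negative energies, uniformly in $L$ and locally uniformly in $w$ (so in particular uniformly for $w=E_0+\zeta/|\Lambda_L|$ as $L\to\infty$, since these $w$ accumulate at $E_0$). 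Combining these, one writes the difference $R_\omega^L(w)-\chi_{\Lambda_L}R_\omega(w)\chi_{\Lambda_L}$ as a sum of terms each of which involves at least one factor $G_0(\cdot,j;w)$ or $K^{-1}_{\cdot j}$ connecting a point near $\partial\Lambda_L$ to the bulk; taking expectation and using the fractional-moment exponential decay together with $\operatorname{Tr}|A|\le \|A\|_{\mathrm{HS}}\cdot(\text{rank or dimension factor})$, one bounds $\E\{\|R_\omega^L(w)-\chi_{\Lambda_L}R_\omega(w)\chi_{\Lambda_L}\|_1\}$ by (surface area $|\partial\Lambda_L|\sim L^{d-1}$) times (bulk sum $\sim L^d$) times an exponentially small factor $e^{-c(E_0)\cdot\mathrm{dist}}$ summed over boundary-to-bulk distances, which is $O(L^{d-1})\cdot O(1)$. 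Multiplying by the prefactor $\tau/|\Lambda_L|$ coming from $f_\zeta(|\Lambda_L|(\cdot-E_0))$ and the fact that $\Im$ introduces no growth, the net bound is $O(L^{d-1}/L^d)=O(L^{-1})\to 0$. One must also insert a spectral cutoff restricting to the negative part of the spectrum (where traces are finite and the comparison makes sense), discarding the positive-spectrum contribution, which is harmless since $\zeta/|\Lambda_L|\to 0$ and $E_0<0$ is a fixed distance from $0$; the tail of $f_\zeta$ away from $E_0$ on the macroscopic scale contributes $O(\tau\,|\Lambda_L|^{-1})$ after using the Wegner bound \eqref{eq:wegner1} to count eigenvalues of $H_\omega^L$ in bounded energy windows.

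The main obstacle is the uniform-in-$L$ trace-norm control of the boundary correction: the naive bound on $\operatorname{Tr}|R_\omega^L(w)-\chi_{\Lambda_L}R_\omega(w)\chi_{\Lambda_L}|$ picks up a full volume factor $|\Lambda_L|$ from the dimension of the space (or the number of sites $N\sim L^d$), and one needs to show that this is beaten by the exponential boundary decay so that the result survives multiplication by only $|\Lambda_L|^{-1}$ — i.e., one genuinely needs to exhibit a surplus surface-versus-volume gain of a positive power of $L$, not merely an $O(1)$ bound. This is where the explicit exponential decay of $G_0,G_0^L$ in \eqref{eq:green-exp-decay1}, combined with the $L$-uniform fractional-moment decay of $K_\Lambda^{-1}$ in Proposition~\ref{prop:loc1}, is indispensable: it lets one organize the Hilbert--Schmidt estimate as $\big(\sum_{x\in\Lambda_L}\sum_{j\ \mathrm{near}\ \partial\Lambda_L} e^{-c\|x-j\|}\big)^{1/2}\big(\text{similar}\big)^{1/2}$, and the double sum is controlled by $|\partial\Lambda_L|$ times a convergent lattice sum. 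A secondary technical point is interchanging the $\epsilon\downarrow 0$ limit implicit in Stone's formula (or, since $\tau>0$ here, simply evaluating the resolvent at the complex point $w$ directly, which is cleaner and avoids that issue) with the expectation; choosing the analytic test functions $f_\zeta$ precisely sidesteps the boundary-value subtlety since $\Im w = \tau/|\Lambda_L|>0$ keeps us off the real axis throughout, as in \cite{min1,cgk1}.
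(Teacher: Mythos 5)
Your overall strategy — rewrite $\xi_\omega^L[f_\zeta]$ and $\Theta_\omega^L[f_\zeta]$ as $\frac{1}{|\Lambda_L|}\,\mathrm{Tr}\,\Im$ of resolvents at the complex point $w=E_0+\zeta/|\Lambda_L|$, then show the difference of the localized resolvents is a boundary effect that beats the $1/|\Lambda_L|$ prefactor by a power of $L$ — is exactly the paper's approach. But the proposal has several genuine gaps in the execution that would prevent the surface-versus-volume gain from actually going through.

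First, the inequality ``$\operatorname{Tr}|A|\le \|A\|_{\mathrm{HS}}\cdot(\text{rank or dimension factor})$'' does not exist; the operators here live on $L^2(\Lambda_L)$ and are neither finite rank nor on a finite-dimensional space, and $\operatorname{Tr}|A|\le\|A\|_{\mathrm{HS}}$ is simply false. The device that actually works, and that the paper uses, is to factor the boundary correction as a product of two Hilbert--Schmidt operators and apply $\|BC\|_1\le\|B\|_2\|C\|_2$: the paper introduces the boundary operator $\Gamma_L$ via Green's identity so that $R_0(z)-R_0^{\Lambda_L}(z)=R_0(z)\varphi_L\Gamma_L R_0^{\Lambda_L}(z)$ and estimates each Hilbert--Schmidt factor separately (cf.\ \eqref{eq:theta-approx6}--\eqref{eq:theta-approx5a}). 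Without this factorization, you have no route to a trace-norm bound.

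Second, you never address the singularity of the integral kernels at the lattice points. For $d=3$ the kernel of $R_0 P_k$ behaves like $\|x-k\|^{-1}$, and the diagonal of a term like $G_0(x,k;z)[K^{-1}]_{km}G_0(m,x;z)$ is singular at $x=k$ and $x=m$; one cannot simply ``compute the trace as the integral over the diagonal'' without a justification that the kernel is continuous there, which it is not. The paper handles this by excising small balls $M_{L,\epsilon}=\cup_j B(j,\epsilon)$ and choosing $\epsilon$ to vanish with $L$ so that the singular contribution is $o(1)$ (step 3 and Lemma \ref{lemma:second-term-singular1}); your sketch omits this entirely.

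Third, the localization input Proposition \ref{prop:loc1} bounds only fractional moments $\E\{|[K^{-1}]_{ij}|^s\}$, $s\in(0,1)$, whereas your trace argument implicitly uses first moments. Reconciling the two requires the interpolation $\E[X]\le\E[X^{2(1-s)}]^{1/2}\E[X^{2s}]^{1/2}$ (inequality \eqref{eq:expectation1}), which in turn forces you to absorb a factor $|\Lambda_L|^{1-s}$ from the a priori bound $|[K^{-1}]_{ij}|\le |\Lambda_L|/\tau$. The paper bookkeeps this carefully (e.g.\ the $\frac{1}{|\Lambda|^s}$ prefactors and the constraint on $s$ in step 6 and in Lemma \ref{lemma:second-term-nonsingular1}); in your estimate the stray volume factor is unaccounted for, and without it the claimed $O(L^{-1})$ bound does not follow. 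Finally, the assertion that $K_L^{-1}$ and $K^{-1}$ ``differ by boundary contributions'' needs to be substantiated by the second resolvent identity $K_L^{-1}-K^{-1}=\sum_{n,r} M^L_{nr}K_L^{-1}\cdot K^{-1}$ with the explicit observation (eqs.\ \eqref{eq:sec-resolv-K0}--\eqref{eq:M-decay2}) that $M^L_{nr}$ is nonzero only when $r\notin\tilde\Lambda_L$ or is a Green's-function difference decaying exponentially away from $\partial\Lambda_L$; this structural fact, not merely the decay of each inverse separately, is what produces the surface gain.
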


\begin{proof}
1. It suffices to establish \eqref{eq:theta-approx1} for functions $f_\zeta \in \mathcal{A}$, with $\zeta = \sigma + i \tau$, for $\tau > 0$ and $\sigma \in \R$, so we will prove that
\mr\beq\label{eq:theta-approx2}
\E \{ \xi_\omega^\Lambda [f_\zeta]- \Theta_\omega^\Lambda [f_\zeta] \} = \E \{ {\rm Tr} f_{\zeta}(|\Lambda| (H_\omega^\Lambda - E_0)) -
{\rm Tr} \chi_{\Lambda} f_{\zeta}(|\Lambda| (H_\omega - E_0)) \chi_\Lambda \}
\eeq
vanishes as $L \rightarrow \infty$. The term on the right in \eqref{eq:theta-approx2} may be reduced to the Green's function  at $z= E_0 + \frac{\sigma+ i \tau}{|\Lambda|}$. Since $E_0 < 0$, we have ${\rm dist}(z, [0, \infty))
\geq | |E_0| - \frac{|\sigma|}{|\Lambda_L|}|$, so $\| R_0^X (z)\|$ is bounded independent of $|\Lambda_L|$.
Using the explicit form of $f_\zeta$ in \eqref{eq:special-class1}, with $\zeta = \sigma+ i \tau$, we obtain
\bea\label{eq:theta-approx3}
\lefteqn{ \E \{ {\rm Tr} ~[ f(|\Lambda| (H_\omega^\Lambda - E_0)) -  \chi_\Lambda  f(|\Lambda|(H_\omega - E_0))\chi_\Lambda  ] \} } \nonumber \\
&=& \frac{1}{|\Lambda|} \E ~ \{ {\rm Tr} \Im \chi_{{\Lambda}_L}  [ R_\omega^{\Lambda_L}  (z) -   R_\omega  (z)  ] \chi_{{\Lambda}_L} \}
           \nonumber \\
 &=& \frac{1}{|\Lambda|} {\rm Tr} \Im  \chi_{{\Lambda}_L} [ R_0  (z) -   R_0^{\Lambda_L}  (z)  ] \chi_{{\Lambda}_L}  \nonumber \\
 & &  + \frac{1}{|\Lambda|} \sum_{(k,m) \in (\Z^3)^2}  \E\Big\{ {\rm Tr} ~\Im \Big( \chi_{{\Lambda}_L}  [ R_0^{\Lambda_L}(z) P_k [K_L^{-1} (z, \omega)] P_m R_0^{\Lambda_L}(z)   \nonumber  \\
 & &  -  R_0 (z) P_k [K^{-1} (z, \omega)]_{km} P_m R_0 (z) ] \chi_{{\Lambda}_L}\Big) \Big\}  ,
\eea
where $[K_L^{-1}(z,\omega)]_{km} = 0$ if $k ~{\rm or}~ m \in \Z^3 \backslash \tilde{\Lambda}_L$ and $P_j$ stands for evaluating the operator kernels (resp. the matrices) to the left and to the right at the point $j$.
We set
\beq\label{eq:theta-AL}
A_L := \frac{1}{|\Lambda|} {\rm Tr} \Im \chi_{{\Lambda}_L} \left[ R_0^{\Lambda_L}  (z) -   R_0  (z)  \right] \chi_{{\Lambda}_L} ,
\eeq
which is deterministic, and
\bea\label{eq:theta-BL}
B_L & := & \frac{1}{|\Lambda|} \sum_{(k,m) \in {(\Z^3)^2}}  \E \left\{ {\rm Tr} \Im \chi_{{\Lambda}_L} [ R_0^{\Lambda_L}(z) P_k [K_L^{-1}(z,\omega) ]_{km} P_m R_0^{\Lambda_L}(z) \right.  \nonumber \\
 & &  \left. - R_0 (z) P_k [K^{-1}(z,\omega) ]_{km} P_m R_0 (z) ] \chi_{{\Lambda}_L} \right\} .
\eea

\noindent
2. To control the deterministic term $A_L$ in \eqref{eq:theta-AL}, we introduce a cut-off function on the subset of $\Lambda_L$ defined by ${\Lambda}^\prime_L := \{ x \in \Lambda_L ~|~ {\rm dist} ~(x, \partial \Lambda_L ) > C \log L \}$. We denote by $\partial {\Lambda}^\prime_L := \Lambda_L \backslash {\Lambda}^\prime_L$.
We write characteristic functions on these sets as
$\chi_{\Lambda_L} =  \chi_{{\Lambda}^\prime_L}  + \chi_{\partial {\Lambda}^\prime_L}$.
Inserting this decomposition into $A_L$ in \eqref{eq:theta-approx3}, we obtain two terms, one localized in ${\Lambda}^\prime_L$:
\beq\label{eq:theta-approx4}
\frac{1}{|\Lambda_L|}  \left| {\rm Tr} \chi_{{\Lambda}^\prime_L} \left( R_0  (z)
-   R_0^{\Lambda_L}  (z)  \right) \right| ,
\eeq
and one localized in the boundary $\partial {\Lambda}^\prime_L$:
\beq\label{eq:theta-approx5}
\frac{1}{|\Lambda_L|} \left| {\rm Tr} \chi_{\partial {\Lambda}^\prime_L} \left( R_0  (z)
-   R_0^{\Lambda_L}  (z)  \right) \right| .
\eeq
We estimate \eqref{eq:theta-approx4} using the fact that $(H_0 - H_0^L)g = 0$ if $g$ is a smooth function supported in $\Lambda_L$ away from $\partial \Lambda_L$. That is, the difference between $H_0$ and $H_0^L$  is localized near $\partial \Lambda_L$ whereas the trace in  \eqref{eq:theta-approx4}  is localized to ${\Lambda}^\prime_L$ and the Green's functions decay exponentially \eqref{eq:green-exp-decay1} since $\Re z < 0$. We estimate \eqref{eq:theta-approx5} using the fact that $|\partial {\Lambda}^\prime_L|$ is small relative to $|\Lambda_L|$.
We introduce a boundary operator $\Gamma_L$ described by an application of 
 Green's Theorem so that $\Gamma_L R_0^L(z)$ is the restriction of $\nu \cdot \nabla R_0^L(z)$ to $\partial \Lambda_L$, where $\nu$ is the outward normal unit vector.
Let $\varphi_L$ be a smooth function localized near $\partial \Lambda_L$ so that $\Gamma_L = \varphi_L \Gamma_L$.
Writing $Y_L$ for ${\Lambda}^\prime_L$ or $\partial {\Lambda}^\prime_L$, we rewrite the traces in \eqref{eq:theta-approx4} and in \eqref{eq:theta-approx5} using this resolvent formula in a compact notation:
\bea\label{eq:theta-approx6}
| {\rm Tr} ~\{ \chi_{Y_L} (  R_0 (z) - R_0^{\Lambda_L}(z)) \chi_{Y_L}  \} | & = & | {\rm Tr} ~\{ \chi_{Y_L}  R_0^{}(z) \varphi_L \Gamma_L R_0^{\Lambda_L}(z)  \chi_{{Y}_L} \} | \nonumber \\
 & \leq & \| \chi_{Y_L}  R_0^{}(z) \varphi_L \Gamma_L R_0^{\Lambda_L}(z) \chi_{Y_L} \|_1 \nonumber \\
  & \leq & \| \chi_{Y_L}  R_0^{}(z) \varphi_{L} \|_2
 \| \varphi_L  \Gamma_L R_0^{\Lambda_L}(z) \chi_{Y_L}\|_2 .
 \eea
For \eqref{eq:theta-approx4}, we have $Y_L = {\Lambda}^\prime_L$ and note that
\beq\label{eq:theta-approx6a}
 \| \chi_{{\Lambda}^\prime_L}  R_0^{}(z) \varphi_{L} \|_2 \leq C [L^2 \log L]^\frac{1}{2},
 \eeq
 using the exponential decay of the Green's function \eqref{eq:green-exp-decay1} to control the integral over ${\Lambda}^\prime_L$.
 A similar estimate holds for the second term since the gradient of the Green's function is integrable and decays exponentially. As a consequence, we have the bound
  \beq\label{eq:theta-approx4a}
\frac{1}{|\Lambda_L|}  \left| {\rm Tr} \chi_{{\Lambda}^\prime_L} \left( R_0  (z)
-   R_0^{\Lambda_L}  (z)  \right) \right| \leq C_0 \frac{\log L}{L} .
\eeq
Concerning \eqref{eq:theta-approx5} with $Y_L = \partial{\Lambda}^\prime_L$, we have the bound
\beq\label{eq:theta-approx6b}
 \| \chi_{\partial {\Lambda}^\prime_L}  R_0^{}(z) \varphi_{L} \|_2 \leq C [L^2 \log L]^\frac{1}{2},
 \eeq
and similarly for the second factor on the right in \eqref{eq:theta-approx6}, so that
\beq\label{eq:theta-approx5a}
\frac{1}{|\Lambda_L|}  \left| {\rm Tr} \chi_{\partial {\Lambda}^\prime_L} \left( R_0  (z)
-   R_0^{\Lambda_L}  (z)  \right) \right| \leq C_0 \frac{\log L}{L} .
\eeq
Bounds \eqref{eq:theta-approx4a} and \eqref{eq:theta-approx5a} show that $A_L$ vanishes as $L \rightarrow \infty$.

\noindent
3. The term $B_L$ in \eqref{eq:theta-BL} consists of operators with singular, but square integrable, integral kernels with singularities located at lattice points in $\Lambda_L$. We estimate this term as in section \ref{subsec:second-term1} by decomposing $\Lambda_L = M_{L,\epsilon} \cup M_{L, \epsilon}^c$, where we define $M_{L, \epsilon} := \left( \cup_{j \in \tilde{\Lambda}_L} B(j,\epsilon) \right)$, with $B(j, \epsilon)$ denoting the ball centered at $j$ of radius $\epsilon$.  As in Lemma \ref{lemma:second-term-singular1}, the contribution to the trace in $B_L$ term coming from $M_{L, \epsilon}$ vanishes in the limit $L \rightarrow \infty$, provided $\epsilon = o (|\Lambda_L|^{-2})$. Unlike in section \ref{sec:approx-pt-proc1}, we do not take the $s^{th}$-power and, consequently, we do not use the localization bound on $\E \{[ K_L^{-1}(z,\omega)]_{km}|^s\}$,  but the bounds following from spectral averaging in \eqref{eq:bdd-Kexpectation1}.  We denote the nonsingular term of $B_L$, obtained by integrating over $\Lambda_L \cap M_{L, \epsilon}^c$, by $B_L^c$. We first separate the sum over $(k,m) \in (\Z^3)^2$ into two terms: A main term for which $(k,m) \in (\tilde{\Lambda}_L)^2$, denoted by $\tilde{B}_L^c$,  and the rest for which at least $k$ or $m$ is not in $\tilde{\Lambda}_L$. The distinguishing feature of the latter terms is the absence of the kernel $[K_L^{-1}(z,\omega)]_{km}$. We write $B_L^c =  B_L^{(1)} +  B_L^{(2)} + B_L^{(3)}$, where
\beq\label{eq:theta-BL1}
B_L^{(1)} := \frac{1}{|\Lambda|} \sum_{(k,m) \in {(\Z^3 \backslash \tilde{\Lambda}_L)^2}}  \E \left\{ [K^{-1}(z, \omega )]_{km} \right\} \int_{\Lambda_L \cap  M_{L, \epsilon}^c}  ~  | G_0^\Lambda (x,k;z)| | G_0 (x,m;z)| ~d^3x,
\eeq
and a term of the form
\beq\label{eq:theta-BL2}
B_L^{(2)} := \frac{1}{|\Lambda|} \sum_{\stackrel{m \in {\Z^3 \backslash \tilde{\Lambda}_L}}{k \in \tilde{\Lambda}_L}}   \E \left\{ [K^{-1}(z, \omega )]_{km} \right\} \int_{\Lambda_L \cap  M_{L, \epsilon}^c}  ~  | G_0^\Lambda (x,k;z)| | G_0 (x,m;z)| ~d^3x .
\eeq
Note that if $k\in \Z^3 \backslash \tilde{\Lambda}_L$, then $| G_0^\Lambda (x,k;z)|=0$ so that $B_L^{(1)} = 0$.
We use the spectral averaging result \cite[(5.7)]{hkk1} that implies
\beq\label{eq:bdd-Kexpectation1}
\E \left\{ [K_L^{-1}(z, \omega )]_{km} \right\}, ~ \E \left\{ [K^{-1}(z, \omega )]_{km} \right\}  \leq D(E_0),
\eeq
uniform in compact subsets of $\C \backslash [0, \infty)$.

\noindent
4. To estimate $B_L^{(2)}$, we have $\| x-m \| > \frac{1}{2}$, so, using \eqref{eq:bdd-Kexpectation1}, we have the bound
\beq\label{eq:theta-BL2-1}
B_L^{(2)} \leq  \frac{D(E_0)}{|\Lambda|} \sum_{\stackrel{m \in {\Z^3 \backslash \tilde{\Lambda}_L}}{k \in \tilde{\Lambda}_L}}   \int_{\Lambda_L \cap  M_{L, \epsilon}^c}  ~\frac{e^{- c(E_0) \|x-k\|} }{\|x-k\|} e^{- c(E_0) \|x-m\|} ~d^3x .
\eeq
We divide the region $\Lambda$ into $\Lambda_L^\prime := \{ x \in \Lambda_L ~|~  {\rm dist}(x, \partial \Lambda_L) > C\log L \}$, and the boundary $\partial \Lambda_L^\prime$. It is easy to see that contribution to $B_L^{(2)}$
from the integral over $\Lambda_L^\prime$ decays like $L^{- C c(E_0) }$. As for the
the contribution from the integral over $\partial \Lambda_L^\prime$,
it follows from \eqref{eq:theta-BL2-1} that
\bea\label{eq:theta-BL2-2}
\lefteqn{ \frac{D(E_0)}{|\Lambda|} \sum_{k \in \tilde{\Lambda}_L}   \int_{\partial \Lambda_L^\prime \cap  M_{L, \epsilon}^c}  ~\frac{e^{- c(E_0) \|x-k\|} }{\|x-k\|} ~\sum_ {m \in {\Z^3 \backslash \tilde{\Lambda}_L}} e^{- c(E_0) \|x-m\|} ~d^3x} \nonumber \\
  &\leq & \frac{D(E_0)}{|\Lambda|} \sum_{k \in \tilde{\Lambda}_L}   \int_{\partial \Lambda_L^\prime \cap  M_{L, \epsilon}^c} ~ \frac{e^{- c(E_0) \|x-k\|} }{\|x-k\|} ~d^3 x \nonumber \\
  & \leq & \frac{D(E_0)\log L}{L},
      \eea
as follows from dividing the $k$-sum into $k \in \partial \Lambda_L^\prime$ and $k \in {\tilde \Lambda}_L^\prime$.

\noindent
5. It remains to estimate the main term  $B_L^{(3)}$ in ${B}_L^c$ involving the sum $(k,m) \in \tilde{\Lambda}_L^2$.
Since the integral is over $\Lambda_L \cap M_{L, \epsilon}^c$, the kernels are continuous so the trace in \eqref{eq:theta-BL} may be written as
\bea\label{eq:theta-approx10}
{B}_L^{(3)}  & : = & \frac{1}{|\Lambda|}
  \sum_{(k,m) \in \tilde{\Lambda}_L^2} \int_{\Lambda_L \cap M_{L, \epsilon}^c} ~d^3x ~ \E [ G_0^\Lambda (x,k;z)[K_L(z, \omega)^{-1} - K (z, \omega)^{-1} ]_{km}
  G_0 (m,x;z)] \nonumber \\
 & &  + \mathcal{E}_1 (L) + \mathcal{E}_2(L) \nonumber \\
 & = & \tilde{B}_L^c + \mathcal{E}_1 (L) + \mathcal{E}_2(L) ,
\eea
where the error terms are
\beq\label{eq:theta-error1}
\mathcal{E}_1 (L) := \frac{1}{|\Lambda|}
  \sum_{(k,m) \in \tilde{\Lambda}_L^2} \int_{\Lambda_L \cap M_{L, \epsilon}^c} ~d^3x ~ \E [ G_0^\Lambda (x,k;z)[K_L (z, \omega)^{-1}]_{km}
  [G_0^\Lambda (m,x;z)] - G_0 (m,x;z)] ,
  \eeq
and
\beq\label{eq:theta-error2}
\mathcal{E}_2(L) := \frac{1}{|\Lambda|}
  \sum_{(k,m) \in \tilde{\Lambda}_L^2} \int_{\Lambda_L \cap M_{L, \epsilon}^c} ~d^3x ~ \E [ G_0^\Lambda (x,k;z) - G_0(x,k;z)][K(z, \omega)^{-1}]_{km}
  G_0 (m,x;z).
\eeq

\noindent
6. We treat the error term $\mathcal{E}_1 (L)$. The other term $\mathcal{E}_2 (L)$ may be dealt with in a similar manner. We consider three concentric cubes $\Lambda_{L''} \subset \Lambda_{L'} \subset \Lambda_L$,
with $\mbox{dist} ~( \Lambda_{L^\prime}, \partial \Lambda_L) = \ell^\beta$ and $\mbox{dist} ~( \Lambda_{L''}, \partial \Lambda_{L^\prime}) = \ell^{\beta'}$,  for $0 < \beta, \beta' < 1$.
Let $\chi_{\Lambda_{L^\prime}}$ be a cut-off function with $\chi_{\Lambda_{L^\prime}} \chi_{\Lambda_L} = \chi_{\Lambda_{L^\prime}}$. Similarly, we take another cut-off function $\chi_{\Lambda_{L''}}$ supported on ${\Lambda_{L''}}$ with
$\chi_{\Lambda_{L^\prime}} \chi_{\Lambda_{L''}} = \chi_{\Lambda_{L''}}$.
By the geometric resolvent identity, we have
\beq\label{eq:gre1}
\chi_{\Lambda_{L^\prime}} R_0^\Lambda (z) =  R_0 (z)\chi_{\Lambda_{L^\prime}} +  R_0 (z) C_{\partial\Lambda_{L^\prime}} R^\Lambda_0 (z).
\eeq
Multiplying both sides of \eqref{eq:gre1} by the cut-off function $\chi_{\Lambda_{L''}}$,
we obtain from \eqref{eq:gre1}:
\beq\label{eq:gre2}
\chi_{\Lambda_{L''}} R_0^\Lambda (z) \chi_{\Lambda_{L''}} = \chi_{\Lambda_{L''}} R_0 (z) \chi_{\Lambda_{L''}} +
\chi_{\Lambda_{L''}} R_0 (z) C_{\partial\Lambda_{L^\prime}} R^\Lambda_0 (z) \chi_{\Lambda_{L''}}.
\eeq
We repeated use the following bound. If $X \geq 0$ is a non-negative random variable depending on $\omega_j, j \in \tilde{\Lambda}_L$, then
\beq\label{eq:expectation1}
\E  [X]  =  \E [ X^{1-s} X^s]  \leq  \E [  X^{2(1-s)}]^{\frac{1}{2}} ~ \E [ X^{2s} ]^{\frac{1}{2}}.
\eeq
We substitute the functions $1 = \chi_{\Lambda_{L'}} + (1 - \chi_{\Lambda_{L'}})$ into the integral of \eqref{eq:theta-error1} and obtain two terms:
$\mathcal{E}_1 (L) = \mathcal{E}_1 (L; L^\prime)  + \mathcal{E}_1 (L, \partial L^\prime)$. We first treat $\mathcal{E}_1 (L, \partial L^\prime)$. For this, we apply \eqref{eq:expectation1} with $X := |[K_L(z,\omega)^{-1}]_{km}|$ and note that since $\Im z = \frac{\tau}{|\Lambda_L|}$,
we have $\E [  X^{2(1-s)}]^{\frac{1}{2}} = |\Lambda_L|^{1-s}$. Then, the exponential decay of the Green's functions \eqref{eq:green-exp-decay1}, together with the localization bound \eqref{eq:K-exp-decay1} and inequality \eqref{eq:expectation1}, indicate that
\bea\label{eq:theta1-bound1}
|\mathcal{E}_1 (L, \partial L^\prime)| & \leq & \frac{1}{|\Lambda|^s}
  \sum_{(k,m) \in \tilde{\Lambda}_L^2} e^{- \gamma_{s,3} (z) \| k-m\|} \nonumber \\
   & & \times  \int_{\Lambda_{L} \cap M_{L, \epsilon}^c} ~d^3x ~ (1 - \chi_{\Lambda_{L'}}(x))
   | G_0^\Lambda (x,k;z) ~ [G_0^\Lambda (m,x;z)C_{\partial\Lambda_{L^\prime}} G_0 (m,x;z)]| \nonumber \\
& \leq & C \left( \frac{\log L}{L^{3s}} \right).
\eea
As for the term $\mathcal{E}_1 (L; L^\prime)$, we again separate it into two terms using the function $\chi_{\Lambda_{L''}}$:
$1 = \chi_{\Lambda_{L''}} + (1-\chi_{\Lambda_{L''}})$. The term involving $(1-\chi_{\Lambda_{L''}})$ is treated as in \eqref{eq:theta1-bound1}.
As for the first term involving $\chi_{\Lambda_{L''}}$, again using the localization estimate \eqref{eq:K-exp-decay1} and \eqref{eq:expectation1}, we obtain,
\bea\label{eq:theta1-bound2}
\lefteqn{\frac{1}{|\Lambda|}  \sum_{(k,m) \in \tilde{\Lambda}_L^2} \int_{\Lambda_L \cap M_{L, \epsilon}^c} ~d^3x ~ \chi_{\Lambda_{L''}}(x) ~ | G_0^\Lambda (x,k;z) \E \{ [K_L (z, \omega)^{-1}]_{km} \}  [G_0^\Lambda (m,x;z) - G_0 (m,x;z)]|} \nonumber \\
 & \leq & \frac{1}{|\Lambda|^s}  \sum_{(k,m) \in \tilde{\Lambda}_L^2} e^{- \gamma_{s,3} (z) \| k-m\|} \int_{\Lambda_L \cap M_{L, \epsilon}^c} ~d^3x ~ \frac{e^{- c(E_0) \|x-k\|}}{\| x-k\|} \chi_{\Lambda_{L''}}(x) \frac{e^{- c(E_0) \|x-k\|}}{\| x-k\|}   C({\partial\Lambda_{L^\prime}})(x)_{km} \nonumber \\
  & & \times \frac{e^{- c(E_0) \|x-m\|}}{\| x-m\|} \chi_{\Lambda_{L''}} (x) .
 \eea
Since ${\rm dist}~( \Lambda_{L''}, \partial \Lambda_{L^\prime} ) = \ell^\beta$, the Green's functions are bounded by $e^{- c(E_0) \ell^\beta}$. The local singularities are no worse that $\log L$.

\noindent
7. Returning to the main term $\tilde{B}_L^c$ in \eqref{eq:theta-approx10}, we will estimate $\E \{ |\tilde{B}_L^c|^\frac{s}{2} \}$, for $s \in (0, 1)$:
\beq\label{eq:theta-approx11}
\E \{ |\tilde{B}_L^c|^\frac{s}{2} \} = \frac{1}{|\Lambda|^\frac{s}{2}}
  \sum_{(k,m) \in \tilde{\Lambda}_L^2} \int_{\Lambda_L \cap M_{L, \epsilon}^c} ~d^3x ~ | G_0^\Lambda (x,k;z)|^\frac{s}{2} \E \{| [K_L(z, \omega)^{-1} - K (z, \omega)^{-1} ]_{km} |^\frac{s}{2} \} |G_0 (m,x;z)|^\frac{s}{2} .
\eeq
We apply the second resolvent equation to the matrices $K_L(z,\omega)$ and $K (z,\omega)$ defined in \eqref{eq:K-kernel1}. Defining the difference $M^L_{nr}(z) :=  [ K_L(z,\omega) - K (z,\omega) ]_{nr}$, we
have:
\beq\label{eq:sec-resolv-K0}
[ K_L(z,\omega)^{-1} - K(z,\omega)^{-1} ]_{km} =   \sum_{(n,r) \in (\Z^3)^2} M_{nr}(z)   ~[K_L(z,\omega)^{-1}]_{kn} ~[K(z,\omega)^{-1}]_{rm} .
\eeq
According to \eqref{eq:theta-approx11}, if $n$ in the sum in \eqref{eq:sec-resolv-K0} is not in $\tilde{\Lambda}_L$,
$M^L_{nr}(z) = 0$, so the sum in \eqref{eq:sec-resolv-K0} has two terms: (1) We define the set $\tilde{X}_1 := \{ (n,r) \in \tilde{\Lambda}_L^2 ~|~ n \neq r \}$, for which, 
\beq\label{eq:K-Lambda-offDiag1}
  M^L_{nr;1}(z) = [G_0(n,r;z) - G_0^L(n,r;z)] , ~~~ (n,r) \in \tilde{X}_1,
\eeq
and (2) the set $\tilde{X}_2 := \{ (n,r) ~|~ n \in \tilde{\Lambda}_L, r \in \Z^3 \backslash \tilde{\Lambda}_L \}$, so that necessarily $n \neq r$, for which
\beq\label{eq:K-Lambda-Ext1}
M^L_{nr;2}(z) = G_0(n,r;z), ~~~ (n,r) \in \tilde{X}_2.
\eeq
By Green's theorem, we can write $M_{nr;1}^L(z)$ in case (1) as
\beq\label{eq:green33}
M_{nr;1}^L(z) = \int_{\partial \Lambda_L} dS(w) G_0 (n, w;z) (\partial_\nu G_0^L)(w,r;z) , ~~~ (n,r) \in \tilde{X}_1.
\eeq
Since $\| w - n\| > \frac{1}{2}$ and $\|w-n \| \geq {\rm dist}(n, \partial \Lambda_L)$, we obtain the bound
on $M_{nr;1}^L(z)$ via exponential decay and integrating \eqref{eq:green33}:
\bea\label{eq:M-decay1}
|M_{nr;1}^L(z)| &= &  M_0 e^{- \frac{c(E_0)}{2}[{\rm dist}(r, \partial \Lambda_L) + {\rm dist}(n, \partial \Lambda_L) ]} \int_{\partial \Lambda_L} dS(w) e^{- \frac{c(E_0)}{2} [\| r-w\| + \|n-w\|]} \nonumber \\
 & \leq & M_0 e^{- \frac{c(E_0)}{2}[{\rm dist}(r, \partial \Lambda_L) + {\rm dist}(n, \partial \Lambda_L) ]} ,
 \eea
for $(n,r) \in \tilde{X}_1$, and a constant $M_0>0$ uniform in $(n,r)$ and in $L>0$.
Similarly, we obtain for $ M_{nr;2}^L(z)$:
\beq\label{eq:M-decay2}
| M_{nr;2}^L(z)|  \leq   e^{- c(E_0){\rm dist}(r, \partial \Lambda_L)},
\eeq
for $ (n,r) \in \tilde{X}_2$.
Hence, the terms $M_{nr}^L(z)$, in cases (1) and (2) occurring in \eqref{eq:theta-approx11}, are
nonrandom and decay exponentially away from $\partial \Lambda_L$ because $E_0 < 0$.
Inserting \eqref{eq:sec-resolv-K0} into the integral in \eqref{eq:theta-approx6}, we obtain the bound
\bea\label{eq:theta-approx12}
\E \{ \tilde{|B}_L^c|^\frac{s}{2} \} & \leq & {  \frac{1}{|\Lambda_L|^\frac{s}{2}}  \sum_{(k,m) \in \tilde{\Lambda}_L^2}
    \sum_{(n,r) \in \tilde{X}_1 \cup \tilde{X}_2} ~M_{nr}(z)^\frac{s}{2} ~ \E \{ |(K_L(z, \omega)^{-1})_{kn}|^{s}\}^{\frac{1}{2}} \E \{ |(K (z, \omega)^{-1})_{rm}|^{s} \}^{\frac{1}{2}} } \nonumber \\
  & & \times        \int_{\Lambda_L \cap M_\epsilon^c} ~  |G_0^L (x,k;z) G_0 (m,x;z)|^\frac{s}{2}  ~d^3x ,
 \eea
where $\tilde{X}_1 := \tilde{\Lambda}_L^2$ and $\tilde{X}_2 := \{ (n,r) ~|~ n \in \tilde{\Lambda}_L, r \in \Z^3 \backslash \tilde{\Lambda}_L \}$.
The localization estimate \eqref{eq:K-exp-decay1} for $0 < s < 1$ provides the bound
\beq\label{eq:K-exp-decay21}
\E \{ |(K_L(z, \omega)^{-1})_{kn}|^{s}\}^{\frac{1}{2}} \E \{ |(K(z, \omega)^{-1})_{rm}|^{s} \}^{\frac{1}{2}}
 \leq C_s^2 e^{-s \gamma_{s,3} (z) ( \|k-n \| + \|r-m\| )}.
 \eeq
By standard exponential decay estimates (\ref{eq:green-exp-decay1}) on the local Green's functions we have
\beq\label{eq:greens-fnc-decay11}
\int_{\Lambda_L \cap M_\epsilon^c} ~  |G_0^L (x,k;z) G_0 (m,x;z)|  ~d^3x
 \leq \int_{\Lambda_L \cap M_\epsilon^c} ~  \frac{e^{- c(E_0)  \| k-x\|}}{\| k-x\|}  \frac{e^{- c(E_0) \| m-x\|}}{\| m-x\|} ~d^3x.
\eeq
Because the kernel is locally integrable we find that
\beq\label{eq:greens-fnc-decay21}
\int_{\Lambda_L \cap M_\epsilon^c} ~  |G_0^L (x,k;z) G_0 (m,x;z)|  ~d^3x
\leq
e^{- \tilde{c}(E_0)  \| k-m\|} ,
\eeq
where$\tilde{c}(E_0)$ differs from $c(E_0)$ be a constant.
Returning to \eqref{eq:theta-approx12}, we use the bound \eqref{eq:M-decay1} to obtain
\beq\label{eq:theta-approx13}
\frac{1}{L^\frac{3s}{2}} \sum_{(k,m) \in \tilde{\Lambda}_L^2}  \sum_{(n,r) \in \tilde{X}_1 } ~ e^{- {\frac{s c(E_0)}{2} [{\rm dist}(r, \partial \Lambda_L) + {\rm dist}(n, \partial \Lambda_L) ]}}  e^{-s \gamma_{s,3} (z) ( \|k-n \| + \|r-m\| )} e^{- \tilde{c}(E_0)  \| k-m\|} \leq \frac{C_1}{L^\frac{3s}{2}} ,
\eeq
and \eqref{eq:M-decay2} to obtain
\beq\label{eq:theta-approx14}
\frac{1}{L^\frac{3s}{2}}  \sum_{(k,m) \in \tilde{\Lambda}_L^2}
    \sum_{(n,r) \in \tilde{X}_2} ~e^{- \frac{s c(E_0)}{2} {\rm dist}(r, \partial \Lambda_L)} ~  e^{-s \gamma_{s,3} (z) ( \|k-n \| + \|r-m\| )}
    e^{- \tilde{c}_z  \| k-m\|} \leq \frac{C_2}{L^\frac{3s}{2}}.
\eeq
These estimates \eqref{eq:theta-approx13}--\eqref{eq:theta-approx14} show that $\tilde{B}_L^c$
vanishes in probability as $L \rightarrow \infty$, where $\tilde{B}_L^c$ is defined in \eqref{eq:theta-approx10}. This, together with the estimates on $B_L^{(2)}$ and the error terms ${\mathcal{E}}_j(L)$, for $j=1,2$, proves the proposition.
\end{proof}

Given this technical estimate, and the result \eqref{eq:resest4} relating $\xi_\omega^L$ to $\zeta_\omega^L$,
the proof of Proposition \ref{prop:one-pt-est1} follows the proof in \cite[section 6]{cgk2}. For any $f \in \mathcal{A}$, by Propositions \ref{prop:theta-approx1} and \ref{prop:local-uana-conv1}, we have
\beq\label{eq:theta-dos2}
\lim_{L \rightarrow \infty} \E \{ \zeta_\omega^{\Lambda_L} [f] \} =  \lim_{L \rightarrow \infty} \E \{ \Theta_\omega^{\Lambda_L} [f] \}.
\eeq
 To evaluate the last limit on the right in \eqref{eq:theta-dos2}, we use the Lebesgue differentiation theorem.
 \bea\label{eq:theta-dos3}
 \lim_{L \rightarrow \infty} \E \{ \Theta_\omega^{\Lambda_L} (I) \} & = & \lim_{L \rightarrow \infty} | \Lambda | \nu
 \left( E_0 + \frac{I}{|\Lambda|} \right) \nonumber \\
  & = & \lim_{L \rightarrow \infty} | \Lambda | \int_{E_0 + I |\Lambda_L|^{-1}} n(s) ~ds .
 \eea
 Since $E_0$ is a Lebesgue point of the DOS $n(s)$,
 we obtain
 \beq\label{eq:dos1}
  \lim_{L \rightarrow \infty} | \Lambda | \int_{E_0 + I |\Lambda_L|^{-1}} n(s) ~ds = n(E_0) |I| ,
  \eeq
verifying Proposition \ref{prop:one-pt-est1} establishing the intensity of the limiting point process of the uana.

\subsection{Elimination of double points}\label{subsec:no-double-pts1}

The second condition on the convergence of the $uana$ $\{ \eta_\omega^{\ell,p}\}$ guarantees that the limit process is a simple point process (see, for example, \cite[Proposition 11.1.IX]{daley-vere-jones1}).
The proof relies on the Minami estimate \eqref{eq:minami1} for length scale $\ell = L^\alpha$, with $0 < \alpha < 1$.

\begin{prop}\label{prop:two-pt-est1}
For the $uana$ $\{ \eta_\omega^{\ell,p} \}$, we have
\beq\label{eq:two-pt-est1}
\lim_{L \rightarrow \infty} \sum_{p=1}^{N_L} \Pp \{ \eta_\omega^{\ell,p}(I) \geq 2 \} = 0.
\eeq
\end{prop}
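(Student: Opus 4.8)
The plan is to deduce the statement directly from the Minami estimate of Theorem~\ref{thm:minami-est1}, applied on the small scale $\ell$, followed by a sum over the subcubes $\Lambda_\ell^p$. The first step is to recast the event in spectral language. Setting $I_L := E_0 + I/|\Lambda_L|$, the definition \eqref{eq:les9} shows that $\eta_\omega^{\ell,p}(I) = \mathrm{Tr}\, E_{H_\omega^{\ell,p}}(I_L)$, the number of eigenvalues of $H_\omega^{\ell,p}$ lying in the window $I_L$, counted with multiplicity. Writing $X := \mathrm{Tr}\, E_{H_\omega^{\ell,p}}(I_L)$, which is $\N$-valued, and using that $X(X-1)\ge 2$ whenever $X\ge 2$ while $X(X-1)=0$ for $X\in\{0,1\}$, one gets
\[
\Pp\{\eta_\omega^{\ell,p}(I) \ge 2\} \;\le\; \tfrac{1}{2}\, \E\{ X(X-1) \}.
\]

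The second step is to bound this moment by the Minami estimate. Since $E_0 \le \tilde E_0 < 0$ is fixed and $|I| < \infty$, for all $L$ large enough the window $I_L$ is contained in the fixed half-line $(-\infty, \tilde E_0/2)$; moreover, as recorded at the start of Section~\ref{sec:les-indep-arrays1}, the Minami estimate \eqref{eq:minami1} holds verbatim for the local operators $H_\omega^{\ell,p}$, with a constant $C_M$ depending only on the negative cutoff energy and, crucially, independent of the side length of the cube. Since $|I_L| = |I|/|\Lambda_L|$ and $|\Lambda_\ell^p| = \ell^d$, writing $I_L = [E-\eta_L, E+\eta_L]$ with $\eta_L = |I|/(2|\Lambda_L|)$, Theorem~\ref{thm:minami-est1} gives
\[
\E\{ X(X-1) \} \;\le\; C_M\, |\Lambda_\ell^p|^2\, \eta_L^2 \;=\; \frac{C_M\, |I|^2}{4}\, \frac{\ell^{2d}}{L^{2d}},
\]
so that $\Pp\{\eta_\omega^{\ell,p}(I) \ge 2\} \le \frac{C_M |I|^2}{8}\, \ell^{2d}/L^{2d}$ for every $p = 1, \ldots, N_L$.

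The final step is the summation over the $N_L = (L/\ell)^d$ subcubes:
\[
\sum_{p=1}^{N_L} \Pp\{\eta_\omega^{\ell,p}(I) \ge 2\} \;\le\; N_L \cdot \frac{C_M |I|^2}{8}\, \frac{\ell^{2d}}{L^{2d}} \;=\; \frac{C_M |I|^2}{8}\, \left(\frac{\ell}{L}\right)^{d}.
\]
With the choice $\ell = L^\alpha$, $0 < \alpha < 1$, the right-hand side equals $\frac{C_M |I|^2}{8}\, L^{d(\alpha-1)} \to 0$ as $L \to \infty$, and, the sum being nonnegative, its limit is $0$, which is the claim. The only point that requires care — and the main, if mild, obstacle — is ensuring that the Minami bound is genuinely available on the small cubes $\Lambda_\ell^p$ at the rescaled window $I_L$ with a volume-independent constant; this is exactly what Theorem~\ref{thm:minami-est1} provides, together with the remark in Section~\ref{sec:les-indep-arrays1} that the Wegner and Minami estimates transfer to $H_\omega^{\ell,p}$. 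Note that no localization input enters here: only the quadratic Minami estimate is used, which (as noted just before the statement) is precisely the ingredient forcing the limiting point process to be simple.
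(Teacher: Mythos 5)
Your proposal is correct and follows essentially the same route as the paper: rewrite $\Pp\{\eta_\omega^{\ell,p}(I)\ge 2\}$ in terms of $\E\{X(X-1)\}$ via Markov (the paper passes through $\Pp\{X(X-1)\ge 1\}\le\E\{X(X-1)\}$, while you use the slightly sharper threshold $2$; both are equivalent), apply the Minami estimate of Theorem~\ref{thm:minami-est1} on the scale $\ell$ with window half-width $|I|/(2|\Lambda_L|)$, and sum over the $N_L=(L/\ell)^d$ subcubes to get a bound of order $(\ell/L)^d = L^{-d(1-\alpha)}\to 0$. Your remark on why the Minami constant is volume-independent on the small cubes is a correct and useful clarification of the point the paper treats implicitly.
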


\begin{proof}
We use the Minami estimate \eqref{eq:minami1} on length scale $\ell$ for the local Hamiltonian $H_\omega^{\ell, p}$:
\bea\label{eq:two-pt-est2}
\sum_{p=1}^{N_L} \Pp \{ \eta_\omega^{\ell,p}(I) \geq 2 \} & \leq & \sum_{p=1}^{N_L} \Pp \{ \eta_\omega^{\ell,p}(I)[ \eta_\omega^{\ell,p}(I)-1] \geq 1 \} \nonumber \\
 & \leq & \sum_{p=1}^{N_L} \E \{ \eta_\omega^{\ell,p}(I)[ \eta_\omega^{\ell,p}(I)-1] \} \nonumber \\
  & \leq & \left( \frac{L}{\ell} \right)^d C_M \ell^{2d} \left( \frac{|I|}{L^d}  \right)^2 \nonumber \\
  & \leq & C_M |I|^2 \left( \frac{\ell}{L} \right)^d = C_M L^{-d(1 - \alpha)} ,
  \eea
and this vanishes in the limit $L \rightarrow \infty$.
\end{proof}

\subsection{LES for the $uana$}\label{subsec:les1}

The main results of sections \ref{subsec:intensity1} and \ref{subsec:no-double-pts1}
imply the following characterization of the limiting point process associated with the $uanu$.

\begin{theorem}\label{thm:uanaP1}
For any $E_0 \in (-\infty, \tilde{E_0}] \cap \Sigma^{\rm CL}$, so that $n(E_0) \neq 0$,  the process $\zeta_\omega^L$, constructed from the $uana$ $\{ \eta_\omega^{\ell,p} \}$, converges weakly to a Poisson point process with intensity measure $n(E_0) ds$.
\end{theorem}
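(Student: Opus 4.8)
The plan is to obtain Theorem~\ref{thm:uanaP1} from the standard convergence theorem for superpositions of a uniformly asymptotically negligible array (a $uana$) of independent point processes, as in \cite[Chapter~11]{daley-vere-jones1}. That theorem asserts: if $\{\eta_\omega^{\ell,p}\}$ is a $uana$, then $\zeta_\omega^L=\sum_p\eta_\omega^{\ell,p}$ converges weakly to the Poisson point process with intensity measure $\mu$ as soon as, for every bounded interval $I$ --- and, more generally, every set in the generating semiring of finite disjoint unions of bounded intervals, each of which is automatically a $\mu$-continuity set since the $\mu$ we produce is absolutely continuous --- one has (i) $\sum_p\Pp\{\eta_\omega^{\ell,p}(I)\ge 1\}\to\mu(I)$ and (ii) $\sum_p\Pp\{\eta_\omega^{\ell,p}(I)\ge 2\}\to 0$. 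I would take $\mu = n(E_0)\,ds$.

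First I would verify (i) and (ii) on a single bounded interval $I$. Condition (ii) is precisely Proposition~\ref{prop:two-pt-est1}. For condition (i), I would use the sandwich $\sum_p\Pp\{\eta_\omega^{\ell,p}(I)=1\}\le\sum_p\Pp\{\eta_\omega^{\ell,p}(I)\ge 1\}\le\sum_p\E\{\eta_\omega^{\ell,p}(I)\}=\E\{\zeta_\omega^L(I)\}$ and invoke Proposition~\ref{prop:one-pt-est1}, which pins both ends to $n(E_0)|I|$; hence $\mu(I)=n(E_0)|I|$, i.e.\ $\mu=n(E_0)\,ds$. To pass to a finite disjoint union $B=\bigcup_{j}I_{j}$, I would use the elementary bounds $\mathbf 1_{\{\eta(B)\ge 2\}}\le\tfrac12\,\eta(B)(\eta(B)-1)$ and $\E\{\eta(B)\}-\E\{\eta(B)(\eta(B)-1)\}\le\Pp\{\eta(B)\ge 1\}\le\E\{\eta(B)\}$, which reduce (i) and (ii) for $B$ to the additive identity $\E\{\zeta_\omega^L(B)\}=\sum_j\E\{\zeta_\omega^L(I_j)\}\to n(E_0)|B|$ (from Proposition~\ref{prop:one-pt-est1}) together with $\sum_p\E\{\eta_\omega^{\ell,p}(B)(\eta_\omega^{\ell,p}(B)-1)\}\to 0$; the latter follows from the Minami estimate \eqref{eq:minami1} for the local Hamiltonians $H_\omega^{\ell,p}$ applied to the intervals $I_j$ and their pairwise unions, exactly as in the proof of Proposition~\ref{prop:two-pt-est1}.

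Having verified (i) and (ii), I would conclude that $\zeta_\omega^L$ converges weakly to the Poisson point process with intensity $n(E_0)\,ds$; this already subsumes the existence of the limit, though tightness can equally be read off from the Wegner estimate \eqref{eq:wegner1} as in \cite[Section~6]{cgk2}. An equivalent route, perhaps preferable for the bookkeeping, works directly with Laplace functionals: by independence $\E\{e^{-\zeta_\omega^L[f]}\}=\prod_p\E\{e^{-\eta_\omega^{\ell,p}[f]}\}$, the $uana$ property \eqref{eq:uana1} linearizes the logarithm to $-\sum_p\bigl(1-\E\{e^{-\eta_\omega^{\ell,p}[f]}\}\bigr)+o(1)$, and one splits each summand according to whether $\eta_\omega^{\ell,p}$ has $0$, $1$, or at least $2$ points in an interval $I$ containing the support of $f$ --- the $0$-term vanishes, the $\ge 2$-term is controlled by Proposition~\ref{prop:two-pt-est1}, and the $1$-term is evaluated by a Riemann-sum approximation of $f$ that applies Proposition~\ref{prop:one-pt-est1} to each subinterval --- which produces the limit $\exp\bigl(-n(E_0)\int(1-e^{-f(s)})\,ds\bigr)$, the Laplace functional of the desired Poisson process. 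The main obstacle is exactly this last step (equivalently, the passage from single intervals to the generating semiring): one must control the error incurred when a subcube $\Lambda_\ell^p$ contributes eigenvalues to several pieces of a partition of $I$, and check that it stays negligible under refinement, which works because the Minami bound used in Proposition~\ref{prop:two-pt-est1} is quadratic in the interval length, so subdividing a fixed interval into $m$ pieces only improves the bound by an additional factor $m^{-1}$.
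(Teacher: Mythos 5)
Your proposal is correct and takes essentially the same route the paper takes: the paper's Theorem~\ref{thm:uanaP1} is stated as an immediate consequence of Propositions~\ref{prop:one-pt-est1} and \ref{prop:two-pt-est1} together with the standard Daley--Vere-Jones convergence theorem for superpositions of a uniformly asymptotically negligible array of independent point processes, which is exactly the machinery you invoke, and your sandwich argument $\sum_p\Pp\{\eta_\omega^{\ell,p}(I)=1\}\le\sum_p\Pp\{\eta_\omega^{\ell,p}(I)\ge 1\}\le\E\{\zeta_\omega^L(I)\}$ is the right way to turn Proposition~\ref{prop:one-pt-est1} into the first hypothesis of that theorem. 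The paper leaves the verification at the level of a citation, so the extra bookkeeping you supply (passing from single intervals to the generating semiring via the elementary first-/second-moment bounds and the quadratic scaling of the Minami estimate, or alternatively the Laplace-functional computation) is a correct filling-in of the argument rather than a departure from it.
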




\section{Approximation of the point process $\xi_\omega$ by a $uana$}\label{sec:approx-pt-proc1}
\setcounter{equation}{0}


The main result of section \ref{sec:les-indep-arrays1} is the convergence of the point process associated with the $uana$ $\zeta_\omega^L = \sum_{p=1}^{N_L} \eta_\omega^{\ell,p}$ to a Poisson point process $\xi^P$. That is, for all test functions $f \in C_0^+(\R)$, nonnegative, continuous functions of compact support, we have
\beq\label{eq:laplacetransf1} 
\lim_{L \rightarrow \infty} \EE \left\{ e^{-  \zeta_\omega^L [f]} \right\} =  \E \left\{ e^{- \xi^P [f]} \right\} ,
\eeq
where the right hand side is the characteristic function of the Poisson point process with intensity measure $n(E_0) ds$.
The local eigenvalue point process is centered at any energy $E_0 \in (- \infty, \tilde{E}_0) \cap \Sigma^{\rm CL}$, where localization has been proven in \cite{hkk1}.
In this section, we complete the proof of Theorem \ref{thm:main1}
by showing that $\xi_\omega^L$ has the same limit point as $\zeta_\omega^L$, which is,  by Theorem \ref{thm:uanaP1}, 
the Poisson point process with intensity $n(E_0) ds$.


Minami \cite[Lemma 1]{min1} provided a criteria for determining when a sequence of point processes $\xi_n$ converges weakly to a point process $\xi$. He proved that if the densities of the intensity measures of the $\xi_n$ are uniformly bounded and the density of the limiting process $\xi$ is also bounded by the same constant, then the weak convergence of the $\xi_n$ to $\xi$ is equivalent to the convergence of the Laplace transforms of $\xi_n [f]$ to $\xi [f]$ for all $f \in \mathcal{A}$.
In light of Theorem \ref{thm:uanaP1} and \eqref{eq:laplacetransf1}, it suffices to prove that
\beq\label{eq:laplacetransf3}
\lim_{L \rightarrow \infty} \EE\left\{  e^{-  \xi_\omega^L [f] } -  e^{- \zeta_\omega^L [f]} \right\} =0 .
\eeq
For any nonegative functions $X,Y > 0$ and for any $0 \leq  s \leq 1$, we have
\beq\label{eq:exp11}
|e^{-X} - e^{-Y}| \leq 2^{1-s} | e^{-X} - e^{-Y}|^s.
\eeq
Since we have the upper bound
\beq\label{eq:exp1}
| e^{-X} - e^{-Y}|^s =  \left|  \int_0^1  e^{-tX}(Y-X) e^{-(1-t)Y} ~dt \right|^s  \leq |X-Y|^s,
\eeq
the vanishing of the limit in \eqref{eq:laplacetransf3} is guaranteed by
\beq\label{eq:resest4}
\lim_{L \rightarrow \infty} \E \left\{ | \xi_\omega^L [f] - \zeta_\omega^L [f] |^s \right\} = 0,
\eeq
for $0 < s < 1$.
The proof of \eqref{eq:resest4}, that depends on the localization estimates and the exponential decay of the Green's functions, is the main result of this section.

\begin{prop}\label{prop:local-uana-conv1}
For the point process $\xi_\omega^L$ defined in \eqref{eq:les1} and the point process $\zeta_\omega^L$  defined from the uana, and for any $0 < s < 1$, we have
\beq\label{eq:les-uana-conv1}
\lim_{L \rightarrow \infty} \E \{| \xi_\omega^L [f] - \zeta_\omega^L [f]|^s \} = 0,
\eeq
for any $f \in \mathcal{A}$. Consequently, \eqref{eq:laplacetransf3}  holds.
\end{prop}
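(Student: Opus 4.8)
The plan is to run the same scheme as in the proof of Proposition~\ref{prop:theta-approx1}, with the direct sum $\zeta_\omega^L$ playing the role of $\Theta_\omega^L$ and the union of subcube boundaries $\bigcup_p\partial\Lambda_\ell^p$ playing the role of the single boundary $\partial\Lambda_L$. By the subadditivity $|\sum_i a_ix_i|^s\le\sum_i a_i^s|x_i|^s$, valid for $0<s\le1$ and $a_i>0$, it suffices to treat a single $f=f_\zeta$ with $\zeta=\sigma+i\tau$, $\tau>0$. Writing $z:=E_0+\frac{\sigma+i\tau}{|\Lambda_L|}$ and $R_\omega^\ell(z):=\oplus_p R_\omega^{\ell,p}(z)$, one has
\[
\xi_\omega^L[f_\zeta]-\zeta_\omega^L[f_\zeta]=\frac{1}{|\Lambda_L|}\,{\rm Tr}\,\Im\bigl(R_\omega^L(z)-R_\omega^\ell(z)\bigr).
\]
Since $E_0<0$, for $L$ large one has ${\rm dist}(z,[0,\infty))\ge|E_0|-|\sigma|/|\Lambda_L|\ge\half|E_0|$, so \eqref{eq:green-exp-decay1} holds with a rate $c(E_0)$ uniform in $L$; likewise $\Re z<\tilde E_0$, so the localization estimate \eqref{eq:K-exp-decay1} applies with $\gamma_{s,3}(z)\approx|E_0|^{1/2}$ uniformly in $L$.

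Next I would insert the Green's function formula \eqref{eq:finite-green1} on $\Lambda_L$ and its analogue on each $\Lambda_\ell^p$, decomposing the difference as $A_L+B_L$, where $A_L:=|\Lambda_L|^{-1}\,{\rm Tr}\,\Im(R_0^L(z)-R_0^\ell(z))$ is deterministic and $B_L$ collects the terms quadratic in the free Green's functions weighted by $[K_L^{-1}(z,\omega)]_{km}$ and $[K_{\ell,p}^{-1}(z,\omega)]_{km}$. The term $A_L$ is handled exactly as in step~2 of the proof of Proposition~\ref{prop:theta-approx1}: by Green's theorem $R_0^L(z)-R_0^{\ell,p}(z)$ restricted to $\Lambda_\ell^p$ equals a boundary operator on $\partial\Lambda_\ell^p$, so after using the exponential decay \eqref{eq:green-exp-decay1} the trace is supported within a $C\log L$ collar of $\bigcup_p\partial\Lambda_\ell^p$; since the total surface area of $\bigcup_p\partial\Lambda_\ell^p$ is of order $N_L\,\ell^2=L^3/\ell$, this gives $|A_L|\le C(\log L)/\ell=C(\log L)\,L^{-\alpha}\to0$.

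For $B_L$, I would first excise $\epsilon$-balls about the lattice points, $M_{L,\epsilon}=\bigcup_{j\in\tilde{\Lambda}_L}B(j,\epsilon)$ with $\epsilon=o(|\Lambda_L|^{-2})$; as in Lemma~\ref{lemma:second-term-singular1} the $M_{L,\epsilon}$ contribution to $\E\{|B_L|^{s/2}\}$ vanishes, and on the complement all kernels are continuous. I would then split the $(k,m)$-sum according to whether $k$ and $m$ lie in the same subcube. For pairs in distinct subcubes, or with one index within $C\log L$ of a subcube boundary, the off-diagonal decay \eqref{eq:K-exp-decay1} of $K_L^{-1}$ together with \eqref{eq:green-exp-decay1}, via the fractional-moment inequality \eqref{eq:expectation1}, bound the contribution by an exponentially small factor in the inter-cube separation --- the cost of $\Im z=\tau/|\Lambda_L|$ being only $|\Lambda_L|^{1-s}$ in $\E[|[K_L^{-1}]_{km}|^{2(1-s)}]^{1/2}$ --- so the total is $O(L^{-3(1-\alpha)})$ after the $|\Lambda_L|^{-s/2}$ prefactor. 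For $k,m$ in the same subcube $\Lambda_\ell^p$, the quantity to control is
\[
G_0^L(x,k;z)[K_L^{-1}]_{km}G_0^L(m,x;z)-G_0^{\ell,p}(x,k;z)[K_{\ell,p}^{-1}]_{km}G_0^{\ell,p}(m,x;z);
\]
applying the second resolvent equation for $K_L^{-1}-K_{\ell,p}^{-1}$ as in \eqref{eq:sec-resolv-K0}, the connecting matrix is nonrandom and, by Green's theorem as in \eqref{eq:green33}--\eqref{eq:M-decay2}, decays exponentially off $\partial\Lambda_\ell^p$ at rate $c(E_0)$, while $G_0^L-G_0^{\ell,p}$ is likewise a boundary term; combining with \eqref{eq:K-exp-decay21} and the Green's function decay exactly as in steps~5--7 of the proof of Proposition~\ref{prop:theta-approx1} yields a bound of order $(\log L)\,L^{-3s/2}$ times a sum over subcubes controlled by the decay off $\bigcup_p\partial\Lambda_\ell^p$, hence vanishing. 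Adding all pieces gives \eqref{eq:les-uana-conv1}, and then \eqref{eq:exp11}--\eqref{eq:exp1} yield \eqref{eq:laplacetransf3}.

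The hard part will be the power counting in $B_L$: one must verify that the exponential gain from the decay off the subcube boundaries --- over distances that are powers of $\ell=L^\alpha$ --- always dominates both the number $N_L=(L/\ell)^3$ of subcubes and the factor $|\Lambda_L|^{1-s}$ produced by $\Im z=\tau/|\Lambda_L|$ in the spectral-averaging bound, uniformly in the location of the pair $(k,m)$. Since $\gamma_{s,3}(z)^{-1}\approx|E_0|^{-1/2}$ and $c(E_0)$ are fixed positive constants, exponential decay beats every polynomial in $L$, and the scheme closes for any fixed $0<\alpha<1$.
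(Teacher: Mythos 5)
Your overall plan is the same as the paper's: reduce to $f_\zeta$, split the trace difference into a deterministic free-Green's-function part $A_L$ and a random part $B_L$ carrying the $K^{-1}$ matrices, handle $A_L$ by boundary-operator/Green's-theorem estimates along $\bigcup_p\partial\Lambda_\ell^p$ (giving $O(\ell^{-1}\log L)$), and handle $B_L$ by excising $M_\epsilon$-balls, splitting the $(k,m)$-sum by subcube geometry, and applying the second resolvent identity for $K_L^{-1}-K_{\ell,p}^{-1}$ together with \eqref{eq:K-exp-decay1} and \eqref{eq:green-exp-decay1}. This matches the structure of sections \ref{subsec:first-term1}--\ref{subsec:second-term1}.

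However, your treatment of pairs $(k,m)$ that lie within $O(\log L)$ of a subcube boundary, with $\|k-m\|\le C\log L$, is not correct. There the off-diagonal decay of $K^{-1}$ gives no gain (since $\|k-m\|$ is small), nor does the exponential decay of $M_{\ell,q}(z)$ off $\partial\Lambda_{\ell,q}$ (since $k,m$ are close to the boundary); what controls these contributions is purely the volume of the security zone $\mathcal{Z}_\ell$, which is of order $(L/\ell)^d\ell^{d-1}\log L=L^{d-\alpha}\log L$. After the $|\Lambda_L|^{-s}$ prefactor, this yields $L^{d-ds-\alpha}\log L$, which vanishes only under the constraint $\alpha>d(1-s)$; for $d=3$ this forces $s>1-\alpha/3$, in particular $s>2/3$ when $\alpha<1$. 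Your bound $O(L^{-3(1-\alpha)})$ and the concluding assertion that ``exponential decay beats every polynomial, so the scheme closes for any fixed $0<\alpha<1$'' both overlook this genuinely polynomial competition. The proposition's statement for arbitrary $s\in(0,1)$ is therefore not obtained by your counting; one must choose $(s,\alpha)$ compatibly as in the paper's Cases~3b--3c (and then \eqref{eq:laplacetransf3} follows since a single admissible $s$ suffices via \eqref{eq:exp11}--\eqref{eq:exp1}).
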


We recall the definition of the set of test functions $\mathcal{A}$ from section \ref{subsec:intensity1}, and the test function $f_\zeta \in \mathcal{A}$ given in \eqref{eq:test-fnc2}. A simple calculation shows that for $z := E_0 + \frac{\zeta}{|\Lambda_L|}$, with $\zeta = \sigma + i \tau$, with $\tau > 0$ and $\sigma \in \R$,
\beq\label{eq:green-limitAS0}
\frac{1}{|\Lambda|} {\rm Tr} \Im R_\omega^\Lambda \left( z \right) = {\rm Tr} f_\zeta(|\Lambda_L|(H_\omega^\Lambda- E_0)) = \xi_\omega^L [f_\zeta].
\eeq
Consequently, this and the linearity of \eqref{eq:special-class1} show that we must prove that for any $0 < s < 1$,
\beq\label{eq:green-limit1}
\lim_{|\Lambda| \rightarrow \infty} \E \left\{ \left| \frac{1}{|\Lambda_L|} {\rm Tr} \Im R_\omega^\Lambda  (z)
- \frac{1}{|\Lambda_L|} \sum_{p=1}^{N_L} {\rm Tr} \Im R_\omega^{\Lambda_p}  (z) \right|^s \right\} =0 ,
 \eeq
for  $z := E_0 + \frac{\zeta}{|\Lambda_L|}$.
To this end, we substitute \eqref{eq:green2} into the right side of \eqref{eq:green-limit1} the resolvent on $\Lambda_L$ and on $\Lambda_p$. This results in two types of terms.
The first identity involves the free Green's functions at
energy $z =  E_0 + \frac{\zeta}{|\Lambda_L|} \in \C \backslash [0, \infty)$:
\beq\label{eq:green-limit2}
\frac{1}{|\Lambda_L|}  \left| {\rm Tr} \Im R_0^\Lambda  (z)
-  \sum_{p=1}^{N_L} {\rm Tr} \Im R_0^{\Lambda_p}  (z) \right|.
\eeq

The second identity involves the interaction matrix $K_X^{-1}(z, \omega)$ for both the region $X = \Lambda_L$ and the regions $X = \Lambda_{\ell, p}$. As above, we let $P_m$, for $m \in \Z^d$, denote evaluation of the kernel $k(x,y)$ of an operator $K$ at the point $m$ so that $(P_m K)(x,y) = k(m,y)$ and $(K P_l)(x,y) = k(x,l)$. The second term is
\beq\label{eq:green-limit3}
 \frac{1}{|\Lambda_L|^s} \E \left\{ \left| \sum_{(l,m) \in {\tilde{\Lambda}}_L^2} \left[ {\rm Tr} [ \Im R_0^\Lambda  (z) P_k [K^{-1}_\Lambda (z, \omega)] P_m R_0^\Lambda (z) ] - \sum_{p=1}^{N_L} {\rm Tr} [ \Im R_0^{\Lambda_p}  (z) P_k [K^{-1}_{\Lambda_p} (z, \omega)] P_m R_0^{\Lambda_p} (z) ] \right] \right|^s \right\}  ,
\eeq
where we write ${\tilde{\Lambda}}_L^2$ for the Cartesian product ${\tilde{\Lambda}}_L \times {\tilde{\Lambda}}_L$.
With regard to the sum over $(k,m) \in \tilde{\Lambda_L}^2$, if either $k$ or $m$ is not in $\Lambda_{\ell, p}$ then the matrix element $[K^{-1}_{\Lambda_p} (z, \omega)]_{km}$ in the second term of \eqref{eq:green-limit3} is zero.
We now bound each term \eqref{eq:green-limit2} and \eqref{eq:green-limit3} separately.
As in section \ref{sec:les-indep-arrays1}, we work explicitly with $d = 3$, the other cases being similar and less singular, see the appendix in section \ref{sec:dimensions12}.

\subsection{Estimation of the first term \eqref{eq:green-limit2}}\label{subsec:first-term1}

To prove the vanishing as $L \rightarrow \infty$ of the term in \eqref{eq:green-limit2}, we introduce cut-off functions on subsets of $\Lambda_L$. Let ${\Lambda}^\prime_p := \{ x \in \Lambda_p ~|~ {\rm dist} ~(x, \partial \Lambda_p ) > \log L \}$. We denote by ${\Lambda^\prime_p}^c := \Lambda_p \backslash {\Lambda}^\prime_p$. In this way, we have the decomposition of $\Lambda_L$:
\bea\label{eq:decomp1}
\Lambda_L & = & \left( \bigcup_{p=1}^{N_L} \Lambda^\prime_p  \right) \cup \left( \bigcup_{p=1}^{N_L} {\Lambda^\prime_p}^c   \right) \nonumber \\
 & =: & {\Lambda}^\prime_L \cup {\Lambda^\prime_L}^c ,
 \eea
up to sets of Lebesgue measure zero.
Correspondingly, we write characteristic functions on these sets as
\beq\label{eq:decomp2}
\chi_{\Lambda_L} = \left( \sum_{p=1}^{N_L} \chi_{{\Lambda}^\prime_p} \right) + \chi_{{\Lambda^\prime_L}^c} .
\eeq
Inserting this decomposition into \eqref{eq:green-limit2}, we obtain two terms, one localized in $\Lambda^{\prime}_L$:
\beq\label{eq:green-limit4}
\frac{1}{|\Lambda_L|} \sum_{p=1}^{N_L} \left| {\rm Tr} \Im \chi_{{\Lambda}^\prime_p} \left( R_0^\Lambda  (z)
-   R_0^{\Lambda_p}  (z)  \right) \right| ,
\eeq
and one localized in the complement ${{\Lambda}^\prime_L}^c$:
\beq\label{eq:green-limit5}
\frac{1}{|\Lambda_L|} \sum_{p=1}^{N_L} \left| {\rm Tr} \Im \chi_{{{\Lambda}^\prime_p}^c} \left( R_0^\Lambda  (z)
-   R_0^{\Lambda_p}  (z)  \right) \right| ,
\eeq

We estimate \eqref{eq:green-limit4} using the fact that $(H_0^L - H_0^{\ell,p})g = 0$ if $g$ is a smooth function supported in $\Lambda_{\ell,p}$ away from $\partial \Lambda_{\ell.p}$. That is, the difference between $H_0^L$ and $H_0^{\ell,p}$ restricted to $\Lambda_{\ell,p}$ is localized near $\partial \Lambda_{\ell,p}$ whereas the trace in  \eqref{eq:green-limit4}  is localized to ${\Lambda}^\prime_{\ell, p}$ and the Green's functions decay exponentially \eqref{eq:green-exp-decay1}   since $\Re z < 0$.  We estimate \eqref{eq:green-limit5} using the fact that $|{\Lambda}_{\ell,p}^c|$ is small relative to 
$|\Lambda_{\ell,p}|$. 
The calculations are similar to the estimate of $A_L$ in part 2 of the proof of Proposition \ref{prop:theta-approx1}.
We introduce a boundary operator $\Gamma_{\ell,p}$ described by an application of Green's Theorem so 
$\Gamma_{\ell,p} R_0^{\Lambda_p}(z)$ is the restriction of $\nu \cdot \nabla R_0^{\Lambda_{\ell,p}}(z)$ to $\partial \Lambda_{\ell,p}$, where $\nu$ is the outward normal unit vector. Let $\varphi_{\ell,p}$ be a smooth function localized near $\partial \Lambda_{\ell,p}$ so that $\Gamma_{\ell,p} = \varphi_{\ell,p} \Gamma_{\ell,p}$.
Writing $Y_{\ell,p}$ for ${\Lambda}^\prime_{\ell,p}$ or $\partial {\Lambda}^\prime_{\ell,p}$, we rewrite the traces in \eqref{eq:green-limit4} and in \eqref{eq:green-limit5} using this resolvent formula in a compact notation:
\bea\label{eq:green-limit6}
| {\rm Tr} ~\{ \chi_{Y_{\ell,p}} (  R_0^{\Lambda_L} (z) - R_0^{\Lambda_{\ell,p}}(z)) \chi_{Y_{\ell,p}}  \} | & = & | {\rm Tr} ~\{ \chi_{Y_{\ell,p}}  R_0^{\Lambda_L}(z) \varphi_{\ell,p} \Gamma_{\ell,p}  R_0^{\Lambda_{\ell,p}}(z)  \chi_{{Y}_{\ell,p}} \} | \nonumber \\
 & \leq & \| \chi_{Y_{\ell,p}}  R_0^{\Lambda_L}(z) \varphi_{\ell,p} \Gamma_{\ell,p} R_0^{\Lambda_{\ell,p}}(z) \chi_{Y_{\ell,p}} \|_1 \nonumber \\
  & \leq & \| \chi_{Y_{\ell,p}}  R_0^{\Lambda_L}(z) \varphi_{{\ell,p}} \|_2
 \| \varphi_{\ell,p}  \Gamma_{\ell,p} R_0^{\Lambda_{\ell,p}}(z) \chi_{Y_{\ell,p}}\|_2 . \nonumber \\
 &  &
 \eea
For \eqref{eq:green-limit4}, we have $Y_L = {\Lambda}^\prime_{\ell,p}$ and note that
\beq\label{eq:green-limit6a}
 \| \chi_{{\Lambda}^\prime_{\ell,p}}  R_0^{\Lambda_L}(z) \varphi_{{\ell,p}} \|_2 \leq C [\ell^2 \log L]^\frac{1}{2},
 \eeq
using the exponential decay of the Green's function to control the integral over ${\Lambda}^\prime_{\ell,p}$.
A similar estimate holds for the second term since the gradient of the Green's function is integrable and decays exponentially. As a consequence, we have the bound
  \beq\label{eq:green-limit4a}
\frac{1}{|\Lambda_L|} \sum_{p=1}^{N_L}  \left| {\rm Tr} \chi_{{\Lambda}^\prime_{\ell,p}} \left( R_0^{\Lambda_L}  (z)
-   R_0^{\Lambda_{\ell,p}}  (z)  \right) \right| \leq C_0 \frac{1}{L^3} \left( \frac{L}{\ell} \right)^3
\ell^2 \log L \leq \frac{\log L}{L^\alpha} .
\eeq
Concerning \eqref{eq:green-limit5} with $Y_{\ell,p} = {\lambda}^\prime_{\ell,p}$, we have the bound
\beq\label{eq:green-limit6b}
 \| \chi_{\partial {\Lambda}^\prime_{\ell,p}}  R_0^{\Lambda_L}(z) \varphi_{{\ell,p}} \|_2 \leq C [\ell^2 \log L]^\frac{1}{2},
 \eeq
so that
\beq\label{eq:green-limit5a}
\frac{1}{|\Lambda_L|}  \left| {\rm Tr} \chi_{\partial {\Lambda}^\prime_{\ell,p}} \left( R_0^{\Lambda_L}  (z)
-   R_0^{\Lambda_{\ell,p}}  (z)  \right) \right| \leq C_0 \frac{\log L}{L^\alpha} .
\eeq
Bounds \eqref{eq:green-limit4a} and \eqref{eq:green-limit5a} show that \eqref{eq:green-limit4} and \eqref{eq:green-limit5} vanish as $L \rightarrow \infty$.

\subsection{Decomposition of the second term \eqref{eq:green-limit3}}\label{subsec:second-term-decomp1}

The operators in \eqref{eq:green-limit3} involving the interaction matrix $K^{-1}(z;\omega)$
are integral operators with singular, but square integrable, kernels. The singularities of the kernels are located at the points of $\Z^d$ in the respective cubes. Let $M_\epsilon := \cup_{j \in \tilde{\Lambda}_L } B(j; \epsilon)$, for $\epsilon > 0$ small to be chosen below. We write $M_\epsilon^c := \Lambda_L \backslash M_\epsilon$ for the complementary set. The substitution of the decomposition $1= \chi_{M_\epsilon} + (1 - \chi_{M_\epsilon})= \chi_{M_\epsilon} + \chi_{M_\epsilon^c}$ into each trace results in four terms. The two terms with the localization
$\chi_{M_\epsilon^c}$ involve the trace of operators with continuous kernels. These will be estimated in section \ref{subsec:second-term1} using the kernels of the operators. The other two terms involve singular kernels supported on $M_\epsilon$. These will be estimated here.

\begin{lemma}\label{lemma:second-term-singular1}
For the choice of $\epsilon = o(|\Lambda|^{-\frac{1}{s}})$ and $0 < s < 1$, we have
\beq\label{eq:green-second-sing1}
\lim_{L \rightarrow \infty} \frac{1}{|\Lambda_L|^s} \sum_{(k,m) \in \tilde{\Lambda}_L^2} \E \left\{ \left| {\rm Tr}  \Im \chi_{M_\epsilon} R_0^\Lambda  (z) P_k [K^{-1}_\Lambda (z, \omega)]_{km} P_m R_0^\Lambda (z) \right|^s \right\} = 0 ,
 \eeq
 and
 \beq\label{eq:green-second-sing2}
\lim_{L \rightarrow \infty} \frac{1}{|\Lambda_L|^s} \E \left\{ \left|  \sum_{p=1}^{N_L} \sum_{(k,m) \in \tilde{\Lambda}_{\ell,p}^2} {\rm Tr} \left[ \Im  \chi_{M_\epsilon} R_0^{\Lambda_{\ell,p}}  (z) P_k [K^{-1}_{\Lambda_{\ell,p}} (z, \omega)]_{km} P_m R_0^{\Lambda_{\ell,p}} (z) \right] \right|^s \right\} = 0 .
\eeq
\end{lemma}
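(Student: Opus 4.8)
The plan is to reduce each trace to a single spatial integral using the rank-one structure of the operators appearing in \eqref{eq:green-second-sing1}--\eqref{eq:green-second-sing2}, bound that integral via the local $L^2$-integrability of the free Green's function in $d=3$ together with its exponential decay \eqref{eq:green-exp-decay1}, and then pass to the $s$-th power and control the random factor with the localization bound \eqref{eq:K-exp-decay1}.

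For \eqref{eq:green-second-sing1}, the operator $\chi_{M_\epsilon} R_0^\Lambda(z) P_k [K^{-1}_\Lambda(z,\omega)]_{km} P_m R_0^\Lambda(z)$ has integral kernel $(x,x')\mapsto\chi_{M_\epsilon}(x)\,G_0^\Lambda(x,k;z)\,[K^{-1}_\Lambda(z,\omega)]_{km}\,G_0^\Lambda(m,x';z)$, that is, it is the rank-one operator $[K^{-1}_\Lambda(z,\omega)]_{km}\,|\chi_{M_\epsilon}G_0^\Lambda(\cdot,k;z)\rangle\langle \overline{G_0^\Lambda(m,\cdot;z)}|$. For $d=3$ the function $G_0^\Lambda(\cdot,y;z)$ lies in $L^2(\Lambda_L)$ — the $\|x-y\|^{-1}$ singularity being square integrable in three dimensions and the kernel decaying exponentially by \eqref{eq:green-exp-decay1} — so this operator is trace class and, since $|{\rm Tr}\,\Im A|=|\Im\,{\rm Tr}\,A|\le|{\rm Tr}\,A|$,
\[
\big|{\rm Tr}\,\Im\,\chi_{M_\epsilon} R_0^\Lambda(z) P_k [K^{-1}_\Lambda(z,\omega)]_{km} P_m R_0^\Lambda(z)\big| \le \big|[K^{-1}_\Lambda(z,\omega)]_{km}\big|\int_{M_\epsilon} |G_0^\Lambda(x,k;z)|\,|G_0^\Lambda(m,x;z)|\,d^3x.
\]
The key deterministic estimate is
\[
\int_{M_\epsilon} |G_0^\Lambda(x,k;z)|\,|G_0^\Lambda(m,x;z)|\,d^3x \le C\,\epsilon\,e^{-\tilde c(E_0)\|k-m\|},
\]
uniform in $k,m\in\tilde{\Lambda}_L$, in $L$, and in $z$ near $E_0$. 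To prove it one splits $M_\epsilon=\bigcup_{j\in\tilde{\Lambda}_L}B(j,\epsilon)$ into $B(k,\epsilon)$, $B(m,\epsilon)$, and the balls $B(j,\epsilon)$ with $j\ne k,m$. On $B(k,\epsilon)$ the factor $|G_0^\Lambda(m,x;z)|$ is bounded by $Ce^{-\tilde c(E_0)\|k-m\|}$ (for $k\ne m$, $\|x-m\|\ge\|k-m\|-\epsilon\ge\frac12$), while \eqref{eq:green-exp-decay1} gives $\int_{B(k,\epsilon)}\|x-k\|^{-1}\,d^3x=O(\epsilon^2)$, and in the worst case $k=m$ the integral $\int_{B(k,\epsilon)}\|x-k\|^{-2}\,d^3x=O(\epsilon)$ — this is the only place $d=3$ is used, the cases $d=1,2$ being less singular and handled with strictly better bounds (see section \ref{sec:dimensions12}). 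On the remaining balls both Green's functions are bounded, their product being $\le C\|j-k\|^{-1}\|j-m\|^{-1}e^{-c(E_0)(\|j-k\|+\|j-m\|)}$, and summing over $j$ gives $O(\epsilon^3 e^{-\tilde c(E_0)\|k-m\|})$; since $\epsilon<1$ all three contributions are dominated by $C\epsilon\,e^{-\tilde c(E_0)\|k-m\|}$.

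Raising to the power $s\in(0,1)$, taking expectations, and invoking \eqref{eq:K-exp-decay1} — valid with constants uniform in the volume and locally uniform in $z$, hence applicable at $z=E_0+\zeta/|\Lambda_L|$ for $L$ large since then $\Re z<\tilde{E}_0$ and $\tfrac{\pi}{2}<\arg z<\tfrac{3\pi}{2}$ — one gets $\E\{|[K^{-1}_\Lambda(z,\omega)]_{km}|^s\}\le C_s e^{-\gamma_{s,3}(z)\|k-m\|}$, so each summand of \eqref{eq:green-second-sing1} is at most $C_s(C\epsilon)^s e^{-s\tilde c(E_0)\|k-m\|}$. Using $\sum_{m\in\tilde{\Lambda}_L}e^{-s\tilde c(E_0)\|k-m\|}\le C$ and $|\tilde{\Lambda}_L|\sim|\Lambda_L|$,
\[
\frac{1}{|\Lambda_L|^s}\sum_{(k,m)\in\tilde{\Lambda}_L^2}\E\{|{\rm Tr}\,\Im[\cdots]|^s\}\le \frac{C'\epsilon^s}{|\Lambda_L|^s}\,|\Lambda_L|=C'\epsilon^s|\Lambda_L|^{1-s},
\]
which for $\epsilon=o(|\Lambda_L|^{-1/s})$ equals $o(|\Lambda_L|^{-s})\to0$, giving \eqref{eq:green-second-sing1}. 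For \eqref{eq:green-second-sing2} one first applies the subadditivity $|\sum_i a_i|^s\le\sum_i|a_i|^s$ to bring the $s$-th power and the sums over $p$ and $(k,m)$ outside the expectation; the same kernel estimate and \eqref{eq:K-exp-decay1} applied on each subcube $\Lambda_{\ell,p}$ (volume-uniform constants), together with $\sum_{p=1}^{N_L}|\tilde{\Lambda}_{\ell,p}|=|\tilde{\Lambda}_L|\sim|\Lambda_L|$, give the identical bound $C'\epsilon^s|\Lambda_L|^{1-s}\to0$.

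The main obstacle is the deterministic integral estimate — specifically, the bookkeeping that pins the genuinely singular contribution ($k=m$, integral over $B(k,\epsilon)$) at $O(\epsilon)$ in $d=3$, together with the power count confirming that this $\epsilon^s$ still defeats the volume factor $|\Lambda_L|^{1-s}$ under the stated smallness of $\epsilon$. The remaining steps are a routine assembly of the exponential-decay and localization inputs already established in the excerpt.
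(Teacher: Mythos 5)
Your proof is correct and follows essentially the same strategy as the paper: use the rank-one structure of $\chi_{M_\epsilon} R_0^\Lambda P_k [K^{-1}]_{km} P_m R_0^\Lambda$, exploit the local square integrability of the $d=3$ Green's function singularity on $M_\epsilon$ to extract a factor $\epsilon$, and then use the localization bound \eqref{eq:K-exp-decay1} to sum over $(k,m)$.

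The one genuine (minor) difference is in how the deterministic Green's function estimate is packaged. The paper bounds the trace via the trace norm, factors out $\|\chi_{M_\epsilon} R_0^{\Lambda_L}(z) P_{\tilde\Lambda_L}\|_2^{2s}$ uniformly in $(k,m)$, and uses $\|\chi_{M_\epsilon} R_0^{\Lambda_L}(z) P_{\tilde\Lambda_L}\|_2^2 \le \epsilon |\Lambda_L|$, arriving at the (slightly wasteful) bound $\epsilon^s|\Lambda_L|$. You instead bound the trace directly by $\int_{M_\epsilon}|G_0^\Lambda(x,k;z)||G_0^\Lambda(m,x;z)|\,d^3x$ and keep the $(k,m)$ dependence, proving $\le C\epsilon\,e^{-\tilde c(E_0)\|k-m\|}$; summing over $(k,m)$ then gives the sharper $C\epsilon^s|\Lambda_L|^{1-s}$, and similarly for \eqref{eq:green-second-sing2}. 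Both bounds vanish under $\epsilon = o(|\Lambda_L|^{-1/s})$, so the conclusion is the same; your version buys a cleaner power count and makes the role of the exponential decay explicit, while the paper's version avoids the case analysis over the balls $B(j,\epsilon)$. The remaining ingredients — subadditivity of $|\cdot|^s$ to distribute the sums in the second bound, the verification that $z=E_0+\zeta/|\Lambda_L|$ sits in the domain of Proposition \ref{prop:loc1} for large $L$, and the comparison $|\tilde\Lambda_L|\sim|\Lambda_L|$ — are all used in the same way in both proofs.
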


\begin{proof}
We estimate each trace using the Hilbert-Schmidt norm and the localization estimate \eqref{eq:K-exp-decay1}. For example, for the term
\eqref{eq:green-second-sing1}, the trace is bounded above by
\bea\label{eq:green-second-sing-est1}
\lefteqn{ \frac{1}{|\Lambda_L|^s} \sum_{(k,m)\in \tilde{\Lambda}_L^2} \E \left\{  \left| {\rm Tr}  \Im \chi_{M_\epsilon} R_0^{\Lambda_L}  (z) P_k [K^{-1}_{\Lambda_L} (z, \omega)]_{km} P_m R_0^{\Lambda_L} (z) \right|^s \right\} } \nonumber \\
 & \leq  & \frac{1}{|\Lambda_L|^s} \sum_{(k,m) \in \tilde{\Lambda}_L^2} \E \left\{  \| \chi_{M_\epsilon} R_0^{\Lambda_L}  (z) P_k [K^{-1}_{\Lambda_L} (z, \omega)]_{km} P_m R_0^{\Lambda_L} (z) \chi_{M_\epsilon} \|_1^s \right\}  \nonumber \\
 & \leq & \frac{1}{|\Lambda_L|^s} \| \chi_{M_\epsilon} R_0^{\Lambda_L}  (z) P_{{\tilde{\Lambda}}_L} \|_2^{2s} ~ \sum_{(k,m) \in \tilde{\Lambda}_L^2}
 \E \{ |[K^{-1}_{\Lambda_L} (z, \omega)]_{km}|^s  \} . \nonumber \\
 & &
 \eea
A simple calculation with the integral kernel of the operator $\chi_{M_\epsilon} R_0^{\Lambda_L}  (z) P_{\tilde{\Lambda}}$ as a map from $L^2(\Lambda_L) \times \ell^2(\tilde{\Lambda}_L)$ shows that
\beq\label{eq:green-second-sing-est2}
\| \chi_{M_\epsilon} R_0^{\Lambda_L}  (z) \|_2^2 \leq \epsilon |\Lambda_L|.
\eeq
Consequently, the term on the left in \eqref{eq:green-second-sing1} is bounded above by
\beq\label{eq:green-second-sing-est12}
\epsilon^s  ~\left( \sum_{(k,m) \in \tilde{\Lambda}_L^2} \E \{ |[K^{-1}_{\Lambda_L} (z, \omega)]_{km}|^s \} \right) \leq  \epsilon^s |\Lambda|^{} ,
\eeq
and this vanishes as \rm $\epsilon = o( |\Lambda|^{-1/s})$.
As for the second trace term \eqref{eq:green-second-sing2}, we find in a similar manner
\bea\label{eq:green-second-sing-est3}
\frac{1}{|\Lambda_L|^s}  \sum_{p=1}^{N_L} \sum_{(k,m) \in \tilde{\Lambda}_{\ell,p}^2} \E \left| {\rm Tr} \left[ \Im  \chi_{M_\epsilon} R_0^{\Lambda_{\ell,p}}  (z) P_k [K^{-1}_{\Lambda_{\ell,p}} (z, \omega)]_{km} P_m R_0^{\Lambda_{\ell,p}} (z) \right] \right|^s
 & \leq &  \left[ \left( \frac{1}{L^d} \right) \left( \frac{L}{\ell} \right)^d \right]^s \epsilon^s |\Lambda_\ell|^s \nonumber \\
 & \leq & \epsilon^s .
 \eea
Taking $\epsilon = o(|\Lambda|^{-\frac{1}{s}})$ in \eqref{eq:green-second-sing-est12} and \eqref{eq:green-second-sing-est3} yields the result.

\end{proof}


\subsection{Estimation of the terms involving continuous kernels}\label{subsec:second-term1}

The decomposition result of section \ref{subsec:second-term-decomp1} allows us to compute the trace
of the two terms in \eqref{eq:green-limit3} with $\chi_{M_\epsilon^c}$ inserted into the trace by integrating the continuous kernel over the diagonal. We use localization bounds in order to estimate these integrals.

\begin{lemma}\label{lemma:second-term-nonsingular1}
For any  $0 < s < {1}$ and  $\epsilon = o(|\Lambda_L|^{\frac{1}{s}}$,
and $E_0 \in (-\infty, \tilde{E_0}) \cap \Sigma^{\rm CL}$,
 we have
\bea\label{eq:green-second-sing3}
\lefteqn{ \lim_{L \rightarrow \infty} \frac{1}{|\Lambda_L|^s} \sum_{(k,m) \in \tilde{\Lambda}_L \times \tilde{\Lambda}_L} \left\{  \E \left| {\rm Tr} [  \Im \chi_{M_\epsilon^c} R_0^\Lambda  (z) P_k [K^{-1}_\Lambda (z, \omega)]_{km} P_m R_0^\Lambda (z) ] \right. \right. } \nonumber \\
 &  - &
  \left.  \left. \sum_{p=1}^{N_L} {\rm Tr} \left[ \Im  \chi_{M_\epsilon^c} R_0^{\Lambda_p}  (z) P_k [K^{-1}_{\Lambda_p} (z, \omega)]_{km} P_m R_0^{\Lambda_p} (z) \right] \right|^s \right\} \nonumber \\
  & =  & 0 ,
\eea
where $z = E_0 + \frac{\zeta}{|\Lambda_L|}$ and $\zeta = \sigma + i \tau$ with $\tau > 0$.
\end{lemma}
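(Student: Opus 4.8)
The plan is to estimate the difference in \eqref{eq:green-second-sing3} term by term in $(k,m)$, following the scheme used for $\tilde B_L^c$ and the error terms $\mathcal E_j(L)$ in the proof of Proposition \ref{prop:theta-approx1}, but now comparing the operators attached to $\Lambda_L$ with those attached to the subcubes $\Lambda_{\ell,p}$ rather than to the infinite volume. First, since on $M_\epsilon^c$ (with $\epsilon$ chosen as in Lemma \ref{lemma:second-term-singular1}) the kernels are continuous, each trace is the integral of its kernel along the diagonal, e.g.
\[
{\rm Tr}\big[\Im\,\chi_{M_\epsilon^c} R_0^\Lambda(z) P_k [K^{-1}_\Lambda(z,\omega)]_{km} P_m R_0^\Lambda(z)\big]=\int_{\Lambda_L\cap M_\epsilon^c}\Im\big[G_0^\Lambda(x,k;z)\,[K_\Lambda(z,\omega)^{-1}]_{km}\,G_0^\Lambda(m,x;z)\big]\, d^3x ,
\]
and similarly on each $\Lambda_{\ell,p}$. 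Because the $\Lambda_{\ell,p}$ tile $\Lambda_L$ up to a null set and $[K^{-1}_{\Lambda_{\ell,p}}(z,\omega)]_{km}=0$ unless $k,m\in\tilde\Lambda_{\ell,p}$, for fixed $(k,m)$ the subtracted sum over $p$ either reduces to the single term attached to the cube $\Lambda_{\ell,p_0}$ that contains both $k$ and $m$, or it vanishes (when $k$ and $m$ lie in different subcubes).

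In the first case one writes the $\Lambda_L$-integral as its restriction to $\Lambda_L\setminus\Lambda_{\ell,p_0}$ plus its restriction to $\Lambda_{\ell,p_0}$, and on $\Lambda_{\ell,p_0}$ one telescopes exactly as in \eqref{eq:theta-approx10}--\eqref{eq:theta-error2}: isolating the difference $G_0^\Lambda-G_0^{\Lambda_{\ell,p_0}}$ in the left Green's function, then the difference $K_\Lambda^{-1}-K_{\Lambda_{\ell,p_0}}^{-1}$, then $G_0^\Lambda-G_0^{\Lambda_{\ell,p_0}}$ in the right Green's function. The geometric resolvent identity \eqref{eq:gre1} represents $G_0^\Lambda(x,k;z)-G_0^{\Lambda_{\ell,p_0}}(x,k;z)$ as a boundary integral over $\partial\Lambda_{\ell,p_0}$, which with the exponential decay \eqref{eq:green-exp-decay1} (available since $\Re z<0$) gives a bound $\le C\, e^{-\frac12 c(E_0)[\,{\rm dist}(x,\partial\Lambda_{\ell,p_0})+{\rm dist}(k,\partial\Lambda_{\ell,p_0})\,]}$; likewise the second resolvent equation for the matrices $K$ together with Green's theorem, as in \eqref{eq:green33}--\eqref{eq:M-decay2}, gives $[K_\Lambda^{-1}-K_{\Lambda_{\ell,p_0}}^{-1}]_{km}=\sum_{n,r}[K_\Lambda^{-1}]_{kn}\,M^{(p_0)}_{nr}(z)\,[K_{\Lambda_{\ell,p_0}}^{-1}]_{rm}$ with $M^{(p_0)}_{nr}(z)=[K_{\Lambda_{\ell,p_0}}-K_\Lambda]_{nr}$ nonrandom and exponentially small in ${\rm dist}(n,\partial\Lambda_{\ell,p_0})+{\rm dist}(r,\partial\Lambda_{\ell,p_0})$. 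In the second case only the $\Lambda_L$-term is present, and one uses the local integrability bound $\int_{\Lambda_L\cap M_\epsilon^c}|G_0^\Lambda(x,k;z)G_0^\Lambda(m,x;z)|\,d^3x\le e^{-\tilde c(E_0)\|k-m\|}$ of \eqref{eq:greens-fnc-decay21}, together with the observation that pairs $(k,m)$ in distinct subcubes with $\|k-m\|$ bounded necessarily lie near the internal boundary $\bigcup_p\partial\Lambda_{\ell,p}$.

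One then raises each of the finitely many resulting terms to the power $s$, uses $|\sum a_j|^s\le\sum|a_j|^s$ for $0<s<1$ to bring the sums over $(k,m)$ and over $(n,r)$ outside, takes the expectation, and applies: (i) the localization estimate \eqref{eq:K-exp-decay1} (valid since $E_0\in\Sigma^{\rm CL}$) to the factors $[K_\Lambda^{-1}]$, $[K_{\Lambda_{\ell,p_0}}^{-1}]$, splitting the two-resolvent term by Cauchy--Schwarz as in \eqref{eq:K-exp-decay21} and using the H\"older bound \eqref{eq:expectation1} with $\|K_\Lambda(z,\omega)^{-1}\|\lesssim|\Lambda_L|$ for the one-resolvent terms; and (ii) the Green's-function decay \eqref{eq:green-exp-decay1} to carry out the $x$-integrals. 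Each term is then dominated by a factor decaying exponentially in $\|k-m\|$ and in the distances of $k$ and $m$ to $\bigcup_p\partial\Lambda_{\ell,p}$, so the surviving sum ranges effectively over the $O(|\Lambda_L|/\ell)$ lattice pairs within bounded distance of the internal boundaries. After dividing by $|\Lambda_L|^s$ this yields a bound of order $|\Lambda_L|^{1-s}\ell^{-1}(\log L)^{C}=O\big(L^{\,3(1-s)-\alpha}(\log L)^{C}\big)$ for the one-resolvent terms, the two-resolvent terms being no larger and the $M_\epsilon$ contribution already discarded by Lemma \ref{lemma:second-term-singular1}. Hence the limit in \eqref{eq:green-second-sing3} is zero once $\alpha$ — equivalently $s$ — is taken close enough to $1$; since Theorem \ref{thm:main1} is deduced from \eqref{eq:resest4} for a single $s\in(0,1)$, this is all that is needed.

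The delicate point, and essentially the only real obstacle, is this last exponent count: because $\Im z=\tau/|\Lambda_L|$, an individual matrix element $[K_\Lambda(z,\omega)^{-1}]_{km}$ is only of size $O(|\Lambda_L|)$, so no decay can be extracted from it, and the entire gain must come from the exponential localization near $\bigcup_p\partial\Lambda_{\ell,p}$ of the \emph{differences} $G_0^\Lambda-G_0^{\Lambda_{\ell,p}}$ and $K_\Lambda^{-1}-K_{\Lambda_{\ell,p}}^{-1}$; this localization has to be weighed against both the factor $|\Lambda_L|^{1-s}$ and the logarithmic singularities of the Green's functions at the lattice points retained in $M_\epsilon^c$. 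Everything else is a routine transcription of the estimates in the proof of Proposition \ref{prop:theta-approx1}.
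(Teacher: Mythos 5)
Your proposal is correct and follows essentially the same route as the paper: trace along the diagonal on $M_\epsilon^c$, split pairs $(k,m)$ by whether they share a subcube, telescope the same-cube terms via the geometric resolvent identity for the Green's functions and the second-resolvent identity for the $K$-matrices, bound the resulting boundary-localized terms $M^{(p_0)}_{nr}(z)$ and Green's-function differences by exponential decay off $\bigcup_p\partial\Lambda_{\ell,p}$, apply the fractional-moment localization estimate after raising to the power $s$, and observe that the surviving boundary pairs contribute $O(L^{d(1-s)-\alpha}(\log L)^C)$, exactly the count that the paper obtains in its Cases~3b and~3c and which forces $\alpha>d(1-s)$, i.e.\ $s$ near~$1$. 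The only slight imprecision is invoking the H\"older bound \eqref{eq:expectation1} for the one-$K^{-1}$ terms, where $\E\{|[K^{-1}]_{km}|^s\}$ is already controlled directly by \eqref{eq:K-exp-decay1} without paying a factor of $|\Lambda_L|$; that device is needed only for the two-factor term coming from the second-resolvent identity, where the paper uses $\E\{X^s\}\le\|X^{s/2}\|_\infty\E\{X^{s/2}\}$ before Cauchy--Schwarz because $2s$ may exceed~$1$.
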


\begin{proof}
An upper bound for the left side of \eqref{eq:green-second-sing3} is
\bea\label{eq:basic-eqn1}
\lefteqn{ \frac{1}{|\Lambda_L|^s} \sum_{(k,m) \in \tilde{\Lambda}_L^2}
  ~\int_{\Lambda_L \cap M_\epsilon^c} ~\E \left\{ | G_L (x,k;z) [K_L^{-1}(z,\omega)]_{km} G_L (m,x;z) \right. } \nonumber \\
   & -&   \sum_{p=1}^{N_L} G_{\ell, p} (x,k;z) [(K_{\ell, p}^{-1}(z,\omega)]_{km} G_{\ell,p} (m,x;z)|^s  \} ~d^3x.
 \eea
 Taking advantage of the decomposition $\Lambda_L = {\rm Int} \left( \overline{\cup_{p=1}^{N_L} \Lambda_{\ell, p}} \right)$, we may also  write \eqref{eq:basic-eqn1} as
 \bea\label{eq:basic-eqn2}
\lefteqn{ \frac{1}{|\Lambda_L|^s} \sum_{p=1}^{N_L} \sum_{(k,m) \in \tilde{\Lambda}_L^2}
  ~\int_{\Lambda_{\ell, p} \cap M_\epsilon^c} ~\E \left\{ | G_L (x,k;z) [K_L^{-1}(z,\omega)]_{km} G_L (m,x;z) \right. } \nonumber \\
   & -&   G_{\ell, p} (x,k;z)  [K_{\ell, p}^{-1}(z,\omega)]_{km} G_{\ell,p} (m,x;z)|^s  \} ~d^3x.
 \eea
We distinguish various cases depending on the relative position of the indices $(k,m) \in \tilde{\Lambda}_L \times \tilde{\Lambda}_L$.

\vspace{.1in}
\noindent
 {\bf Case 1:} $\| k-m \| > \sqrt{d} \ell = \sqrt{3} \ell$. In this case, the pair $(k,m)$ cannot belong to the same subcube $\Lambda_{\ell,p}$ so the matrix element $[K_{\ell,p}^{-1}(z,\omega)]_{km} = 0$. We define the set $\mathcal{A}_L := \{(k,m) ~|~ (k,m) \in \tilde{\Lambda}_L \times \tilde{\Lambda}_L, \|m-k\| > \sqrt{d} \ell \}$. As a consequence, using the localization estimate \eqref{eq:K-exp-decay1} and the explicit form of the Green's functions, the quantity \eqref{eq:basic-eqn1} becomes
\bea\label{eq:case1}
\lefteqn{ \frac{1}{|\Lambda_L|^s} \sum_{(k,m) \in {\mathcal{A}_L}} ~\int_{\Lambda_L \cap M_\epsilon^c} ~\E \left\{ | G_L (x,k;z) \left( [K_L^{-1}(z,\omega)]_{km} \right) G_{L} (m,x;z)|^s ~d^3x \right\} }  \nonumber \\
 & = & \frac{1}{|\Lambda_L|^s}  \sum_{(k,m) \in {\mathcal{A}_L}} ~\E \left\{ |[K_L^{-1}(z,\omega)_{km}|^s \right\} ~\int_{\Lambda_L \cap M_\epsilon^c} ~ | G_L (x,k;z) G_{L} (m,x;z)|^s ~d^3x \nonumber \\
  & \leq &   \frac{1}{|\Lambda_L|^s} \sum_{(k,m) \in {\mathcal{A}_L}} e^{- s \gamma_{s,3} (z) \| k-m\|} \epsilon^{-2s} ~\int_{\Lambda_L \cap M_\epsilon^c} ~ e^{- c(E_0)(\|x - k\| + \|x-m\|)} ~d^3x \nonumber \\
 & \leq &   |\Lambda_L|^{s+1}  e^{- \gamma_{s,3} (z)  \sqrt{3} \ell}.
  \eea
Since $\ell = L^\alpha$, for $0 < \alpha < 1$, this term vanishes as $L \rightarrow \infty$.

\vspace{.1in}
 \noindent
 {\bf Case 2:} For $C > 0$ sufficiently large, $C \log L < \| k-m \| \leq \sqrt{d} \ell = \sqrt{3} \ell$. We must distinguish two cases: (2a) The pair $(k,m)$ belong to different subcubes; (2b) The pair $(k,m)$ belong to the same subcube, say $\Lambda_{\ell,q}$. The case (2a) is the same as case (1) since in this case $[K_{\ell,p}^{-1}(z,\omega)]_{km} = 0$, for any $p=1, \ldots, N_L$. As for the case (2b), we define the set $\mathcal{B}_q := \{(k,m) ~|~ (k,m) \in \Lambda_{\ell,q}, \|m-k\| > C \log L \}$. Hence, the contribution to \eqref{eq:basic-eqn2} from all the regions $\mathcal{B}_q$ is
\bea\label{eq:basic-eqn4}
\lefteqn{ \frac{1}{|\Lambda_L|^s} \sum_{q = 1}^{N_L} ~\sum_{(k,m) \in \mathcal{B}_q}
  ~\int_{\Lambda_{\ell,q} \cap M_\epsilon^c} ~\E \left\{ | G_L (x,k;z) [K_L^{-1}(z,\omega)]_{km} G_L (m,x;z) \right. } \nonumber \\
  & - &     \left. G_{\ell, q} (x,k;z)  (K_{\ell, q}^{-1})_{km} G_{\ell,q} (m,x;z)|^s  \right\} ~d^3x \nonumber \\
   &=&  \frac{1}{|\Lambda_L|^s} \sum_{q = 1}^{N_L} ~\sum_{(k,m) \in \mathcal{B}_q}
  ~\int_{\Lambda_{\ell,q} \cap M_\epsilon^c} ~\E \left\{ | G_L (x,k;z)[ K_L^{-1}(z,\omega) - K_{\ell,q}^{-1} (z,\omega)]_{km} G_{\ell,q} (m,x;z)|^s d^3 x \right\} \nonumber \\
    & & + \frac{1}{|\Lambda_L|^s} \sum_{q = 1}^{N_L} ~\sum_{(k,m) \in \mathcal{B}_q} [ \mathcal{E}_{q,1} (k,m; L) + \mathcal{E}_{q,2}(k,m;L) + \mathcal{E}_{q,3}(k,m;L)] ,
\eea
where we write
\bea\label{eq:errors1}
\mathcal{E}_{q,1}(k,m;L) & := &   \int_{\Lambda_{\ell,q} \cap M_\epsilon^c} ~\E \left\{ | G_L (x,k;z)
[ K_L^{-1}(z,\omega) ]_{km}[ G_L (m,x;z) - G_{\ell,q} (m,x;z)]|^s d^3 x \right\} \nonumber \\
\mathcal{E}_{q,2}(k,m;L)  & := &  \int_{\Lambda_{\ell,q} \cap M_\epsilon^c} ~\E \left\{ | [G_{\ell,q}(x,k;z) - G_L (x,k;z)][  K_{\ell,q}^{-1}(z,\omega) ]_{km} G_{\ell,q} (m,x;z)|^s d^3 x \right\}  \nonumber \\
\mathcal{E}_{q,3}(k,m; L) &   := &   \int_{\Lambda_L \backslash \Lambda_{\ell,q} \cap M_\epsilon^c} ~\E \left\{ | G_L (x,k;z)[ K_L^{-1} (z,\omega)]_{km} G_{L} (m,x;z)|^s d^3 x \right\}
\eea
We estimate each of these terms as follows.
In order to estimate the main term in \eqref{eq:basic-eqn4}, we apply the second resolvent equation to the operators $K_L$ and $K_{\ell,q}$ in the following form
\bea\label{eq:sec-resolv-K}
\E \{ | [K_L^{-1}(z,\omega) - K_{\ell, q}^{-1}(z,\omega)]_{km}|^s \} & \leq & \sum_{(n,r) \in \tilde{\Lambda}_\ell^2}
\E \{ | [K_L^{-1}(z,\omega)]_{kn} [ K_L(z,\omega) - K_{\ell,q}(z,\omega) ]_{nr} [K_{\ell,q}^{-1}(z,\omega)]_{rm}|^s \} \nonumber \\
  & \leq &  \sum_{(n,r) \in \tilde{\Lambda}_\ell^2} |[ t_L(z) - t_{\ell,q}(z) ]_{nr}|^s \E \{
  | K_L^{-1}(z,\omega)]_{kn}|^s |[K_{\ell,q}^{-1}(z,\omega)]_{rm}|^s \}  \nonumber \\
  & &
\eea
where deterministic matrix $t_L(z)$ is defined in \eqref{eq:matrixKE1}. We use the bound for a nonnegative random variable $X$
$$
\E \{ X^s \} = \E \{ X^{\frac{s}{2}} X^{\frac{s}{2}} \} \leq \|X^{\frac{s}{2}} \|_\infty  \E \{ X^{\frac{s}{2}} \},
$$
with $X = | (K_L^{-1})_{kn}|~ |(K_{\ell,q}^{-1})_{rm}|$ so that $\|X^{\frac{s}{2}} \|_\infty \sim |\Lambda_L|^s$.
Because of the structure of the matrix $M_{\ell,q}(z) := t_L(z) - t_{\ell,q}(z)$ and the fact that $z < 0$, the matrix elements decay exponentially away from $\partial \Lambda_{\ell,q}$.
Inserting the bound \eqref{eq:sec-resolv-K} into the integral in \eqref{eq:basic-eqn4}, we obtain the bound
 \beq\label{eq:case2b}
\sum_{q=1}^{N_L} \sum_{(k,m) \in \mathcal{B}_q}
    \sum_{(n,r) \in \tilde{\Lambda}_{\ell,q}^2} |[ M_{\ell,q}(z)]_{nr}|^s \E \{ |[K_L^{-1}(z,\omega)]_{kn}|^{s}\}^{\frac{1}{2}} \E \{ |[K_{\ell,q}^{-1}(z,\omega)]_{rm}|^{s} \}^{\frac{1}{2}} I_{km}(L,\ell,q).
\eeq
where we write
\beq\label{eq:int-green1}
I_{km}(L,\ell,q) := \int_{\Lambda_{\ell,q} \cap M_\epsilon^c} ~  |G_L (x,k;z) G_{\ell,q} (m,x;z)|^s  ~d^3x .
\eeq
The localization estimate \eqref{eq:K-exp-decay1} provides the bound
\beq\label{eq:K-exp-decay2}
\E \{ |[K_L^{-1}(z,\omega)]_{kn}|^{s}\}^{\frac{1}{2}} \E \{ |[K_{\ell,q}^{-1}(z,\omega)]_{rm}|^{s} \}^{\frac{1}{2}}
 \leq C_s^2 e^{-s \gamma_{s,3}(z) ( \|k-n \| + \|r-m\| )}.
 \eeq
This yields the bound
\beq\label{eq:sum-bound1}
\sum_{(n,r)\in \tilde{\Lambda}_{\ell,q}^2} C_s^2 e^{-s \gamma_{s,3}(z) ( \|k-n \| + \|r-m\| )} \leq C_1 .
\eeq
By standard estimates on the local Green's functions \eqref{eq:green-exp-decay1}, we have
\beq\label{eq:greens-fnc-decay1}
\int_{\Lambda_{\ell,q} \cap M_\epsilon^c} ~  |G_L (x,k;z) G_{\ell,q} (m,x;z)|^s  ~d^dx
 \leq \int_{\Lambda_{\ell,q} \cap M_\epsilon^c} ~  \frac{e^{- c(E_0) \| k-x\|}}{\| k-x\|^s}  \frac{e^{- c (E_0) s \| m-x\|}}{\| m-x\|^s} ~d^3x.
\eeq
Because the kernel is locally integrable we find that
\beq\label{eq:greens-fnc-decay2}
\int_{\Lambda_{\ell,q} \cap M_\epsilon^c} ~  |G_L (x,k;z) G_{\ell,q} (m,x;z)|^s  ~d^3x
\leq
e^{- \tilde{c}(E_0) s \| k-m\|} ,
\eeq
so that
\beq\label{eq:sum-bound2}
\sum_{(k,m) \in \mathcal{B}_q} e^{- \tilde{c}(E_0) s \| k-m\|} \leq | \mathcal{B}_q| e^{- \tilde{c}(E_0) s C \log L}.
\eeq
since $| \mathcal{B}_q| \sim \mathcal{O}(\ell^d)$, these estimates show that for $C > 0$ large, the main term in \eqref{eq:basic-eqn4} vanishes as $L \rightarrow \infty$.

\vspace{.1in}
\noindent
 {\bf Case 3:} $0 < \| k-m\| < C \log L$. To deal with these points, we define the security zone $\mathcal{Z}_\ell = \{ m \in \Z^d \cap \Lambda_L ~|~ {\mbox dist}(x \cup_p \partial \Lambda_{\ell,p}) < 2C \log L$, for $C > 0$ sufficiently large. The volume of the security zone $\mathcal{Z}_\ell$ is
\beq\label{eq:zone1}
| \mathcal{Z}_\ell| = \left( \frac{L}{\ell} \right)^d \ell^{d-1} 2C \log L = 2C L^{d(1 - \alpha)} \log L.
  \eeq
We distinguish three cases. Case (3a) consists of pairs $(k,m) \in \Lambda_{\ell,q}^2$ so that at least one element of the pair is a distance at least $2C \log L$ from $\partial \Lambda_{q, \ell}$ and the other point belongs to
$\Lambda_{\ell,q} \backslash ( \Lambda_{\ell,q} \cap \mathcal{Z}_\ell$.
Case (3b) consists of those pairs $\mathcal{Z}_{\ell,q}$ belonging to $\mathcal{Z}_\ell \cap \Lambda_{\ell,q}$ for one $q$,
   and case (3c) consists of those pairs in the security zone belonging to two separate cubes $\Lambda_{\ell,p}$ and $\Lambda_{\ell,q}$, for $p \neq q$, and denoted by $\mathcal{Z}_{\ell,p,q}$.

\vspace{.1in}
\noindent
{\bf Case 3a.} $(m,k) \in \Lambda_{\ell, q} \backslash (\Lambda_{\ell, q} \cap \mathcal{Z}_\ell)$ and $0 < \| k-m \| \leq C \log L$. We denote the set of such pairs $\mathcal{C}_q$. Since the points $(k,m)$ are restricted to one cube, the estimates are similar to those for case (2b).
As in part 5 of the proof of Proposition \ref{prop:theta-approx1}, we introduce two sets of indices $(r,n)$: $\tilde{X}_1 := \{ (n,r) \in \tilde{\Lambda}_{\ell,q}^2 \}$ and $\tilde{X}_2 := \{ (n,r) \in (\tilde{\Lambda}_L \backslash \tilde{\Lambda}_{\ell,q}) \} \times \tilde{\Lambda}_{\ell,q}$.  The main term to estimate is
  \beq\label{eq:case3a-1}
\frac{1}{|\Lambda_L|^{\frac{s}{2}}} \sum_{q=1}^{N_L} \sum_{(k,m) \in \mathcal{C}_q}
    \sum_{(n,r) \in \tilde{X}_1 \cup \tilde{X}_2} |[ M_{\ell,q}(z)]_{nr}|^s \E \{ [K_L^{-1}(z,\omega)]_{kn}|^{s}\}^{\frac{1}{2}} \E \{ |[K_{\ell,q}^{-1}(z,\omega)]_{rm}|^{s} \}^{\frac{1}{2}} I_{km}(L,\ell,q).
\eeq
where we write
\beq\label{eq:int-green21}
I_{km}(L,\ell,q) := \int_{\Lambda_{\ell,q} \cap M_\epsilon^c} ~  |G_L (x,k;z) G_{\ell,q} (m,x;z)|^s  ~d^3x .
\eeq
To control the sum over $(n,r)$, we note that the difference $M_{\ell,q}(z) := t_L(z) - t_{\ell,q}(z)$
is localized on $\partial \Lambda_{\ell,q}$ and decays exponentially fast away from the boundary since $E_0 < 0$.
We divide the boundary zone of $\partial \Lambda_{\ell,q}$ into two regions: $\partial \Lambda_{\ell,q}^{(1)}$
consisting of all lattice points for which ${\rm dist} (s, \partial \Lambda_{\ell,q}) < C \log L$ and $\partial \Lambda_{\ell,q}^{(2)}$ consisting of all lattice points $t$ so that ${\rm dist} (t, \partial \Lambda_{\ell,q}) > C \log L$. We then have that for $(n,r)$, such that either point is in zone 1, the matrix elements $| [M_{\ell,q}(z)]_{nr}|$ are uniformly bounded and $\| n-k\|$ or $\|r-m\|$ is uniformly bounded below by $C \log L$.
This results in exponential decay from localization:
\beq\label{eq:K-exp-decay22}
\E \{ |[K_L^{-1}(z,\omega)]_{kn}|^{s}\}^{\frac{1}{2}} \E \{ |[K_{\ell,q}^{-1}(z,\omega)]_{rm}|^{s} \}^{\frac{1}{2}}
 \leq C_s^2 e^{-s \gamma_{s,3} (z) C \log L}.
 \eeq
For zone 2, we have the bound
\beq\label{eq:bdry-green-decay-3a}
\| [ M_{\ell,q}(z)]_{nr} \| \leq | \partial \Lambda_{\ell,q}^{(1)} |^2 e^{- c(E_0) C \log L},
\eeq
coming from the exponential decay of $ M_{\ell,q}(z)$ away from $\partial \Lambda_{\ell,q}$.
By standard estimates on the local Green's functions we have
\beq\label{eq:greens-fnc-decay112}
I_{km}(L,\ell,q)  \leq \int_{\Lambda_{\ell,q} \cap M_\epsilon^c} ~  \frac{e^{- c (E_0) s \| k-x\|}}{\| k-x\|^s}  \frac{e^{- c (E_0) s \| m-x\|}}{\| m-x\|^s} ~d^3x \leq e^{- \tilde{c}(E_0) s \| k-m\|} .
\eeq
Combining \eqref{eq:K-exp-decay22}, \eqref{eq:bdry-green-decay-3a}, and \eqref{eq:greens-fnc-decay112}, we find that the $\tilde{X}_1$ contribution to \eqref{eq:case3a-1-1} is bounded above by
\beq\label{eq:case3a-1-1}
L^{-\frac{3s}{2}}(L^{3(1 - \alpha)})|\partial \Lambda_{\ell,q}|^2 ~e^{- c(E_0) C \log L}.
\eeq
that vanishes as $L \rightarrow \infty$. As for the $\tilde{X}_2$ contribution to \eqref{eq:case3a-1-1}, we note that $\| k - n \| > C \log L$. Consequently, the sum over $n \in (\tilde{\Lambda}_L \backslash \tilde{\Lambda}_{\ell,q}) \}$ provides an upper bound of the form $e^{- c(E_0) C \log L}$ and the entire contribution vanishes as $L \rightarrow \infty$.

\vspace{.1in}
\noindent
 {\bf Case 3b:} $(m,k) \in \mathcal{Z}_{\ell}$, so that for some $q = 1, \ldots, N_L$, we have $k,m \in  \Lambda_{\ell,q}$, for some $q$, with ${\rm dist}(m, \partial \Lambda_{\ell,q}) < 2 C \log L$ and ${\rm dist} (k, \partial \Lambda_{\ell,q}) < 2 C \log L$, and $0 < \| k-m \| \leq C \log L$. We denote such pairs in $\Lambda_{\ell,q}$ by $\mathcal{Z}_{{\ell,q}}$.
Since $\mathcal{Z}$ is close to $\partial \Lambda_{\ell,q}$, we need to estimate two integrals
\beq\label{eq:case3b-1}
\frac{1}{| \Lambda_L|^{{s}}} \sum_{q=1}^{N_L} \sum_{(k,m) \in \mathcal{Z}_{\ell,q; s}}
   \E \{ |[K_L^{-1}(z,\omega)]_{km}|^{s} \}
 \left| \int_{\Lambda_{\ell,q}} ~ G_L (x,k;z) G_{\ell,q} (m,x;z) ~d^dx \right|^s ~d^3 x ,
\eeq
and
\beq\label{eq:case3b-2}
\frac{1}{| \Lambda_L|^{{s}}} \sum_{q=1}^{N_L} \sum_{(k,m) \in \mathcal{Z}_{\ell,q; s}}
   \E \{ |[K_{\ell,q}^{-1}(z,\omega)]_{km}|^{s} \}
 \left| \int_{\Lambda_{\ell,q}} ~ G_L (x,k;z) G_{\ell,q} (m,x;z) ~d^dx \right|^s ~d^3 x .
\eeq
Due to the exponential decay of the Green's functions at negative energies, the integrals in \eqref{eq:case3b-1} and \eqref{eq:case3b-2}
are uniformly bounded in $L$.
We use the localization estimate to bound the expectations in \eqref{eq:case3b-1} and \eqref{eq:case3b-2} so the sum over these expectations is bounded by $|\mathcal{Z}_{\ell,q}|$. As a result, we obtain a bound of the following form for both \eqref{eq:case3b-1} and \eqref{eq:case3b-2}:
\beq\label{eq:case3b-3}
L^{- {ds}} L^{d(1-\alpha)} \sum_{(k,m) \in \mathcal{Z}_{\ell,q; s}}  e^{- s\alpha_z \|k-m\|}
\leq L^{- {ds}} L^{d(\alpha - 1)} L^{\alpha (d-1)} \log L .
\eeq
We require $d(1-s) - \alpha$ be negative. For $d=3$, provided $3(1-s) - \alpha < 0$ and requiring $\frac{2}{3} < s < 1$, there exists $0 < \alpha < 1$ so this condition is obtained. Consequently, for these choices, these terms vanish as $L \rightarrow \infty$.

\vspace{.1in}
\noindent
 {\bf Case 3c:} $(m,k) \in \mathcal{Z}_{\ell}$, so $0 < \| k-m \| \leq 2 C \log L$ and $(k,m) \in  \mathcal{Z}_{\ell,p,q} \subset \Lambda_{\ell,q} \times \Lambda_{\ell,p}$, for some $q \neq p$.
Consequently, the matrix elements $[K^{-1}_{\ell,p}(z,\omega)]_{km} = 0$ and only the term containing $\Lambda_L$ in \eqref{eq:basic-eqn1} remains:
\bea\label{eq:case3c}
\lefteqn{ \frac{1}{|\Lambda_L|^{{s}}} \sum_{ (k,m) \in \mathcal{Z}_{\ell,d}} \E \{ | {\rm Tr} \Im [ R_0^\Lambda (z)P_k K_L^{-1} (z, \omega) P_m R_0^\Lambda (z)] |^s \} } \nonumber \\
 &\leq& \frac{1}{L^{ds}} \sum_{ (k,m) \in \mathcal{Z}_{\ell,d}} \E \left\{ |[(K_L^{-1}(z,\omega)]_{km} |^s \right\} \left(  ~\int_{\Lambda_L} ~ |G_0^L (x,k;z) G_0^L (m,x;z)|^s ~d^dx \right).
\eea
The integral over $\Lambda_L$ is uniformly bounded in $L$ due to the exponential decay of the Green's functions. 
\beq\label{eq:zone-different1}
L^{-ds} | \mathcal{Z}| = C L^{-sd} L^{(1-\alpha)d} L^{\alpha (d-1)} \log L,
\eeq
so making the same choices as in case (3b), this term vanishes as $L \rightarrow \infty$.

We finally estimate the three error terms in \eqref{eq:errors1}. The estimates for $\mathcal{E}_{q,1}$ and $\mathcal{E}_{q,2}$ are similar. After taking the expectation of the $s^{th}$-power each term, we use the localization estimate for the $\E \{ |[K^{-1}(z,\omega)]_{km}|^s \}$-terms and the fact that the Green's function decay exponentially away from the boundary. As for $\mathcal{E}_{q,3}$, we again take the $s^{th}$-power and use the localization estimate to control
$\E \{ |[K_L^{-1}(z,\omega)]_{km}|^s \}$. The Green's functions decay exponentially away from $\Lambda_{\ell, q}$ where $m$ and $k$ are located. The volume of a security zone surrounding $\partial \Lambda_{\ell, q}$ is used to bound the integral over this small region.
\end{proof}

\section{Conclusion of the proof of Theorem \ref{thm:main1} on local eigenvalue statistics}\label{sec:sp-stat1}
\setcounter{equation}{0}

  To summarize, we know that the local point processes $\xi^{\Lambda_L}_\omega$ and
  $\zeta_\omega^{\Lambda_L}$ have the same limit point. The limit of the $uana$ associated with $\zeta_\omega^{\Lambda_L}$
  is analysed using the Wegner and Minami estimates. In section \ref{sec:les-indep-arrays1}, is was shown that the Wegner and Minami estimates imply the conditions necessary to insure that the limiting point process is a Poisson point process with intensity measure $n(E_0) ds$, for $E_0 \in (-\infty,\tilde{E_0}) \cap \Sigma^{\rm CL}$ and provided $n(E_0) > 0$. The existence of the density of states for $H_\omega$
was shown in \cite[Corollary 7.3]{hkk1}.

%
%

%


\section{Appendix: Estimates on rank one perturbations}\label{sec:trace-est1}
\setcounter{equation}{0}

We present here a possibly new approach to rank one perturbations. We prove in generality that the eigenvalue counting function ${\rm Tr} E_A(I)$ of a self-adjoint operator A with locally discrete spectrum in an interval $I$  changes by at most one under by a self-adjoint rank one perturbation $B$.

\begin{prop}\label{prop:rank-one1}
Let $A$ and $B$ be self-adjoint operators with $B$ rank one. Let $I := [a,b] \subset \R$ be an interval with $\sigma(A) \cap [a,b]$ consisting of at most finitely many real eigenvalues. Then,
\begin{enumerate}
\item The traces of the spectral projectors differ by at most one:
\beq\label{eq:rank-one-pert1}
| {\rm Tr} [ E_A(I) ] - {\rm Tr} [ E_{A+B}(I) ] | \leq 1 .
\eeq

\item If ${\rm Tr} [ E_A(I) ] \geq 1$, we have

\beq\label{eq:rank-one-pert2}
0 \leq {\rm Tr} [ E_A(I) ] - 1  \leq  {\rm Tr } [ E_{A+B}(I) ]  .
\eeq

\end{enumerate}
\end{prop}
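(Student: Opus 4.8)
The natural approach is to compare spectral subspaces via a dimension count. Write $B = \lambda\,\langle\varphi,\cdot\rangle\varphi$ for some unit vector $\varphi$ and $\lambda\in\R\setminus\{0\}$ (the case $B=0$ being trivial). Set $P := E_A(I)$ and $Q := E_{A+B}(I)$, with ranks $m := \operatorname{Tr} P$ and $n := \operatorname{Tr} Q$ (both finite by hypothesis on $A$; note $A+B$ also has only finitely many eigenvalues in $I$ since $B$ is a finite-rank, hence relatively compact, perturbation and $I$ is bounded). The key observation is that on the subspace $\{\varphi\}^\perp$ the two operators agree: $A$ and $A+B$ have the same restriction to the ``rank-one complement.'' I would make this precise at the level of resolvents using the rank-one (Krein) resolvent formula: for $z\notin\sigma(A)\cup\sigma(A+B)$,
\[
(A+B-z)^{-1} = (A-z)^{-1} - \frac{\lambda}{1+\lambda\,\langle\varphi,(A-z)^{-1}\varphi\rangle}\,\langle\varphi,(A-z)^{-1}\cdot\rangle\,(A-z)^{-1}\varphi,
\]
so that $(A+B-z)^{-1} - (A-z)^{-1}$ is rank one. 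Integrating this identity over a contour surrounding $I$ (using that $\partial I$ can be chosen in the resolvent set of both operators, or passing to closed subintervals and a limiting argument) shows $Q - P$ is a self-adjoint operator of rank at most $2$ — more precisely, a rank-one piece from each endpoint, but in fact one can do better.

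\textbf{Key step: an interlacing/min-max argument.} Rather than chase the contour integral, I would argue directly with the Courant--Fischer characterization. Consider $W := \operatorname{Ran} P \cap \{\varphi\}^\perp$; then $\dim W \geq m-1$. For any $\psi\in W$ we have $B\psi = 0$, hence $(A+B)\psi = A\psi$; thus on $W$, which is $A$-invariant only after a further intersection, one needs care. The clean way: for $\psi$ in the spectral subspace $\operatorname{Ran} E_A(I)$, $\langle\psi,(A-c)\psi\rangle$ and $\langle\psi,(b-A)\psi\rangle$ are both $\geq 0$ where $I=[a,b]$ and $c=a$ — write $I=[a,b]$ and use that $\operatorname{Ran} P$ is a subspace on which $(A-a)(A-b)\le 0$ in the quadratic-form sense? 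This is getting delicate because $\operatorname{Ran}P\cap\{\varphi\}^\perp$ need not be $(A+B)$-invariant. I would instead invoke the following standard fact, which I will prove by min-max: if $T$ and $T'$ are self-adjoint with $T' - T$ of rank $r$, then for any bounded interval $J$ with endpoints not eigenvalues of either operator, $|\operatorname{Tr} E_{T}(J) - \operatorname{Tr} E_{T'}(J)| \le r$. Applying this with $r=1$ to $T=A$, $T'=A+B$ after first reducing to subintervals $[a+\delta,b-\delta]$ whose endpoints avoid the (finite) eigenvalue sets, then letting $\delta\downarrow 0$, gives part (1). The rank-$r$ min-max fact itself follows from: rank-one perturbation shifts each eigenvalue of the operator by at most ``one slot'' in the sense that the $k$-th eigenvalue of $T'$ lies between the $(k-1)$-th and $(k+1)$-th eigenvalues of $T$ (Weyl interlacing for rank-one self-adjoint perturbations), which one gets from Courant--Fischer by noting that for a subspace $V$ of dimension $k$, $V\cap\{\varphi\}^\perp$ has dimension $\ge k-1$ and $T=T'$ there.

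\textbf{Part (2).} Once part (1) is in hand, part (2) is immediate: if $\operatorname{Tr} E_A(I) = m \ge 1$, then $\operatorname{Tr} E_{A+B}(I) \ge m - 1 \ge 0$, which is exactly $0 \le \operatorname{Tr} E_A(I) - 1 \le \operatorname{Tr} E_{A+B}(I)$. So the entire content is part (1), and within that the content is the interlacing inequality for rank-one self-adjoint perturbations together with the technical reduction to intervals whose endpoints lie in the resolvent sets.

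\textbf{Main obstacle.} The honest difficulty is bookkeeping at the endpoints: eigenvalues of $A$ or of $A+B$ sitting exactly at $a$ or $b$, where ``$\operatorname{Tr} E_A(I)$'' with $I$ closed versus open can jump. The cleanest fix is the $\delta$-shrinking argument combined with right/left-continuity of the counting function, but one must check that the bound survives the limit (it does, since an integer-valued quantity bounded by $1$ uniformly in $\delta$ stays bounded by $1$). A secondary point is making rigorous that the Krein formula genuinely yields a \emph{rank-one} (not rank-two) difference of spectral projectors when $I$ is an interval rather than a point — this is precisely why the min-max route is preferable to the contour-integral route: min-max never needs the difference of projectors to have small rank, only the difference of the operators. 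I expect the min-max argument, carried out carefully with the dimension count $\dim(V \cap \{\varphi\}^\perp) \ge \dim V - 1$, to be the core of the proof and to occupy most of the writing.
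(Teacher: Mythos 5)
Your Courant--Fischer/Weyl-interlacing route is a genuinely different argument from the paper's. The paper first restricts to the cyclic subspace generated by $\varphi$ (where $A$ has simple spectrum) and then works with the Herglotz function $F_A(E) := \langle\varphi,(A-E)^{-1}\varphi\rangle$ together with the rank-one resolvent formula $F_{A+B} = F_A/(1+\lambda F_A)$: strict monotonicity of $F_A$ on each gap between consecutive poles forces the eigenvalues of $A$ and $A+B$ in $I$ to strictly interlace, and both parts of the Proposition drop out of the interlacing picture at once. Your variational route is a legitimate alternative and is more elementary in that it never touches the resolvent; but as written it contains a real gap.

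The gap is in the step ``rank-$1$ perturbation $\Rightarrow$ interval count changes by at most $1$.'' Writing $N_T(\mu)$ for the number of eigenvalues of $T$ at most $\mu$ (in the relevant discrete regime) and assuming the endpoints $a,b$ of $J$ are eigenvalues of neither operator, one has
\[
\operatorname{Tr} E_T(J)-\operatorname{Tr} E_{T'}(J)=\bigl[N_T(b)-N_{T'}(b)\bigr]-\bigl[N_T(a)-N_{T'}(a)\bigr].
\]
You invoke only the \emph{two-sided} interlacing $\lambda_{k-1}(T)\le\lambda_k(T')\le\lambda_{k+1}(T)$, which gives $|N_T(\mu)-N_{T'}(\mu)|\le 1$ at every level $\mu$ but leaves the two brackets free to take opposite signs, yielding only a bound of $2$. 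What actually closes the argument is the \emph{one-sided} interlacing, available because a rank-one self-adjoint perturbation $B=\lambda\Pi_\varphi$ is sign-definite: for $\lambda>0$ one has $\lambda_k(T)\le\lambda_k(T+B)\le\lambda_{k+1}(T)$ for every $k$, hence $N_T(\mu)-N_{T+B}(\mu)\in\{0,1\}$ uniformly in $\mu$; the two brackets then have a common sign and the difference is at most $1$. Your own dimension count $\dim(V\cap\{\varphi\}^\perp)\ge\dim V-1$ does in fact produce this one-sided estimate once you run the min-max carefully on each side, so the fix is one of exposition rather than of finding a new idea --- but the chain of inequalities as written does not close.

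A secondary point worth flagging: the Courant--Fischer characterization you invoke indexes eigenvalues below (or above) the essential spectrum, whereas the hypothesis only puts finitely many eigenvalues in $[a,b]$ and says nothing about the spectrum on either side of $I$. To even write $\lambda_k(T)$ you need some reduction (to a bounded, strictly monotone function of the operators near $I$, say). The paper's Herglotz-function route sidesteps this: $F_A$ is meromorphic on a neighborhood of $(a,b)$ with simple poles exactly at the eigenvalues there, regardless of the structure of the spectrum outside $I$.
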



By a unitary transformation, if necessary, we may assume that $B = \Pi_{\varphi}$, the projection onto a normalized vector $\varphi$, with $\| \varphi \| =1$.
By a standard reduction, it suffices to consider the case for which $\varphi$ is $A$-cyclic.
For let $\mathcal{H}_\varphi$ be the cyclic subspace for $A$ and $\varphi$. We then have $\mathcal{H} = \mathcal{H}_\varphi \oplus \mathcal{H}_\varphi^\perp$. The subspace $\mathcal{H}_\varphi$ is also $B$-invariant and $B$ annihilates $\mathcal{H}_\varphi^\perp$ so $\sigma (A | \mathcal{H}_\varphi^\perp ) = \sigma ( (A +B) |\mathcal{H}_\varphi^\perp)$. consequently, we have
 $$
 {\rm Tr} [ E_A(I) ] - {\rm Tr} [ E_{A+B}(I) ] = {\rm Tr}_{\mathcal{H}_\varphi^A} [ E_{A|\mathcal{H}_\varphi^A}(I) ] - {\rm Tr} [ E_{(A+B)|{\mathcal{H}_\varphi^A}}(I) ].
 $$
So we now assume that $\varphi$ is cyclic for $A$ and therefore for $A+B$ (see, for example, \cite[Lemma 3.1.2]{demuthkrishna1}).
We define two measures
\beq\label{eq:defn-measures1}
\mu_A^\varphi (\cdot) := \langle \varphi, E_A(\cdot) \varphi \rangle, ~~~ {\mbox and} ~~~
\mu_{A+B}^\varphi (\cdot) := \langle \varphi, E_{A+B}(\cdot) \varphi \rangle.
\eeq

%
%
%

\begin{lemma}\label{lem:measure-complete1}
For any $x \in \sigma (A) \cap [a,b]$, we have
\beq\label{eq:complete1}
\mu_A^\varphi (\{ x\}) \neq 0 ,
\eeq
and similarly, for any $y \in \sigma (A+B) \cap [a,b]$, we have
\beq\label{eq:complete2}
\mu_{A+B}^\varphi (\{ y\}) \neq 0 .
\eeq
\end{lemma}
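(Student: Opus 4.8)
The plan is to exploit the defining property of a cyclic vector: its spectral measure charges every atom of the spectral resolution of the operator, and those atoms are precisely the eigenvalues.

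Since $E_A(\{x\})$ is an orthogonal projection, one has $\mu_A^\varphi(\{x\}) = \langle \varphi, E_A(\{x\})\varphi\rangle = \|E_A(\{x\})\varphi\|^2$, so \eqref{eq:complete1} is equivalent to $E_A(\{x\})\varphi \neq 0$. By the hypothesis of Proposition \ref{prop:rank-one1}, every point of $\sigma(A)\cap[a,b]$ is an eigenvalue, hence $E_A(\{x\})$ is the \emph{nonzero} projection onto $\ker(A-x)$. I would then invoke cyclicity as follows: $E_A(\{x\})$ commutes with every bounded Borel function $g(A)$, so if $E_A(\{x\})\varphi=0$ then $E_A(\{x\})\,g(A)\varphi = g(A)\,E_A(\{x\})\varphi = 0$ for all such $g$; since $\{g(A)\varphi\}$ is dense in $\mathcal{H}$ and $E_A(\{x\})$ is bounded, this forces $E_A(\{x\})=0$, contradicting that $x$ is an eigenvalue. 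Therefore $E_A(\{x\})\varphi\neq 0$ and $\mu_A^\varphi(\{x\})>0$.

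For the second assertion \eqref{eq:complete2} the argument is identical, with $A$ replaced by $A+B$ and $\varphi$ regarded as a cyclic vector for $A+B$ (established in the reduction preceding this lemma). The only additional input needed is that each $y\in\sigma(A+B)\cap[a,b]$ is again an eigenvalue. This I would obtain from the stability of the essential spectrum under rank-one (hence finite-rank, hence compact) perturbations: Weyl's theorem gives $\sigma_{\mathrm{ess}}(A+B)=\sigma_{\mathrm{ess}}(A)$, and since $\sigma(A)\cap[a,b]$ consists of finitely many isolated eigenvalues of finite multiplicity we have $\sigma_{\mathrm{ess}}(A)\cap(a,b)=\emptyset$, whence $\sigma_{\mathrm{ess}}(A+B)\cap(a,b)=\emptyset$; thus every point of $\sigma(A+B)\cap(a,b)$ is an isolated eigenvalue (an accumulation point in $[a,b]$, or a non-isolated endpoint, would otherwise lie in $\sigma_{\mathrm{ess}}$). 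Applying the first paragraph to $A+B$ then yields $\mu_{A+B}^\varphi(\{y\})>0$.

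There is no genuinely hard step here; the statement is a textbook consequence of cyclicity. The only point worth flagging is the use of Weyl's theorem to guarantee that $\sigma(A+B)\cap[a,b]$ also consists solely of eigenvalues; alternatively one could build this into the hypotheses of Proposition \ref{prop:rank-one1}, in which case the two assertions of the lemma become literally the same statement applied to $A$ and to $A+B$.
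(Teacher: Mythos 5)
Your proof is correct and rests on the same cyclicity argument as the paper's: the paper phrases it as $\langle\psi_j,\varphi\rangle=0$ for all eigenfunctions $\psi_j$ at $x$ forcing $\psi_j=0$ (since $\langle\psi_j,A^n\varphi\rangle=x^n\langle\psi_j,\varphi\rangle=0$), while you phrase it as $E_A(\{x\})\varphi=0$ forcing $E_A(\{x\})=0$ by commutation and density of $\{g(A)\varphi\}$, which is the same fact in projection language. The one thing you add that the paper leaves implicit is the Weyl's-theorem justification that $\sigma(A+B)\cap[a,b]$ also consists only of eigenvalues; that is a worthwhile detail to make explicit.
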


\begin{proof}
For \eqref{eq:complete1}, suppose that $\mu_A^\varphi (\{ x\}) = 0$.
If $\{ \psi_j \}$ are the orthonormal eigenfunctions satisfying $A \psi_j = x \psi_j$,
we have
$$
\mu_A^\varphi (\{ x\}) = \sum_j | \langle \psi_j, \varphi \rangle |^2 =0 ,
$$
so $\langle \psi_j, \varphi \rangle = 0$ for all $j$. But the $A$-cyclicity of $\varphi$ means that $\psi_j = 0$ for all $j$, a contradiction.
A similar proof applies to $A+B$ since $\varphi$ is also $A+B$-cyclic.
\end{proof}

Now suppose $\{ x_1, \ldots, x_k \}$ are the eigenvalues of $A$ in $[a,b]$ and $\{y_1, \ldots, y_\ell \}$ are the eigenvalues of $A+B$ in $[a,b]$. From Lemma \ref{lem:measure-complete1}, we have
\bea\label{eq:complete3}
\mu_A^\varphi (\{ x_i \}) \neq 0 &  \forall & i=1, \ldots, k.  \nonumber \\
\mu_{A+B}^\varphi (\{ y_j \}) \neq 0 &  \forall & j=1, \ldots, \ell.
\eea
It follows that the functions
\beq\label{eq:unperturb-map1}
F_A(z) := \langle \varphi, R_A(z) \varphi \rangle ,
\eeq
and
\beq\label{eq:unperturb-map2}
F_{A+B}(z) := \langle \varphi, R_{A+B}(z) \varphi \rangle  .
\eeq
have poles in $[a,b]$ precisely at the eigenvalues $\{ x_i \}$ and at $\{ y_j \}$.

\begin{lemma}\label{lam:bijective-maps1}
The map
\beq\label{eq:unperturb-map-bij1}
F_A(E) := \langle \varphi, R_A(E) \varphi \rangle
\eeq
restricted to each interval $F_A : (x_i, x_{i+1}) \rightarrow \R$ is bijective for all $i=1, \ldots, k-1$.
Similarly, the map
\beq\label{eq:unperturb-map-bij2}
F_{A+B}(E) := \langle \varphi, R_{A+B}(E) \varphi \rangle
\eeq
restricted to each interval $F_{A+B} : (y_j, y_{j+1}) \rightarrow \R$ is bijective for all $j=1, \ldots, \ell-1$.
\end{lemma}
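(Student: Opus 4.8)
The plan is to recognise $F_A$ as the Borel--Stieltjes transform of the spectral measure $\mu_A^\varphi$ and to use the standard monotonicity and boundary blow-up of such transforms across a gap bounded by two atoms.

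First I would write, for $E$ in the resolvent set of $A$,
\[
F_A(E) = \langle \varphi, R_A(E)\varphi\rangle = \int_{\R} \frac{d\mu_A^\varphi(t)}{t - E}.
\]
Since $(x_i,x_{i+1})\subset[a,b]$ and the $x_j$ are all the eigenvalues of $A$ in $[a,b]$, the open interval $(x_i,x_{i+1})$ meets neither $\sigma(A)$ nor $\operatorname{supp}\mu_A^\varphi$; hence for each $E\in(x_i,x_{i+1})$ one has $\operatorname{dist}(E,\operatorname{supp}\mu_A^\varphi)>0$, the integral converges absolutely, and $F_A$ is finite, real and real-analytic on $(x_i,x_{i+1})$ with
\[
F_A'(E) = \int_{\R} \frac{d\mu_A^\varphi(t)}{(t - E)^2} > 0,
\]
the strict positivity coming from $\mu_A^\varphi(\{x_i\})\neq 0$ by Lemma \ref{lem:measure-complete1}. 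Thus $F_A$ is continuous and strictly increasing on $(x_i,x_{i+1})$.

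Next I would analyse the endpoints. Split off the atom at the left endpoint:
\[
F_A(E) = \frac{\mu_A^\varphi(\{x_i\})}{x_i - E} + \int_{\{t<x_i\}} \frac{d\mu_A^\varphi(t)}{t - E} + \int_{\{t\ge x_{i+1}\}} \frac{d\mu_A^\varphi(t)}{t - E}.
\]
As $E\downarrow x_i$ within $(x_i,x_{i+1})$, the first term tends to $-\infty$ because $\mu_A^\varphi(\{x_i\})>0$ and $x_i-E\uparrow 0^-$; the second term is $\le 0$ (the integrand is negative there), so it cannot offset the divergence; and the third term is bounded, since $t-E\ge x_{i+1}-E\ge (x_{i+1}-x_i)/2>0$ for $E$ close to $x_i$ while $\mu_A^\varphi(\R)=\|\varphi\|^2=1<\infty$. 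Hence $F_A(E)\to-\infty$ as $E\downarrow x_i$. The symmetric splitting at $x_{i+1}$ gives $F_A(E)\to+\infty$ as $E\uparrow x_{i+1}$. Combined with continuity, strict monotonicity and the intermediate value theorem, $F_A\colon(x_i,x_{i+1})\to\R$ is a bijection. The identical argument applies verbatim to $F_{A+B}$: $\varphi$ is also $(A+B)$-cyclic and $\mu_{A+B}^\varphi(\{y_j\})\neq 0$ for all $j$ by Lemma \ref{lem:measure-complete1}, so on each gap $(y_j,y_{j+1})$ the function $F_{A+B}$ is strictly increasing with limits $\mp\infty$ at the endpoints, hence bijective onto $\R$.

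The step that needs the most care is the boundary blow-up: one must verify that stripping off the endpoint atom leaves a remainder that does not cancel the divergence. This is handled by separating the remaining mass into the part lying on the far side of the gap — uniformly bounded because it stays a fixed distance from $E$ — and the part on the near side, which carries a definite sign and therefore can only reinforce the divergence. Everything else is the textbook behaviour of Herglotz functions.
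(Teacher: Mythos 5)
Your argument is correct and follows essentially the same route as the paper: strict monotonicity from $F_A'(E) = \|R_A(E)\varphi\|^2 > 0$, plus blow-up to $\mp\infty$ at the endpoint poles, yielding a bijection onto $\R$. You fill in more detail than the paper does on why the endpoint divergence cannot be cancelled (splitting off the atom and bounding the far-side and near-side remainders separately), which is a welcome clarification of the paper's terse statement that the limits follow ``since $x_i$ and $x_{i+1}$ are poles,'' but it is the same proof.
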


\begin{proof}
The maps $F_X$, for $X=A$ or $X=A+B$, are real-valued and differentiable on each interval and the derivative satisfies
$$
F_X^\prime (z) = \langle \varphi, R_X(z)^2 \varphi \rangle > 0 .
$$
Hence, the function $F_A$, respectively, $F^{A+B}$, is strictly monotone increasing on $(x_i, x_{i+1})$, respectively,
on $(y_i, y_{i+1})$. Furthermore, since $x_i$ and $x_{i+1}$, respectively, $y_i$ and $y_{i+1}$
are poles of $F_A$, respectively, of $F^{A+B}$, we have
\beq\label{eq:ev-limits1}
\lim_{x \rightarrow x_i^+} F_A(x) = - \infty, ~~~{\mbox and } ~~~ \lim_{x \rightarrow x_{i+1}^-} F_A(x) =  \infty,
\eeq
and similarly for $F^{A+B}$ and $y_i$ and $y_{i+1}$.
This establishes that $F_X$ is a bijection from each interval to $\R$.
\end{proof}

\begin{lemma}\label{lem:intertwine1}
The poles of $F_A$ and $F_{A+B}$ in $[a,b]$ are intertwined. In each interval $(x_i, x_{i+1})$, there is exactly one pole $y_j$ of $F_{A+B}$ and in each interval $(y_j, y_{j+1})$, there is exactly one $x_i$.
\end{lemma}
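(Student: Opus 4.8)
The plan is to use the resolvent identity connecting $A$ and $A+B$ to obtain an explicit rational relation between the Herglotz functions $F_A$ and $F_{A+B}$, and then read off the interlacing from the monotonicity already established in Lemma \ref{lam:bijective-maps1}. Since $B = \Pi_\varphi = \langle \varphi, \cdot\rangle \varphi$ is rank one, the second resolvent identity gives, for $z$ in the resolvent set of both operators,
\beq\label{eq:plan-resolv1}
R_{A+B}(z) = R_A(z) - R_A(z)\, \Pi_\varphi\, R_{A+B}(z),
\eeq
and pairing with $\varphi$ on both sides yields $F_{A+B}(z) = F_A(z) - F_A(z) F_{A+B}(z)$, hence
\beq\label{eq:plan-resolv2}
F_{A+B}(z) = \frac{F_A(z)}{1 + F_A(z)}.
\eeq
(One should note the coupling constant here is $+1$ because $B = \Pi_\varphi$; a general rank one $B = \lambda \Pi_\varphi$ would give $F_A/(1+\lambda F_A)$, which is handled identically.) This is the key identity, and the rest is essentially elementary real analysis of \eqref{eq:plan-resolv2}.

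First I would record the structure of $F_A$ on $[a,b]$: by Lemma \ref{lem:measure-complete1} and the discussion preceding Lemma \ref{lam:bijective-maps1}, $F_A$ is a real meromorphic function on a neighborhood of $[a,b]$ whose only poles are the simple poles $x_1 < \cdots < x_k$ (simplicity follows from $\mu_A^\varphi(\{x_i\})\neq 0$ together with $F_A' > 0$), and on each open gap $(x_i, x_{i+1})$ it is a strictly increasing bijection onto $\R$. Now examine \eqref{eq:plan-resolv2}. The poles of $F_{A+B}$ inside $(x_i, x_{i+1})$ are exactly the points where $F_A(z) = -1$; since $F_A$ increases bijectively from $-\infty$ to $+\infty$ on that gap, the equation $F_A(z) = -1$ has exactly one solution there. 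That solution is a genuine pole of $F_{A+B}$ (not a removable point) because $F_A$ is strictly monotone there, so $1 + F_A$ has a simple zero and $F_A \neq 0$ at that point. Conversely, at each $x_i$ itself, $F_A$ has a pole, so $F_{A+B}(z) = F_A(z)/(1+F_A(z)) \to 1$ as $z \to x_i$; thus $F_{A+B}$ is regular at $x_i$, i.e. no $y_j$ coincides with any $x_i$. This already shows: each gap $(x_i, x_{i+1})$ contains exactly one $y_j$, and no $y_j$ equals an $x_i$.

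To finish I would run the symmetric argument with the roles of $A$ and $A+B$ reversed: from \eqref{eq:plan-resolv2} one solves $F_A = F_{A+B}/(1 - F_{A+B})$ (valid since $A = (A+B) + (-\Pi_\varphi)$, another rank one perturbation), and the identical reasoning gives that each gap $(y_j, y_{j+1})$ contains exactly one $x_i$ and that no $x_i$ equals a $y_j$. Combining the two directions yields the full interlacing statement: between consecutive eigenvalues of one operator lies exactly one eigenvalue of the other, and the two eigenvalue sets are disjoint inside $[a,b]$. I expect the only real subtlety — the "main obstacle" — to be bookkeeping at the endpoints $a$ and $b$ of the interval, where $x_1$ or $x_k$ (resp. $y_1$, $y_\ell$) may sit: one must be careful that the counting of $y_j$'s in $(-\infty, x_1)$ and $(x_k, \infty)$ intersected with $[a,b]$ is controlled, which is what ultimately produces the bound $|{\rm Tr}\,E_A(I) - {\rm Tr}\,E_{A+B}(I)| \le 1$ rather than a statement about gaps alone; this can be handled by the same monotonicity analysis of $F_A$ on the two unbounded gaps adjacent to $[a,b]$, noting that $F_A$ need not be surjective onto $\R$ there and so $F_A = -1$ may or may not have a solution in $I$.
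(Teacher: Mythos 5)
Your proof is correct and follows essentially the same route as the paper: both rest on the rank-one perturbation identity $F_{A+B} = F_A/(1+F_A)$ combined with the strict monotonicity and bijectivity of $F_A$ on each gap $(x_i,x_{i+1})$ from Lemma~\ref{lam:bijective-maps1}. Your packaging is a slight tightening — you identify the poles of $F_{A+B}$ in a gap directly as the solutions of $F_A=-1$ (exactly one, by bijectivity) and obtain the reverse interlacing by the symmetric argument with $A=(A+B)-\Pi_\varphi$, whereas the paper derives uniqueness from the boundary limits $F_{A+B}\to 1$ at $x_i^\pm$ together with the range of $F_A$ on the two subintervals, and treats the reverse direction by direct eigenvalue counting.
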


\begin{proof}
\noindent
1. The lemma follows from the strict monotonicity of each function $F_A$ and $F_{A+B}$ in the intervals $(x_i, x_{i+1})$
and $(y_j, y_{j+1})$, respectively, together with the rank one perturbation formula:
\beq\label{eq:rank-one-formula1}
F_{A+B}(x) = \frac{1}{1+ \frac{1}{F_A(x)}} .
\eeq
Suppose $E \in (x_i, x_{i+1})$. From \eqref{eq:ev-limits1}, we find
\beq\label{eq:ev-limits2}
\lim_{E \rightarrow x_i^+} F_{A+B}(E) = 1, ~~~{\mbox and } ~~~ \lim_{E \rightarrow x_{i+1}^-} F_{A+B}(E) = 1.
\eeq
By the monotonicity of $F_{A+B}$, there must be an eigenvalue of $A+B$ in $(x_i, x_{i+1})$, say $y_1$. Since
$F_A(E)$ satisfies $-1 < F_A(E) < + \infty$ for $E \in (y_1, x_2)$, the perturbation formula \eqref{eq:rank-one-formula1} shows that there cannot be another eigenvalue of $A+B$ in this interval. a similar argument applies to the interval $(x_1, y_1)$. hence, each interval $(x_i, x_{i+1})$ contains precisely one eigenvlaue of $A+B$.

\noindent
2. Let us suppose that there are $k$ eigenvalues of $A$ in $[a,b]$:
\beq\label{eq:Aev1}
a < x_1 < x_2 < \cdots < x_k < b .
\eeq
From part 1, there are $k-1$ eigenvalues $y_j \in (x_i, x_{i+1})$ of $A+B$, for $i=1, \ldots, k$.
There may be an eigenvalue of $A+B$ in $[a, x_1)$ or $(x_k, b]$. In this case, the number of eigenvalues $\ell$ of $A+B$ in $[a,b]$ is $\ell \leq k-1 + 2 = k+1$, so $|\ell - k| \leq 1$. Or, there may be no eigenvalues of $A+B$ in
these two intervals in which case we also have $|\ell - k| \leq 1$.

\end{proof}


\section{Appendix: Details for dimensions $d=1$ and $d=2$}\label{sec:dimensions12}
\setcounter{equation}{0}

We provided the formulae and other details for dimensions $d=1$ and $d=2$. This information is taken from \cite{hkk1}. 
The effective energy $e_d(z)$ for $d=1,2$ is given by
\bea\label{eq:effective-energy2}
e_1(z) &=& -\frac{i}{2 \sqrt{z}} \nonumber \\
e_2(z) &=& \frac{\log \sqrt{-z}}{2 \pi} \nonumber \\
\eea
The square root function is defined with the branch cut along the positive real axis.
The effective coupling constants $\alpha_{d,k}$ are defined as $\alpha_{1,k} = - \alpha_k$ and
equal to $\alpha_k$ for $d=2$.

The free Green's functions $G_0$ on $\R^d$, for $d=1,2$, corresponding for use in \eqref{eq:green1},
are given as follows. For $d=1$, we have
\beq\label{eq:greend1}
G_0 (x-y;z) = \frac{i}{2 \sqrt{z}} e^{i \sqrt{z} | x-y |}, ~~~x,y \in \R   ,
\eeq
and for $d = 2$:
\beq\label{eq:greend2}
G_0 (x-y;z) = \frac{i}{4} H_0^{(1)}( \sqrt{z} \| x-y \|),  ~~~x,y \in \R^2 .
\eeq


%

\end{document}